\documentclass[a4paper,12pt]{article}
\usepackage[utf8]{inputenc}
\usepackage{fancybox, calc}
\usepackage{authblk}
\usepackage{amssymb}
\usepackage{amsmath}
\usepackage{amsthm}
\usepackage{hyperref}
\usepackage{graphicx,wrapfig}
\usepackage{ucs}
\usepackage{cancel}
\usepackage{amsfonts}
\usepackage{amssymb}
\usepackage{makeidx}
\usepackage{diagbox}
\usepackage{tabularx}
\usepackage{float}
\usepackage[usenames, dvipsnames]{color}
\usepackage{url}
\usepackage{cite}
\usepackage{cancel}
\usepackage{transparent}
\usepackage{enumerate}
\usepackage{eso-pic}
\usepackage{subcaption}
\usepackage[normalem]{ulem}

\textheight 24.5cm %
\textwidth 17cm 
\oddsidemargin -.5cm %
\evensidemargin -.5cm %
\topmargin -1cm %
\parindent 1pc

\def\Rset{\mathbb{R}}
\def\pdf{f}

\newcommand{\down}[1]{\mathfrak {D}_{#1}}
\newcommand{\adown}{{\pdf}^\downarrow_{\alpha}}
\newcommand{\up}[1]{\mathfrak {U}_{#1}}
\newcommand{\aup}{{\pdf}^\uparrow_{\alpha}}

\newcommand{\pdfr}[0]{h}
\newcommand{\Appr}[2]{f^{({\mathfrak A,#1})}_{#2}}
\newcommand{\K}[3]{K_{#3}[#1||#2]}

\newcommand{\sign}[0]{\text{sign}}
\newtheorem{theorem}{Theorem}[section]
\newtheorem{corollary}{Corollary}[section]

\newtheorem{lemma}{Lemma}[section]
\newtheorem{definition}{Definition}[section]
\newtheorem{proposition}{Proposition}[section]

\newtheorem{remark}{Remark}[section]
\numberwithin{equation}{section}

\restylefloat{table}

\title{A new group of transformations related to the Kullback-Leibler and Rényi divergences and universal classes of monotone measures of statistical complexity}
\author[1]{Razvan Gabriel Iagar}
\affil[1]{Departamento de Matemática Aplicada, Ciencia e Ingeniería de los Materiales y Tecnología Electrónica, Universidad Rey Juan Carlos,
		28933 Móstoles (Madrid), Spain}
\author[1,2]{David Puertas-Centeno}
\affil[2]{Data, Complex Networks and Cybersecurity Research Institute, Universidad Rey Juan Carlos, 28028 (Madrid), Spain}
\author[1]{Elio V. Toranzo}
\date{\today}

\pdfinfo{%
  /Title    ()
  /Author   ()
  /Creator  ()
  /Producer ()
  /Subject  ()
  /Keywords ()
}

\begin{document}
\maketitle

\begin{abstract}
	In this work we introduce a family of transformations, named \textit{divergence transformations}, interpolating between any pair of probability density functions sharing the same support. We prove the remarkable property that the whole family of Kullback-Leibler and Rényi divergences evolves in a monotone way with respect to the transformation parameter. Moreover, fixing the reference density, we show that the divergence transformations enjoy a group structure and can be derived through the algebraic conjugation of the recently introduced differential-escort transformations and their relative counterparts.

This algebraic structure allows us to deform any density function in such a way its divergence with respect a fixed reference density might also increase as much as possible. We also establish the monotonicity of composed measures involving the proper Kullback-Leibler and Rényi divergences as well as other recently introduced relative measures of moment and Fisher types.
		
As applications, an approximation scheme of general density functions by simple functions is provided. In addition, we give a number of analytical and numerical examples of interest in both regimes of increasing and decreasing divergence.
\end{abstract}

\section{Introduction}

The construction of operators allowing to transform or interpolate between different functions or function spaces, or to map new classes of functions into other classes whose properties are previously known, is a well-established path towards remarkable advances in mathematical physics, mathematical analysis, ordinary and partial differential equations and in general all the fields of nonlinear science. The availability of such transformations allows to improve our knowledge by either deriving new properties or extending already established properties to larger classes of general functions or solutions to differential equations. A number of such transformations became standard tools in mathematics and physics, such as the famous Fourier or Laplace transforms, but also the B\"acklund, Darboux or Sundman transformations, just to give a few examples.

In the framework of information theory, the so-called \textit{escort} transformation of a probability density (originally introduced in the context of nonextensive thermodynamics~\cite{Beck1995,Beck2004}) has been used to find bounds in source coding~\cite{Bercher2009}, but also to introduce new families of entropies and divergences~\cite{Bercher2011}, or establish new informational inequalities\cite{Bercher2012,Bercher2013}, among other applications~\cite{Bercher2012(divergences)}. 

Actually, certain families of transformations inspired by escort ones and linked with the Sundman transformations of power type, are recently playing a key role in the study of some relevant families of informational inequalities for probability density functions  (pdf, for short) with support on the real numbers~\cite{Puertas2025,IP2025,IP2025(b)}. Indeed, they have motivated the introduction of different generalized families of moment-type functionals of a pdf~\cite{Puertas2025,IP2025(b)} as well as functionals of entropic and informational nature~\cite{IP2025(b),IP2025(c)}. For the latter class of functionals, the Shannon's entropy \cite{Shannon1948} is considered a paradigmatic representative, from which numerous generalizations have been defined, such as, for example, the Rényi and Tsallis entropies \cite{Renyi1961,Tsallis1988,Tsallis2022}. These measures are strongly connected with the family of generalized Fisher information measures~\cite{Lutwak2005}, in view of the existence of inequalities of Stam, Cram\'er-Rao and moment-entropy type\cite{Lutwak2004,Lutwak2005,Zozor2015,Zozor2017} or equalities such as the De Brujin identity~\cite{Dembo1991,Toranzo2018}.

The aforementioned transformations, which have been named \textit{up and down}~\cite{IP2025} and \textit{differential-escort}~\cite{Puertas2019}, have been essential in highlighting an intricate structure involving the above mentioned families of functionals and inequalities~\cite{IP2025(b),IP2025(c)}. As a first fact, a mirrored domain of parameters in which the inequalities are satisfied is discovered, in which both the Stam-like and the moment-entropy inequalities exist, but not the Cramér-Rao one~\cite{IP2025}. Moreover, as a remarkable result, the new functionals induced by these transformations have allowed to establish new sharp upper bounds for the product of the classical Stam inequality, depending on regularity conditions~\cite{IP2025(b)}. Furthermore, the \textit{differential-escort} transformations and their algebraic conjugations with the \textit{up/down} ones provide groups of transformations such that the so-called LMC complexity measure \cite{LopezRuiz1995, LopezRuiz2005} and some generalizations of it~\cite{Puertas2019,IP2025(c)} are monotone with respect to these transformations in the sense of~\cite{Rudnicki2016}. Other interesting properties and applications to Hausdorff-like moment problems have been studied~\cite{Puertas2025,IP2025}, showing the usefulness of some of the newly introduced functionals to characterize heavy-tailed pdfs~\cite{Puertas2025}, in contrast to the escort mean values induced by the standard escort transformations~\cite{Tsallis2009}.

Moving to the framework of divergence measures, on which this work focuses, some of the more representative functionals are
the well-known Kullback-Leibler and Rényi divergences, as well as the Jensen-Shannon divergences, see for example \cite{Kullback1951,Lin2002,Martin2015,Taneja1989,Taneja2004,Yamano2009}. Likewise, regarding information-based functionals (such as the Fisher information \cite{Kharazmi2023}), there exist different relative versions \cite{Antolin2009,Martin2013,Toscani2017}, including Jensen-Fisher-based divergences whose structure resembles the Jensen-Shannon divergence \cite{Sanchez-Moreno2012,Yamano2021}. Many mathematical properties and applications of these divergence measures have been studied in the last decades~\cite{Harremoes2014}, including the obtention of new informational inequalities~\cite{Gilardoni2010,Guntuboyina2013,Sason2015,Sason2016,Zozor2021,Sason2022}. In this research line, the application of the techniques mentioned in the previous paragraph has been recently extended by the authors to the relative framework~\cite{IPT2025} to obtain new sharp inequalities relating the Kullback-Leibler and Rényi divergences to some new relative scaling-invariant functionals of moment and Fisher types. In order to prove these inequalities, the authors introduced a relative version of the differential-escort transformations, which in turn motivated the definition of the corresponding new relative functionals. We want to emphasize the relevant role of the scale invariance, which has not been taken so often into account in the literature regarding these type of divergences. However, this property is essential in order to be able to derive sharp bounds for inequalities in which such measures could be involved.

In this work, we continue with this research line by focusing on the monotonicity properties of the whole family of Kullback-Leibler and Rényi divergences. We thus define a new family of transformations interpolating between any pair of probability densities with a shared support, in such a way that the Kullback-Leibler and Rényi divergences are monotone with respect to the transformation parameter. We observe that these transformations are precisely the algebraic conjugations of differential-escort transformations with their relative counterparts, and consequently the group structure is naturally inherited. Furthermore, we introduce three new families of relative LMC complexity-like measures involving, respectively, the divergences and the moment and Fisher relative measures mentioned in the last paragraph. We study their basic properties and their monotone behavior with respect to either the divergence transformations or their algebraic conjugation with the \textit{up/down} transformations.

We support these theoretical findings with a number of examples of explicit calculations of the transformed density for several pdfs frequently employed in the general theory: exponential, Gaussian, Beta distribution, among others. Moreover, we compute the transformed density for a piecewise pdf and we illustrate these calculations by numerical experiments showing how a pdf either approaches or separates from a reference pdf as the Kullback-Leibler or the Rényi divergence vary in a monotone way.

We believe that all these results could be potentially interesting from an applied perspective as well. There is a growing interest in developing techniques to, on the one hand, be able to distinguish between different pdfs and, on the other hand, to increase this difference as much as possible in order to identify different patterns of behavior and extract information from them. This is a keypoint in the research meant to detect and predict anomalies in signals of different nature. Examples of these are temporal signals associated with biological systems, such as electroencephalograms (EEGs) or electrocardiograms (ECGs), and those generated by electrical systems such as gas turbines \cite{Squicciarini2024a,Squicciarini2024b}. Moreover, the standard LMC complexity has been widely used to analyze the structure of various types of electronic and molecular systems \cite{Puertas2019,Dehesa2023}. In fact, relative versions of the LMC complexity already exist, developed in the context of these types of systems, and some of them also satisfy the property of scaling invariance \cite{Nath2021,Borgoo2011}.

\bigskip

\noindent \textbf{Structure of the paper}. A number of preliminary definitions and results needed in the development of the paper are recalled to the reader, for the sake of completeness, in Section \ref{sec:prelim}. In Section \ref{sec:divertrans} we introduce the \textit{divergence transformations} and their inverse, we study their basic properties, as well as their relation with the Kullback-Leibler and Rényi divergences. Furthermore, we establish a monotonicity property of the Rényi divergence under this transformation. Section \ref{sec:stas_comp_mono} is dedicated to the definition of relative complexity measures based on the relative Fisher measures and relative cumulative moments. We therein explore the monotonicity property they fulfil (in the sense of~\cite{Rudnicki2016}), each of them involving a particular combination of the up/down and relative transformations. A large number of examples of the computation of the divergence transformations for some well-known pdfs are given in Section \ref{sec:examples}. Following up with examples, we explicitly compute in Section \ref{sec:Npiece} the divergence transformation of a $N$-piecewise pdf, $f_N$, with respect to an arbitrary reference pdf $h$, as well their Kullback-Leibler divergence.  As an application to the previous calculations, in Section \ref{sec:numerical} we illustrate through numerical experiments how this transformation works, in the sense of approaching or separating two pdfs, by selecting particular instances of $f$ and $h$ for each scenario. Finally, in Section \ref{sec:conclusions} we give some conclusions and state some open problems to be addressed in the future.

\section{Preliminary notions}\label{sec:prelim}

Throughout this paper, we consider as a general framework two probability density functions $f$ and $h$ such that
\begin{equation}\label{cond:support}
{\rm supp}\,f={\rm supp}\,h=\overline{\Omega}, \quad \Omega=(x_i,x_f)\subseteq\Rset, \quad f(x),  h(x)>0, \quad {\rm for \ any} \ x\in\Omega,
\end{equation}
where $\Omega$ can be either bounded or unbounded and $\overline{\Omega}$ denotes the closure of the set $\Omega$. We also set throughout the paper
\begin{equation}\label{eq:xi(alpha)}
\xi(\lambda,\alpha):=1+\alpha(\lambda-1), \quad \alpha, \ \lambda\in\Rset,
\end{equation}
and $\mathcal U=\frac{1}{x_f-x_i}$ designs the uniform density supported in $\Omega$.

\subsection{R\'enyi divergence. Basic definitions and properties}

We begin with a recap of the definitions of the R\'enyi and Kullback-Leibler divergences, representing the starting pieces of this research. Although the establisahed theory relies on considering $\xi>0$, in this work we extend the following definition to any $\xi\in\mathbb R$.
\begin{definition}[$\xi$-Rényi divergence]
Given a real number $\xi$ and two probability density functions $f$, $h$ satisfying \eqref{cond:support}, the $\xi$-Rényi divergence (or mutual information) is defined as
	\begin{equation}\label{eqdef:renyi_mutual}
	D_\xi[\pdf||h]=\frac1{\xi-1}\log\int_{\Omega} [\pdf(t)]^{\xi}[h(t)]^{1-\xi}\,dt,\quad \xi\neq 1.
	\end{equation}
In the limit $\xi\to1$ we recover the Kullback-Leibler divergence, which is defined as
	\begin{equation}\label{eqdef:kullback-leibler}
	D[f||h]:= \int_{\Omega}  f(t)\log\left[\frac{f(t)}{h(t)}\right]\,dt.
	\end{equation}
\end{definition}

\noindent Throughout the paper we will adopt the following notation
\begin{equation}\label{def:Renyiexp}
K_{\xi}[f||h]:=e^{(\xi-1)D_\xi[f||h]}=\int_{\Omega} [\pdf(t)]^{\xi}[h(t)]^{1-\xi}\,dt.
\end{equation}
Note that from the normalization condition of $\pdf$ and $\pdfr$, one trivially has $K_0[f||h]=K_{1}[f||h]=1$ for any pair of densities $f$, $h$. Moreover, given any pair of probability densities $f$ and $h$, the Rényi and Kullback-Leibler divergences are non-negative, that is,
	\begin{equation}\label{eq:pos_Renyi}
D_\xi[f||h]\geq0,
	\end{equation}
and the equality holds if and only if $f=h$. This implies that $K_{\xi}[f||h]>1$ for any $\xi>1$ and $K_{\xi}[f||h]\in(0,1)$ for any $\xi\in(0,1)$. The derivative of $K_{\xi}[f||h]$ with respect to $\xi$ is given by
\begin{equation}
\frac{d}{d\xi}K_{\xi}[f||h]=\int_{\Omega}[\pdf(t)]^{\xi}[h(t)]^{1-\xi}\log\left[\frac{f(t)}{h(t)}\right]\,dt,
\end{equation}
which in particular gives
\begin{equation}
\frac{d}{d\xi}K_{\xi}[f||h]\Big|_{\xi=0}=-D[h||f]\quad{\rm and}\quad \frac{d}{d\xi}K_{\xi}[f||h]\Big|_{\xi=1}=D[f||h].
\end{equation}
We end this preliminary section with the following convexity result.
\begin{remark}\label{remark:logK}
The function $\xi\mapsto\log K_{\xi}[f||h]$, or, equivalently, $\xi\mapsto(\xi-1)D_\xi[f||h],$ is convex in $\xi\in[0,\infty]$ (see ~\cite{Harremoes2014}), that is,
\begin{equation}
\frac{\partial^2}{\partial \xi^2} \log K_{\xi}[f||h]>0.
\end{equation}
\end{remark}

\subsection{Up and down transformations}\label{subsec:updown}

We next recall the recently introduced up/down transformations~\cite{IP2025, IP2025(b)}. They will be strongly employed in Section \ref{sec:stas_comp_mono} dedicated to measures of statistical complexity.

\medskip

\noindent \textbf{The down transformation.} The down transformation establishes a bijection between the set of decreasing and differentiable density functions and a more general class of density functions, as follows:
\begin{definition}\label{def:down}
Let $\pdf: \Omega\longrightarrow \Rset^+$ be a probability density function with $\Omega=(x_i,x_f),$ where $-\infty<x_i<x_f\le\infty$, such that $\pdf'(x)<0,\;\forall x\in\Omega$. Then, for $\alpha\in\Rset$, we define the transformation $\down{\alpha}[\pdf(x)]$ by
	\begin{equation}\label{eq:down}
	\pdf^{\downarrow}_\alpha(s)\equiv\down{\alpha}[\pdf(x)](s)=\pdf^\alpha(x(s))|\pdf'(x(s))|^{-1},\qquad s'(x)=\pdf^{1-\alpha}(x) |\pdf'(x)|.
	\end{equation}
\end{definition}
It is easy to see that $\down{\alpha}[\pdf]$ is a probability density function. Let us also observe that the class of transformations $\down{\alpha}$ is defined up to a translation, since $s(x)$ depends on an additive integration constant. Without loss of generality, the following \emph{canonical election} will be employed as a standard choice.
\begin{equation}\label{eq:can_change}
s(x)=\left\{\begin{array}{ll}\frac{\pdf^{2-\alpha}(x)}{\alpha-2}, & {\rm for} \ \alpha\in\Rset\setminus\{2\},\\
-\ln\,\pdf(x), & {\rm for} \ \alpha=2,\end{array}\right.
\end{equation}	
Many properties of the down transformation are established in previous works by the authors \cite{IP2025, IP2025(b)}.

\medskip

\noindent \textbf{The up transformation.} Contrary to the down transformation, the up transformation is applicable to any probability density function, and it returns a decreasing and differentiable probability density. We give below its precise definition.
\begin{definition}\label{def:up}
Let $\pdf: \Omega\longrightarrow \mathbb R^+$ be a probability density function with $\Omega=(x_i,x_f)$. For $\alpha\in\Rset\setminus\{2\}$, the up transformation $\up{\alpha}$ is defined as
\begin{equation}\label{eq:up}
	f^{\uparrow}_\alpha(u)=\up{\alpha}[\pdf(x)](u)=|(\alpha-2)x(u)|^\frac{1}{2-\alpha}, \quad u'(x)=-|(\alpha-2)x|^\frac{1}{\alpha-2}f(x),
\end{equation}
while for $\alpha=2$ the up transformation $\up{2}$ is defined as
\begin{equation}\label{eq:up2}
f^{\uparrow}_2(u)=\up{2}[\pdf(x)](u)=e^{-x(u)},\quad u'(x)=-e^xf(x).
\end{equation}
\end{definition}
We precise here that the definition of $u(x)$ in Eqs. \eqref{eq:up} and~\eqref{eq:up2} is taken up to a translation; for simplicity, the canonical choice is the primitive
$$
u(x)=\int_x^{x_f}|(\alpha-2)t|^\frac{1}{\alpha-2}f(t)dt, \quad \alpha\neq2,
$$
and the similar one for $\alpha=2$, where $x_f\in\Rset\cup\{\infty\}$ is the upper edge of the support of the domain of $f$, whenever this integral is finite. In the contrary case, we can employ any intermediate point $x_0$ in $(x_i,x_f)$ if the integral is divergent at its end.

The next result shows that the down and up transformations are mutually inverse.
\begin{proposition}\label{prop:inv}
Let $\pdf$ a probability density and let $\adown=\down{\alpha}[f]$ and $\aup=\up{\alpha}[f]$ be its $\alpha$-order down and up transformations. Then, up to a translation,
	\begin{equation}
	\pdf=\up{\alpha}[\adown]=\down{\alpha}[\aup].
	\end{equation}
that is, $\down{\alpha}\up{\alpha}=\up{\alpha}\down{\alpha}=\mathbb I$, where $\mathbb I$ denotes the identity operator (up to a translation).
\end{proposition}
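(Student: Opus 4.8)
The plan is to verify both composition identities by direct computation, exploiting the \emph{canonical} choices \eqref{eq:can_change} for the down transformation together with the analogous primitive for the up transformation, and reducing everything to the chain rule for the underlying changes of variable. I would treat the generic case $\alpha\neq2$ first and postpone $\alpha=2$ to the end, since there the algebraic factor $\pdf^{2-\alpha}/(\alpha-2)$ is replaced by the logarithm/exponential pair appearing in \eqref{eq:up2} and must be checked separately by hand.

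For the composition $\up{\alpha}\down{\alpha}$, I would set $g=\adown=\down{\alpha}[\pdf]$, viewed as a function of the down-variable $s$, which by \eqref{eq:can_change} satisfies $(\alpha-2)s=\pdf^{2-\alpha}(x)$. The first step is the \emph{pointwise identity}: from \eqref{eq:up} the value of $\up{\alpha}[g]$ is $|(\alpha-2)s(u)|^{1/(2-\alpha)}$, and substituting $(\alpha-2)s=\pdf^{2-\alpha}(x)$ collapses this expression to $\pdf(x)$. The second step is to show that the two successive changes of variable cancel: using $s'(x)=\pdf^{1-\alpha}(x)|\pdf'(x)|$ from \eqref{eq:down} together with $u'(s)=-|(\alpha-2)s|^{1/(\alpha-2)}g(s)$ from \eqref{eq:up}, I would compute $du=u'(s)\,s'(x)\,dx$ and check that all powers of $\pdf$ and the factors $|\pdf'|$ cancel exactly, leaving $du=-dx$. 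Hence $\up{\alpha}[\adown]$ coincides with $\pdf$ after relabelling the variable, that is, up to the additive constant already built into the definitions.

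For the reverse composition $\down{\alpha}\up{\alpha}$ I would proceed symmetrically: starting from $\aup=\up{\alpha}[\pdf]$ with $\aup(u)=|(\alpha-2)x(u)|^{1/(2-\alpha)}$, I would compute $d\aup/du$ by the chain rule using $u'(x)$ from \eqref{eq:up}, obtain the key relation $|(\aup)'(u)|=\aup^{\alpha}/\pdf$, and plug it into \eqref{eq:down} to get $\down{\alpha}[\aup]=\aup^{\alpha}|(\aup)'|^{-1}=\pdf$; the same bookkeeping then yields $ds=-dx$ for the variable. The case $\alpha=2$ runs entirely in parallel, with $s=-\ln\pdf$ and $\aup=e^{-x}$ playing the roles of the algebraic expressions, and the identical cancellations deliver the conclusion.

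The algebraic simplifications are routine; the only real care lies in the bookkeeping. One must track the signs and absolute values consistently: both transformations reverse orientation, which is why each composed change of variable returns $-dx$ rather than $+dx$, in agreement with the orientation-free reading of ``up to a translation'' in the statement. I would also confirm that the endpoints of the support map to one another, so that the composition is defined on all of $\Omega$, and that the intermediate objects lie in the correct classes, namely that $\up{\alpha}$ outputs a decreasing differentiable density, which is precisely the admissible input for $\down{\alpha}$; this is what guarantees that both compositions are well defined and hence that $\up{\alpha}$ and $\down{\alpha}$ are genuinely mutually inverse.
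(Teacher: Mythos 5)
First, a point of comparison: the paper does not actually prove Proposition \ref{prop:inv} --- it is recalled without proof from \cite{IP2025,IP2025(b)} --- so your argument must be judged on its own. Its computational core is correct and is the natural one: with the canonical election \eqref{eq:can_change} one has $(\alpha-2)s=\pdf^{2-\alpha}(x)$, so the value formula in \eqref{eq:up} collapses to $\pdf(x)$, and in $du=u'(s)\,s'(x)\,dx$ all powers of $\pdf$ and the factors $|\pdf'|$ cancel, leaving $|du/dx|=1$; the mirror computation for $\down{\alpha}\up{\alpha}$ and the case $\alpha=2$ are equally fine.

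The genuine gap lies in how you dispose of the sign you (correctly) find. Your reconciling remark --- that ``both transformations reverse orientation'' --- is false under the paper's conventions: Definition \ref{def:down} prescribes $s'(x)=\pdf^{1-\alpha}(x)|\pdf'(x)|>0$, so the down change of variable is \emph{increasing}, and only the up one ($u'(x)<0$) is decreasing. Hence the composite change of variable is $u=-x+C$, and what your computation actually proves is $\up{\alpha}[\adown](u)=\pdf(C-u)$: a \emph{reflection} of $\pdf$, not a translate. This cannot be waved away: $\pdf$ is strictly decreasing, so $\pdf(C-\cdot)$ is strictly increasing and coincides with no translate $\pdf(\cdot-c)$. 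Concretely, for $\pdf(x)=e^{-x}$ on $(0,\infty)$ and $\alpha=0$, the canonical down transform is $g(s)=(-2s)^{-1/2}$ on $(-1/2,0)$, and $\up{0}[g](u)=e^{u}$ on $(-\infty,0)$. A proof of the statement as written must repair the orientation conventions rather than absorb them: for instance, taking the down change of variable decreasing, $s'(x)=\pdf^{1-\alpha}(x)\pdf'(x)<0$, with primitive $s=\pdf^{2-\alpha}/(2-\alpha)$, your same cancellations then give $du=+dx$, and in addition $\adown$ is then supported on the half-line where $\up{\alpha}$ really does return a decreasing density: one finds $(\aup)'(u)=\sign((\alpha-2)x)\,\aup^{\alpha}(u)/\pdf(x(u))$, so decreasingness needs $(\alpha-2)x<0$, a condition which the paper's canonical election violates for every $\alpha$ and which your final well-definedness check (needed before applying $\down{\alpha}$ to $\aup$, and before claiming $\up{\alpha}[\adown]$ can equal the decreasing density $\pdf$) would have exposed. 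With that sign fix --- or with the proposition read as holding up to an affine, possibly orientation-reversing, reparametrization --- your argument goes through; as written it proves a reflected version of the claim, justified by an incorrect assertion about the orientations.
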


\medskip

\noindent \textbf{Relation between up/down transformations, moments and entropies.} We next gather in the following technical result several identities showing how the up and down transformations relate to some of the classical informational functionals. Some of these properties will be employed thoughout the paper.
\begin{lemma}\label{lem:MEF}
Let $\pdf$ be a probability density and $\aup$ and $\adown$ its up/down transformations. Then, if $\alpha\in\Rset\setminus\{2\}$, the following equalities hold true:
	\begin{equation}\label{eq:ME}
	\sigma_p[\adown]=\frac{N_{1+(2-\alpha)p}^{\alpha-2}[\pdf]}{|2-\alpha|},\quad\text{or equivalently,}\quad N_{\lambda}[\aup]=\left(|2-\alpha| \sigma_{\frac{\lambda-1}{2-\alpha}}[\pdf]\right)^{\frac{1}{\alpha-2}},
	\end{equation}
and
	\begin{equation}\label{eq:EF}
	N_{\lambda}[\adown]=\phi_{1-\lambda,2-\alpha}^{2-\alpha}[\pdf],\quad\text{or equivalently,}\quad  \phi_{p,\beta}[f^{\uparrow}_{2-\beta}]=\left(N_{1-p}[\pdf]\right)^{\frac1{\beta}}.
	\end{equation}
For the Shannon entropy we have
	\begin{equation}\label{eq:SUD}
	S[\adown]=\alpha S[f]+\big \langle \log \left| f'\right|\big \rangle,\qquad S[\aup]=\frac1{2-\alpha}\left\langle \log |x|\right\rangle+\frac{\log|2-\alpha|}{2-\alpha}.
	\end{equation}
For $\alpha=2$, we have the following equalities:
\begin{equation}\label{eq:Sp_down2}
\sigma_p[\down{2}[\pdf]]=\left[\int_{\Rset}f(x)|\ln\,f(x)|^pdx\right]^{\frac{1}{p}},
\end{equation}
respectively
\begin{equation}\label{eq:Ren_down2}
N_\lambda[\down{2}[\pdf]]=\lim_{\widetilde \lambda\to0}\phi_{1-\lambda,\widetilde \lambda}^{\widetilde\lambda}[f]\equiv F_{1-\lambda,0}^\frac{1}{1-\lambda}[f].
\end{equation}
\end{lemma}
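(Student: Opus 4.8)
The common engine behind every identity in the statement is the change of variables built into the transformations, distilled into the single invariance $\adown(s)\,ds=\pdf(x)\,dx$. This follows immediately from \eqref{eq:down}: since $\adown(s)=\pdf^\alpha|\pdf'|^{-1}$ and $ds=\pdf^{1-\alpha}|\pdf'|\,dx$, the factors $|\pdf'|$ cancel and the powers of $\pdf$ add to $1$. The plan is to rewrite each functional of $\adown$ (resp. $\aup$) as an integral in the original variable $x$ by means of this relation and the canonical change \eqref{eq:can_change}, and then to read off the target functional by collecting the resulting exponents of $\pdf$ and $|\pdf'|$. Throughout I use that $\sigma_p[g]^p=\int|s|^p g(s)\,ds$, that $N_\lambda[\pdf]^{1-\lambda}=\int \pdf^\lambda\,dx$, and that $\phi_{p,\beta}[\pdf]$ denotes the generalized Fisher functional whose $p\beta$-th power equals $\int \pdf^{1-(2-\beta)p}|\pdf'|^p\,dx$.

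I would begin with the moment identity \eqref{eq:ME}. Inserting $s(x)=\pdf^{2-\alpha}(x)/(\alpha-2)$ into $\sigma_p[\adown]^p=\int|s|^p\adown\,ds$ and using the invariance turns this into $|\alpha-2|^{-p}\int \pdf^{1+(2-\alpha)p}\,dx$; recognizing the integral as $N_\lambda[\pdf]^{1-\lambda}$ with $\lambda=1+(2-\alpha)p$ and taking the $p$-th root gives the first formula, while the equivalent statement for $N_\lambda[\aup]$ follows by applying the identity with $\aup$ in place of $\pdf$ and simplifying via $\down{\alpha}\up{\alpha}=\mathbb I$ (Proposition \ref{prop:inv}). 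The Fisher identity \eqref{eq:EF} is obtained the same way: the substitution yields $\adown^\lambda\,ds=\pdf^{1+\alpha(\lambda-1)}|\pdf'|^{1-\lambda}\,dx$, whose exponent $1+\alpha(\lambda-1)$ is exactly $\xi(\lambda,\alpha)$ from \eqref{eq:xi(alpha)}, and the right-hand side is by definition a power of $\phi_{1-\lambda,2-\alpha}[\pdf]$; the reciprocal statement again follows from Proposition \ref{prop:inv}.

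The Shannon identities \eqref{eq:SUD} are the most transparent. Writing $\log\adown=\alpha\log\pdf-\log|\pdf'|$ and integrating against $\adown\,ds=\pdf\,dx$ gives $S[\adown]=-\int(\alpha\log\pdf-\log|\pdf'|)\pdf\,dx=\alpha S[\pdf]+\langle\log|\pdf'|\rangle$ at once. The up version is handled analogously from \eqref{eq:up}: one first checks the companion invariance $\aup(u)\,|du|=\pdf(x)\,dx$ (so that $\aup$ is indeed normalized), and then integrates $\log\aup=\frac1{2-\alpha}\log|(\alpha-2)x|$ against $\pdf\,dx$. Finally, the degenerate case $\alpha=2$ is treated separately because \eqref{eq:can_change} switches to the logarithmic change $s(x)=-\ln\pdf(x)$: repeating the first computation gives $\sigma_p[\down{2}[\pdf]]^p=\int|\ln\pdf|^p\pdf\,dx$, which is \eqref{eq:Sp_down2}, while the analogue of the Fisher computation produces $\int \pdf^{2\lambda-1}|\pdf'|^{1-\lambda}\,dx$ inside the $N_\lambda$ root, and matching exponents with the $\beta\to0$ limit of $\phi_{1-\lambda,\beta}$ yields \eqref{eq:Ren_down2}.

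No step requires a deep idea; the identities are all consequences of one change of variables. The only genuinely delicate points are bookkeeping in nature: keeping track of the normalization roots ($1/p$, $1/(1-\lambda)$, $1/(p\beta)$) when passing between a functional and its integrand, and correctly orienting the change of variables in the up transformation, where $u'(x)<0$ reverses the limits of integration and therefore controls the signs appearing in \eqref{eq:SUD}. I expect this orientation bookkeeping in the up case to be the main place where care is needed, even though it hides no conceptual obstacle.
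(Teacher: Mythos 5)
Your overall strategy --- reduce every functional of $\adown$ or $\aup$ to an integral in the original variable via the invariance $\adown(s)\,ds=\pdf(x)\,dx$ and the canonical choice \eqref{eq:can_change} --- is the natural one; note that the paper itself gives no proof of this lemma (it is recalled from the cited prior works), so a direct verification is indeed what is required. Your computations for \eqref{eq:ME}, for \eqref{eq:EF} (with the convention $\phi_{p,\beta}[\pdf]^{p\beta}=\int \pdf^{1-(2-\beta)p}|\pdf'|^p\,dx$, which you state explicitly and which is consistent with Definition \ref{def:relativeFD} at $h\equiv1$), for the down--Shannon identity, and for both $\alpha=2$ identities \eqref{eq:Sp_down2} and \eqref{eq:Ren_down2} are all correct, as is the passage to the ``equivalent'' formulations via Proposition \ref{prop:inv}.

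The gap is in the second identity of \eqref{eq:SUD}. Your recipe ``integrate $\log\aup=\frac{1}{2-\alpha}\log|(\alpha-2)x|$ against $\pdf\,dx$'' computes $\int\aup\log\aup\,du=\frac{1}{2-\alpha}\left(\langle\log|x|\rangle+\log|2-\alpha|\right)$, but the Shannon entropy is the \emph{negative} of this integral. Carrying the minus sign through --- exactly as you correctly did in the down case --- yields $S[\aup]=\frac{1}{\alpha-2}\left(\langle\log|x|\rangle+\log|2-\alpha|\right)$, which is the negative of the formula stated in the lemma. Your appeal to orientation bookkeeping cannot repair this: the reversal of limits caused by $u'(x)<0$ is entirely absorbed by the invariance $\aup(u)\,|du|=\pdf(x)\,dx$ and contributes no sign to any of these integrals. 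A concrete check makes the discrepancy unambiguous: for $\pdf(x)=e^{-x}$ on $(0,\infty)$ and $\alpha=0$ one has $\langle\log|x|\rangle=-\gamma$ (Euler's constant), and a direct computation gives $S[\pdf^{\uparrow}_{0}]=-\tfrac{1}{2}(\log 2-\gamma)\approx-0.058$, whereas the stated right-hand side equals $+\tfrac{1}{2}(\log 2-\gamma)\approx+0.058$. So as written your argument reaches the printed formula only by silently dropping the entropy's minus sign; done consistently, it instead shows that the printed formula is incompatible with Definition \ref{def:up} of this paper (the prefactor should read $\frac{1}{\alpha-2}$, presumably a sign slip inherited from the cited works). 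A complete write-up must either derive the corrected identity and flag this discrepancy, or identify a differing convention under which the printed version holds; it cannot do what it currently does.
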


\subsection{Relative differential-escort transformations}

In the next definition, we recall the relative differential-escort transformation, introduced by the authors in their previous work \cite{IPT2025}. This transformation will be of great importance throughout the paper, as we can see our main divergence-transformation in an alternative way as the composition of suitable relative differential-escort transformations.
\begin{definition}\label{def:transf}
Let $f$ and $h$ be two probability density functions satisfying \eqref{cond:support} and $\alpha\in\Rset$. We define the \textit{relative differential-escort} transformed density of $\alpha$-order of $f$ as
\begin{equation}\label{eq:transf}
\mathfrak R_\alpha^{[\pdfr]}[\pdf](y):=\left(\frac{\pdf(x(y))}{\pdfr(x(y))}\right)^\alpha,\quad y'(x)=\pdf(x)^{1-\alpha}\pdfr(x)^{\alpha}.
\end{equation}
\end{definition}

\noindent For simplicity, we also employ the alternative notation
$$
f_\alpha^{[h]}(y)\equiv \mathfrak R_\alpha^{[\pdfr]}[\pdf](y).
$$
Note that for  $h\equiv 1$ the relative differential-escort transformation $\mathfrak R_\alpha^{[h]}$ reduces to the standard differential-escort one $\mathfrak E_\alpha$~\cite{Puertas2019}, which can be applied to any probability density $f$ independently of the condition~\eqref{cond:support}. The following connection between the relative differential-escort transformation and the (standard) differential-escort transformation, proved in \cite{IPT2025} is very useful:
\begin{equation}\label{eq:comp_escort}
\mathfrak E_{\beta}\mathfrak R_\alpha^{[h]}[f]=\mathfrak R_{\alpha\beta}^{[h]}[f].
\end{equation}
The inverse of the relative differential-escort transformation is a rather non-trivial construction and only works on a more restrictive class of density functions, satisfying the finiteness of a suitable R\'enyi entropy (which reduces to the compactness of the support in the limiting case $\alpha=1$). Employing the notation
$$
\mathcal{S}_a[f](x):=af(ax)
$$
for a scaling change, we have:
\begin{definition}\label{def:inverse}
Given a probability density $h$ as reference function, for any $\alpha\in\Rset\setminus\{1\}$ and any probability density function $g$ such that
\begin{equation}\label{cond:inverse}
r(\alpha,g):=N_{\frac{\alpha-1}{\alpha}}[g]=N_{\frac{1}{\alpha^*}}[g]<\infty,
\end{equation}
we define
\begin{equation}\label{eq:inversegen}
\mathfrak{R}^{-1,[h]}_{\alpha}[g](x)=\mathcal{S}_{r(\alpha,g)}[g](y(x))^{\frac{1}{\alpha}}h(x), \quad y'(x)=\mathcal{S}_{r(\alpha,g)}[g](y(x))^{-\frac{1}{\alpha^*}}h(x),
\end{equation}
which can be equivalently written as
\begin{equation}\label{eq:inversegen2}
\mathfrak{R}^{-1,[h]}_{\alpha}[g](x)=[rg(ry(x))]^{\frac{1}{\alpha}}h(x), \quad y'(x)=[rg(ry(x))]^{-\frac{1}{\alpha^*}}h(x),
\end{equation}
with $r=r(\alpha,g)$. For $\alpha=1$ and any compactly supported probability density function $g$, we define
\begin{equation}\label{eq:inverse}
\mathfrak{R}^{-1,[h]}_{1}[g](x)=\mathcal{S}_{r(1,g)}[g](y(x))h(x), \quad y'(x)=h(x),
\end{equation}
where $r(1,g)$ is the length of the support of $g$.
\end{definition}
The following result, established as \cite[Lemma 3.2]{IPT2025}, gives a connection between the R\'enyi entropy power and the relative differential-escort transformation.
\begin{lemma}\label{lem:Renyi}
Let $f$ and $h$ be two probability density functions satisfying \eqref{cond:support}, $\alpha\in\Rset$ and $\lambda\in\Rset\setminus\{1\}$. Then
	\begin{equation}\label{eq:Renyi}
	N_{\lambda}^{1-\lambda}[f_\alpha^{[h]}]=K_{\xi(\lambda,\alpha)}[f||h].
	\end{equation}
For the Shannon entropy, we find
	\begin{equation}\label{eq:Shannon}
	S[f_\alpha^{[h]}]=-\alpha D[f||h].
	\end{equation}
\end{lemma}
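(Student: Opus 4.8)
The plan is to reduce both identities to a single change of variables $y\mapsto x$ induced by the defining relation $y'(x)=f(x)^{1-\alpha}h(x)^{\alpha}$ of the relative differential-escort transformation in \eqref{eq:transf}. First I recall that the Rényi entropy power of order $\lambda$ satisfies $N_\lambda^{1-\lambda}[g]=\int g(y)^{\lambda}\,dy$, so that raising the left-hand side of \eqref{eq:Renyi} to the power $1-\lambda$ strips away the root and leaves only the bare integral $\int [f_\alpha^{[h]}(y)]^{\lambda}\,dy$. Since $f,h>0$ on $\Omega$ by \eqref{cond:support}, we have $y'(x)=f(x)^{1-\alpha}h(x)^{\alpha}>0$, so $x\mapsto y(x)$ is a strictly increasing diffeomorphism of $\Omega$ onto the support of $f_\alpha^{[h]}$ and the substitution is legitimate.

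Next I substitute $f_\alpha^{[h]}(y)=(f(x)/h(x))^{\alpha}$ together with $dy=f(x)^{1-\alpha}h(x)^{\alpha}\,dx$ into the integral, obtaining
\begin{equation*}
\int [f_\alpha^{[h]}(y)]^{\lambda}\,dy=\int_\Omega \left(\frac{f(x)}{h(x)}\right)^{\alpha\lambda} f(x)^{1-\alpha}h(x)^{\alpha}\,dx.
\end{equation*}
The key bookkeeping step is to collect the exponents: the total power of $f$ is $\alpha\lambda+1-\alpha=1+\alpha(\lambda-1)$, which is exactly $\xi(\lambda,\alpha)$ by \eqref{eq:xi(alpha)}, while the total power of $h$ is $-\alpha\lambda+\alpha=\alpha(1-\lambda)=1-\xi(\lambda,\alpha)$. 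The integrand is therefore $f(x)^{\xi(\lambda,\alpha)}h(x)^{1-\xi(\lambda,\alpha)}$, and comparing with \eqref{def:Renyiexp} I recognize the result as $K_{\xi(\lambda,\alpha)}[f||h]$, which proves the first identity.

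For the Shannon entropy I would exploit the same change of variables, now noting the mass-preserving identity $f_\alpha^{[h]}(y)\,dy=(f/h)^{\alpha}f^{1-\alpha}h^{\alpha}\,dx=f(x)\,dx$ (which also re-confirms that $f_\alpha^{[h]}$ is a probability density). Combining this with $\log f_\alpha^{[h]}(y)=\alpha\log(f(x)/h(x))$ gives
\begin{equation*}
S[f_\alpha^{[h]}]=-\int f_\alpha^{[h]}(y)\log f_\alpha^{[h]}(y)\,dy=-\alpha\int_\Omega f(x)\log\frac{f(x)}{h(x)}\,dx=-\alpha D[f||h],
\end{equation*}
by the definition \eqref{eqdef:kullback-leibler} of the Kullback-Leibler divergence. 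I do not expect any serious obstacle here, since the argument is essentially a monotone substitution; the only points that genuinely demand care are verifying that $y(x)$ is a true bijection onto the support of $f_\alpha^{[h]}$ with vanishing boundary contributions, and keeping track that the restriction $\lambda\neq1$ is needed only so that the Rényi power $N_\lambda$ is well defined, the case $\lambda=1$ being precisely the Shannon limit handled separately.
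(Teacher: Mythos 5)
Your proof is correct. The paper itself does not reproduce an argument for this lemma (it is quoted from Lemma 3.2 of the authors' earlier work \cite{IPT2025}), but your direct change-of-variables computation --- substituting $f_\alpha^{[h]}(y)=(f(x)/h(x))^{\alpha}$ with $dy=f(x)^{1-\alpha}h(x)^{\alpha}\,dx$, collecting exponents to recognize $\xi(\lambda,\alpha)$, and using the mass-preservation identity $f_\alpha^{[h]}(y)\,dy=f(x)\,dx$ for the Shannon case --- is exactly the standard argument that establishes both identities, so there is nothing to add.
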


\subsection{Relative cumulative moment and relative Fisher divergence}

In this section, we recall two relative informational functionals introduced by the authors in their previous work \cite{IPT2025}, together with some identities connecting them to classical informational functionals. The first relative functional is the \emph{relative Fisher divergence}, defined below.
\begin{definition}\label{def:relativeFD}
Let $f$, $h$ be two probability density functions satisfying \eqref{cond:support}, being both derivable on their support, and let $(p,\lambda)\in\Rset^2$ be such that $p>1$ and $\lambda\neq0$. The \emph{relative Fisher divergence} is defined as
\begin{equation}\label{eq:relativeFD}
F_{p,\lambda}[f||h]:=\int_{\Omega}f^{1+p(\lambda-1)}h^{-\lambda p}\left|\frac d{dx}\left[\log\frac{f}{h}\right]\right|^p\,dx, \quad
\phi_{p,\lambda}[f||h]:=F_{p,\lambda}[f||h]^\frac{1}{p\lambda}
\end{equation}
\end{definition}
The following identity, connecting the relative Fisher divergence to the $(p,\lambda)$-Fisher information, will be useful in the section related to the monotonicity properties of some new measures of statistical complexity:
\begin{equation}\label{eq:relFI}
	\phi_{p,\lambda}[f_\alpha^{[h]}]=|\alpha|^{1/\lambda}\phi_{p,\lambda\alpha}[f||h]^{\alpha}.
\end{equation}
For a proof, the reader is referred to \cite[Lemma 4.1]{IPT2025}.

\medskip

The second informational functional recalled in this section is the \emph{relative cumulative moment}, representing the counterpart in the relative framework of the cumulative moments derived in \cite{Puertas2025}.
\begin{definition}\label{def:relativeCM}
Let $f$ and $h$ be two probability density functions satisfying \eqref{cond:support} and let $p>0$, $\alpha\in\Rset$. The \emph{relative cumulative moment} is defined as the expected value
$$
\mu_{p,\alpha}[f||h]:=\int_{\Omega}\left|\int_{x_i}^x f(s)^{1-\alpha}h(s)^\alpha\,ds\right|^pf(x)\,dx.
$$
We also introduce the quantity
$$
\sigma_{p,\alpha}[f||h]:=\mu_{p,\alpha}[f||h]^\frac1{p\alpha}.
$$
\end{definition}

\section{Divergence transformations: definitions and basic properties}\label{sec:divertrans}

In this section we introduce the main objects of study of the present paper, that is, the divergence transformations. Before giving their definition, we need a few preparatory notions. Let $\pdfr$ and $\pdf$ two probability densities satisfying \eqref{cond:support}. We introduce first the function $K(\cdot;x)$ given by
\begin{equation}\label{eq:Kx}
K(\alpha;x)=\int_{x_i}^x [\pdf(t)]^{\alpha}[\pdfr(t)]^{1-\alpha}\,dt.
\end{equation}
It follows obviously from \eqref{def:Renyiexp} that for $x=x_f$ we have $K(\alpha;x_f)=\K{\pdf}{\pdfr}{\alpha}$. Based on the function $K$, we next introduce a change of variable in the support $\Omega$. For a fixed $\alpha\in\Rset$, we define the function $y:\Omega\longrightarrow\Omega$ by
\begin{equation}\label{eq:y(x)}
y(x)=H^{-1}\left(\frac{K(1-\alpha;x)}{K_\alpha[\pdfr||\pdf]}\right),
\end{equation}
where $H^{-1}$ is the inverse function of the cumulative function of $\pdfr$,
\begin{equation}\label{eq:F0}
H(y)=\int_{x_i}^y \pdfr(s)ds, \quad y\in\Omega.
\end{equation}
The next easy result shows that, indeed, we have defined a change of variable inside $\Omega$.

\begin{lemma}\label{lemma:ybij}
The function $y:\Omega\longrightarrow\Omega$ defined in \eqref{eq:y(x)} is bijective.
\end{lemma}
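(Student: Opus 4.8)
The plan is to reduce the bijectivity of $y$ to the strict monotonicity and continuity of two elementary building blocks, since \eqref{eq:y(x)} exhibits $y$ as the composition of $H^{-1}$ with a normalized partial Rényi integral. If I can show that each factor is a strictly increasing continuous bijection onto the appropriate interval, then $y$ will automatically be a strictly increasing continuous bijection of $\Omega$ onto itself.

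First I would analyze the cumulative function $H$. By \eqref{eq:F0} together with condition \eqref{cond:support}, the integrand $\pdfr$ is strictly positive on $\Omega$, so $H$ is strictly increasing and continuous, with $H(x_i^+)=0$ and $H(x_f^-)=\int_{\Omega}\pdfr=1$ by normalization. Hence $H\colon\Omega\to(0,1)$ is a bijection, and its inverse $H^{-1}\colon(0,1)\to\Omega$ is again a strictly increasing continuous bijection.

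Next I would treat the inner argument. Specializing \eqref{eq:Kx} to the parameter $1-\alpha$, the integrand of $K(1-\alpha;x)=\int_{x_i}^{x}[\pdf(t)]^{1-\alpha}[\pdfr(t)]^{\alpha}\,dt$ is strictly positive on $\Omega$ by \eqref{cond:support}, so $x\mapsto K(1-\alpha;x)$ is strictly increasing and continuous, with $K(1-\alpha;x_i^+)=0$ and, by the boundary identity noted after \eqref{eq:Kx}, $K(1-\alpha;x_f^-)=K_{1-\alpha}[\pdf\|\pdfr]=\int_{\Omega}[\pdf]^{1-\alpha}[\pdfr]^{\alpha}=K_{\alpha}[\pdfr\|\pdf]$. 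Dividing by the positive constant $K_{\alpha}[\pdfr\|\pdf]$, the normalized map $x\mapsto K(1-\alpha;x)/K_{\alpha}[\pdfr\|\pdf]$ is a strictly increasing continuous bijection from $\Omega$ onto $(0,1)$, which is precisely the domain of $H^{-1}$.

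Finally, $y$ is the composition of these two strictly increasing continuous bijections, hence itself a strictly increasing continuous bijection of $\Omega$ onto $\Omega$, proving the claim. The only point requiring care, which I would flag as the sole potential obstacle, is the well-definedness of the normalization, namely the finiteness and strict positivity of $K_{\alpha}[\pdfr\|\pdf]$: positivity is immediate from the strictly positive integrand, while finiteness, implicit in the hypotheses under which the divergence transformation is defined, also guarantees that every partial integral $K(1-\alpha;x)$ is finite and bounded above by $K_{\alpha}[\pdfr\|\pdf]$, so the quotient genuinely lands in $(0,1)$.
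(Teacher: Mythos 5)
Your proof is correct and takes essentially the same route as the paper's: strict monotonicity of $H$ and of $x\mapsto K(1-\alpha;x)$ deduced from the positivity of the integrands, combined with the boundary identity $K(1-\alpha;x_f)=K_{1-\alpha}[f||h]=K_\alpha[h||f]$ to conclude that $y$ maps $\Omega$ onto $\Omega$. Your write-up is somewhat more careful than the paper's, which leaves implicit both the continuity needed for surjectivity and the finiteness and positivity of the normalizing constant $K_\alpha[h||f]$ that you rightly flag.
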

\begin{proof}
On the one hand, taking into account the positivity of the reference density $h$ in $\Omega$, we deduce that both $K(\alpha;x)$ and $H$ (and thus also $H^{-1}$) are increasing functions with respect to $x$. It then follows from \eqref{eq:y(x)} that $y$ is an increasing function. On the other hand, we observe that
$$
y(x_i)=H^{-1}(0)=x_i \quad {\rm and} \quad y(x_f)=H^{-1}(1)=x_f,
$$
establishing that $y$ is onto and completing the proof.
\end{proof}
With the previous preparations in mind, we are now in a position to formulate the definition of our new family of transformations.
\begin{definition}[Divergence transformation]\label{defi:tilderho}
Let $\pdfr(x)$ and $\pdf(x)$ be two probability densities satisfying \eqref{cond:support} and let $\alpha\in\mathbb{R}$ be such that
\begin{equation}\label{cond:KL}
K_\alpha[\pdfr||\pdf]<\infty.
\end{equation}
We define the $\alpha$-divergence transformation $\mathfrak A^{(\pdfr)}_\alpha$ by
\begin{equation}\label{eq:tilderho}
\mathfrak{A}^{(\pdfr)}_\alpha[f](y)\equiv\pdf_{\alpha}^{(\mathfrak{A},\pdfr)}(y):= K_\alpha[\pdfr||\pdf]\,\left(\frac{ \pdf(x(y))}{\pdfr(x(y))}\right)^\alpha\; \pdfr(y),
\end{equation}
where $x(y)$ is the inverse function of the function defined in \eqref{eq:y(x)}.
\end{definition}
The first fact that we prove is that, indeed, by the transformation introduced in Definition \ref{defi:tilderho} we obtain probability density functions.
\begin{lemma} In the same conditions as in Definition \ref{defi:tilderho}, the $\alpha$-transformed function $\Appr{\pdfr}{\alpha}:\Omega \longrightarrow\mathbb R^+$ satisfies
\begin{equation}
\Appr{\pdfr}{\alpha}(y)dy=\pdf(x)dx.
\end{equation}
In particular, $\Appr{\pdfr}{\alpha}$ is a probability density with respect to the variable $y$.
\end{lemma}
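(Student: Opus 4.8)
The plan is to carry out a direct change of variables, turning the claimed identity $\Appr{\pdfr}{\alpha}(y)\,dy=\pdf(x)\,dx$ into an explicit computation of the Jacobian $y'(x)$ followed by a bookkeeping of the exponents of $\pdf$ and $\pdfr$. The only structural inputs needed are the defining relation \eqref{eq:y(x)} and the fundamental theorem of calculus; everything else is elementary.

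First I would apply $H$ to both sides of \eqref{eq:y(x)} to clear the inverse cumulative function, obtaining the clean identity
$$
H(y(x))=\frac{K(1-\alpha;x)}{\K{\pdfr}{\pdf}{\alpha}}.
$$
Differentiating this with respect to $x$ and using $H'=\pdfr$ from \eqref{eq:F0} on the left, together with $\frac{d}{dx}K(1-\alpha;x)=[\pdf(x)]^{1-\alpha}[\pdfr(x)]^{\alpha}$ from \eqref{eq:Kx} on the right, yields
$$
\pdfr(y(x))\,y'(x)=\frac{[\pdf(x)]^{1-\alpha}[\pdfr(x)]^{\alpha}}{\K{\pdfr}{\pdf}{\alpha}},
$$
and hence an explicit expression for $y'(x)$.

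Next I would substitute $dy=y'(x)\,dx$ and evaluate the transformed density \eqref{eq:tilderho} at $y=y(x)$, where by definition $x(y)=x$. In the product $\Appr{\pdfr}{\alpha}(y(x))\,y'(x)$ the prefactor $\K{\pdfr}{\pdf}{\alpha}$ and the factor $\pdfr(y(x))$ cancel exactly against their reciprocals arising from $y'(x)$, leaving
$$
\left(\frac{\pdf(x)}{\pdfr(x)}\right)^{\alpha}[\pdf(x)]^{1-\alpha}[\pdfr(x)]^{\alpha}=[\pdf(x)]^{\alpha+(1-\alpha)}=\pdf(x),
$$
which is precisely the asserted identity.

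Finally, to conclude that $\Appr{\pdfr}{\alpha}$ is a genuine probability density I would integrate the identity over $\Omega$: by Lemma \ref{lemma:ybij} the map $y(x)$ is a bijection of $\Omega$ onto itself sending $x_i,x_f$ to $x_i,x_f$, so the change of variables gives $\int_\Omega \Appr{\pdfr}{\alpha}(y)\,dy=\int_\Omega \pdf(x)\,dx=1$, while positivity is immediate from \eqref{cond:KL}, \eqref{cond:support} and the explicit form \eqref{eq:tilderho}. I do not anticipate any genuine obstacle; the entire content sits in the differentiation step and in checking that the powers of $\pdf$ and $\pdfr$ recombine correctly. The one point requiring care is the derivative $\frac{d}{dx}K(1-\alpha;x)=\pdf^{1-\alpha}\pdfr^{\alpha}$, since here it is $\pdf$ that carries the exponent $1-\alpha$ and $\pdfr$ the exponent $\alpha$, opposite to the roles they play in the factor $(\pdf/\pdfr)^{\alpha}$ of the transformed density.
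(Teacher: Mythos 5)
Your proof is correct and follows essentially the same route as the paper's: differentiate the identity $H(y(x))=K(1-\alpha;x)/\K{\pdfr}{\pdf}{\alpha}$ to obtain $\pdfr(y)\,y'(x)=[\pdf(x)]^{1-\alpha}[\pdfr(x)]^{\alpha}/\K{\pdfr}{\pdf}{\alpha}$, then substitute into \eqref{eq:tilderho} and cancel exponents. Your added final step, integrating over $\Omega$ via the bijectivity of $y(x)$ from Lemma \ref{lemma:ybij} to confirm normalization, simply makes explicit what the paper leaves implicit.
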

\begin{proof}
On the one hand, we infer from \eqref{eq:F0} and \eqref{eq:y(x)} that
\begin{equation}\label{eq:y(x)0}
\int_{x_i}^y \pdfr(s)ds=\frac1{K_\alpha[\pdfr||\pdf]}\,\,\int_{x_i}^x [\pdf(t)]^{1-\alpha}\,[\pdfr(t)]^\alpha\,dt,
\end{equation}
which, by differentiating with respect to $x$ in both sides, leads to
\begin{equation}\label{eq:dem_lemma3_1}
\pdfr(y)\,y'(x)=\frac{[\pdf(x)]^{1-\alpha}[\pdfr(x)]^{\alpha}}{K_\alpha[\pdfr||\pdf]}.
\end{equation}
On the other hand, Eq.~\eqref{eq:tilderho} gives
\begin{equation}\label{eq:dem_lemma3_2}
\Appr{\pdfr}{\alpha}(y)dy=K_\alpha[\pdfr||\pdf]\,\left(\frac{ \pdf(x)}{\pdfr(x)}\right)^\alpha \pdfr(y)\,y'(x)dx.
\end{equation}
The conclusion follows by inserting \eqref{eq:dem_lemma3_1} into \eqref{eq:dem_lemma3_2}.
\end{proof}

\medskip

\noindent \textbf{Remark.} Define
\begin{equation}\label{eq:asup}
\alpha_c^{+}:= \sup\{ \alpha\in\mathbb{R}: K_\alpha[\pdfr||\pdf] <\infty \}, \quad \alpha_c^{-}:= \inf\{ \alpha\in\mathbb{R}: K_\alpha[\pdfr||\pdf] <\infty \}.
\end{equation}
Assuming $f,h$ bounded in their common support $\Omega$, we observe from \eqref{eq:y(x)} that, whenever $K_{\alpha_c^+}[h||f]=\infty$ or $K_{\alpha_c^-}[h||f]=\infty$, the transformed density approaches a Dirac distribution concentrated at $y =x_i$ as $\alpha\to\alpha_c^{+}$ and also as $\alpha\to\alpha_c^{-}$, since
\[
y(x) = H^{-1}(0) = x_i \, ,
\]
for any $x\in \Omega$. However, it is also possible that $K_{\alpha_c^+}[h||f]<\infty$ or $K_{\alpha_c^-}[h||f]<\infty$, depending on the densities $h$ and $f$, and such an example is given in Section \ref{sec:expGauss} below. Note that both critical values $\alpha_c^{+}$ and $\alpha_{c}^{-}$ can be finite or infinite.

\medskip

We give next a number of elementary properties of the family of transformations $\mathfrak A^{(\pdfr)}_\alpha$. The first property expresses the transformation $\mathfrak{A}_{\alpha}^{(h)}$ in terms of a composition of relative differential-escort transformations.
\begin{proposition}\label{prop:relative_escort}
For any probability density functions $f$ and $h$ satisfying \eqref{cond:support} and $\alpha\in\Rset$ such that \eqref{cond:KL} is fulfilled, we have
\begin{equation}\label{eq:appr_comp}
\mathfrak{A}^{(\pdfr)}_\alpha[f]=\mathfrak{R}_1^{-1,[h]}[\mathfrak{R}_{\alpha}^{[h]}[f]]=\left(\mathfrak{R}_1^{-1,[h]}\circ\mathfrak E_\alpha\circ\mathfrak{R}_{1}^{[h]}\right)[f],
\end{equation}
where the notations with $\mathfrak{R}$ indicate the relative differential-escort transformation and its inverse introduced in Definitions \ref{def:transf} and \ref{def:inverse}.
\end{proposition}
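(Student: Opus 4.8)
The plan is to prove the two equalities in \eqref{eq:appr_comp} in turn, treating the first one as the substantial part and deducing the second from the composition rule \eqref{eq:comp_escort}. For the first equality I would explicitly compose the two relative differential-escort maps and check, by tracking the change of variables, that the outcome reproduces \eqref{eq:tilderho} together with its defining reparametrization \eqref{eq:y(x)}. First I would set $g:=\mathfrak{R}_\alpha^{[h]}[f]$ and read off its expression from Definition \ref{def:transf}: it is $g(z)=(f(x)/h(x))^\alpha$, where the auxiliary variable satisfies $z'(x)=f(x)^{1-\alpha}h(x)^\alpha$, so that under the canonical normalization $z(x_i)=0$ one has $z(x)=\int_{x_i}^x f^{1-\alpha}h^\alpha\,dt=K(1-\alpha;x)$ in the notation of \eqref{eq:Kx}. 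Evaluating at $x=x_f$ and using the elementary identity $K_{1-\alpha}[f||h]=K_\alpha[h||f]$ (immediate from \eqref{def:Renyiexp}) shows that $g$ is supported on $[0,r]$ with $r=K_\alpha[h||f]$. This is exactly where hypothesis \eqref{cond:KL} enters: it guarantees that $g$ is compactly supported, so that the inverse $\mathfrak{R}_1^{-1,[h]}$ of Definition \ref{def:inverse} (whose $\alpha=1$ branch requires a compactly supported argument) is applicable and that the scaling constant there equals $r(1,g)=r$.

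Next I would apply $\mathfrak{R}_1^{-1,[h]}$ to $g$ according to the $\alpha=1$ branch of Definition \ref{def:inverse}. Writing the output as a function of a variable $\eta$, the auxiliary change of variable $\zeta'(\eta)=h(\eta)$ gives $\zeta(\eta)=H(\eta)$, so that $\mathfrak{R}_1^{-1,[h]}[g](\eta)=\mathcal{S}_r[g](H(\eta))\,h(\eta)=r\,g(r H(\eta))\,h(\eta)$. The decisive step is to unwind the inner evaluation: since $g(z)=(f(x)/h(x))^\alpha$ with $z=K(1-\alpha;x)$, the requirement $z=r H(\eta)$ forces $K(1-\alpha;x)/K_\alpha[h||f]=H(\eta)$, i.e. $\eta=H^{-1}(K(1-\alpha;x)/K_\alpha[h||f])$, which is precisely the map $y(x)$ of \eqref{eq:y(x)}. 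Hence $\eta=y(x)$ and $x=x(y)$, and substituting back yields $\mathfrak{R}_1^{-1,[h]}[g](y)=K_\alpha[h||f]\,(f(x(y))/h(x(y)))^\alpha\,h(y)$, which is verbatim \eqref{eq:tilderho}. This proves the first equality.

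For the second equality I would simply invoke \eqref{eq:comp_escort}, which with the inner transformation taken to be $\mathfrak{R}_1^{[h]}$ gives $\mathfrak E_\alpha\mathfrak{R}_1^{[h]}[f]=\mathfrak{R}_{\alpha}^{[h]}[f]$; composing with $\mathfrak{R}_1^{-1,[h]}$ on the left then gives $\mathfrak{R}_1^{-1,[h]}\circ\mathfrak E_\alpha\circ\mathfrak{R}_1^{[h]}=\mathfrak{R}_1^{-1,[h]}\circ\mathfrak{R}_\alpha^{[h]}$, as required.

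I expect the main obstacle to be the bookkeeping of the nested reparametrizations rather than any deep analytic difficulty: one must follow three successive changes of variable --- the escort map $z=K(1-\alpha;x)$, the rescaling by $r$, and the inverse-step map $\zeta=H$ --- and verify that their composition collapses exactly onto the single map \eqref{eq:y(x)}. The one genuinely algebraic ingredient is the identity $K_{1-\alpha}[f||h]=K_\alpha[h||f]$, which is what makes the support length of $g$ coincide with the normalizing prefactor $K_\alpha[h||f]$ appearing in \eqref{eq:tilderho}; keeping the canonical normalizations consistent (so that $\mathfrak{R}_\alpha^{[h]}[f]$ is supported on an interval starting at $0$) is what makes all the translation-ambiguous pieces line up.
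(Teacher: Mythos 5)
Your proposal is correct and follows essentially the same route as the paper's proof: both apply the $\alpha=1$ branch of Definition \ref{def:inverse} to $\mathfrak{R}_\alpha^{[h]}[f]$, identify the support length $r(1,\cdot)=K_\alpha[h||f]$ via \eqref{cond:KL} and the identity $K_{1-\alpha}[f||h]=K_\alpha[h||f]$, and unwind the nested changes of variable to recover \eqref{eq:y(x)} and \eqref{eq:tilderho}. If anything, your explicit tracking of the reparametrizations (where the paper appeals to ``conservation of the measure'' for $x(ry(z))=x(z)$) and your explicit derivation of the second equality from \eqref{eq:comp_escort} make the argument slightly more complete than the paper's own write-up.
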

\begin{proof}
Let us first note that the condition \eqref{cond:KL} gives that
$$
\int_{x_i}^{x_f}h(t)^{\alpha}f(t)^{1-\alpha}\,dt<\infty,
$$
which implies that $\mathfrak{R}_{\alpha}^{[h]}[f]$ is a compactly supported density, according to Eq. \eqref{eq:transf}, and in fact the length of its support is given exactly by $K_{\alpha}[h||f]$. We thus deduce that the inverse $\mathfrak{R}_1^{-1,[h]}$ can be applied (according to \eqref{eq:inverse}) and the right-hand side of \eqref{eq:appr_comp} is well defined. In order to prove the equality, we start from the right-hand side and employ \eqref{eq:inverse} to find
\label{key}\begin{equation}\label{eq:interm7}
\begin{split}
\mathfrak{R}_1^{-1,[h]}[\mathfrak{R}_{\alpha}^{[h]}[f(x)]](z)&=\mathfrak{R}_1^{-1,[h]}[f_{\alpha}^{[h]}(y)](z)\\
&=S_{r(1,f_{\alpha}^{[h]})}[f_{\alpha}^{[h]}](y(z))h(z)\\
&=r(1,f_{\alpha}^{[h]})f_{\alpha}^{[h]}\bigg(r(1,f_{\alpha}^{[h]})y(z)\bigg)h(z).
\end{split}
\end{equation}
Since $r(1,g)$ designs the length of the support of the probability density function $g$, it follows that
\begin{equation}\label{eq:interm8}
r(1,f_{\alpha}^{[h]})=\int_{x_i}^{x_f}h(t)^{\alpha}f(t)^{1-\alpha}\,dt=K_{\alpha}[h||f].
\end{equation}
Moreover, we deduce from Definition \ref{def:transf} that
\begin{equation}\label{eq:interm9}
f_{\alpha}^{[h]}(ry(z))=\left(\frac{f(x(ry(z)))}{h(x(ry(z)))}\right)^{\alpha}, \quad r=r(1,f_{\alpha}^{[h]}).
\end{equation}
Since it is easy to see that $x(ry(z))=x(z)$ by conservation of the measure, by inserting \eqref{eq:interm8} and \eqref{eq:interm9} into \eqref{eq:interm7} we obtain that
$$
\mathfrak{R}_1^{-1,[h]}[\mathfrak{R}_{\alpha}^{[h]}[f(x)]](z)=K_{\alpha}[h||f]\left(\frac{f(x(z))}{h(x(z))}\right)^{\alpha}h(z),
$$
thus obtaining exactly the formula in \eqref{eq:tilderho} in the right-hand side and completing the proof.
\end{proof}
The algebraic representation established in \eqref{eq:appr_comp} will be very useful in the sequel. The next property is related to the composition of two different divergence transformations and reveals a group structure.
\begin{proposition}[Composition and inverse element]\label{prop:group}
Let $\alpha$, $\beta\in\Rset$. Then we have
\begin{equation}\label{eq:group}
\mathfrak{A}_{\beta}^{(h)}\circ \mathfrak{A}_{\alpha}^{(h)}=\mathfrak{A}_{\beta\alpha}^{(h)}.
\end{equation}
In particular, for any $\alpha\neq0$, the $\alpha$-divergence transformation is invertible and
\begin{equation}\label{eq:inverse_appr}
\bigg(\mathfrak{A}_{\alpha}^{(h)}\bigg)^{-1}=\mathfrak{A}_{\frac{1}{\alpha}}^{(h)}.
\end{equation}
\end{proposition}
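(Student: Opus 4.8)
The plan is to prove the composition law \eqref{eq:group} by leveraging the algebraic representation established in Proposition \ref{prop:relative_escort}, rather than manipulating the explicit integral formula \eqref{eq:tilderho} directly. The key observation is that \eqref{eq:appr_comp} expresses each divergence transformation as a conjugation of the standard differential-escort transformation by the relative differential-escort transformation and its inverse, namely $\mathfrak{A}_{\alpha}^{(h)} = \mathfrak{R}_1^{-1,[h]}\circ\mathfrak{E}_\alpha\circ\mathfrak{R}_1^{[h]}$. Conjugation turns composition of the inner operators into composition of the conjugated operators, so the group law for the $\mathfrak{A}$-family should reduce to the group law for the escort family $\{\mathfrak{E}_\alpha\}$.

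\smallskip

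First I would write out the composition using the conjugation formula:
\begin{equation}\label{eq:plan_comp}
\mathfrak{A}_{\beta}^{(h)}\circ\mathfrak{A}_{\alpha}^{(h)}=\left(\mathfrak{R}_1^{-1,[h]}\circ\mathfrak{E}_\beta\circ\mathfrak{R}_1^{[h]}\right)\circ\left(\mathfrak{R}_1^{-1,[h]}\circ\mathfrak{E}_\alpha\circ\mathfrak{R}_1^{[h]}\right).
\end{equation}
The two inner factors $\mathfrak{R}_1^{[h]}$ and $\mathfrak{R}_1^{-1,[h]}$ that meet in the middle cancel, being mutually inverse, leaving $\mathfrak{R}_1^{-1,[h]}\circ\mathfrak{E}_\beta\circ\mathfrak{E}_\alpha\circ\mathfrak{R}_1^{[h]}$. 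Then I would invoke the group property of the standard differential-escort transformations, $\mathfrak{E}_\beta\circ\mathfrak{E}_\alpha=\mathfrak{E}_{\beta\alpha}$, which is the $h\equiv 1$ specialization of the relation \eqref{eq:comp_escort} recalled in the excerpt (indeed $\mathfrak{E}_\beta\mathfrak{R}_\alpha^{[h]}=\mathfrak{R}_{\alpha\beta}^{[h]}$ reduces to the escort group law when $h\equiv 1$ so that $\mathfrak{R}_\alpha^{[h]}=\mathfrak{E}_\alpha$). This collapses the middle to $\mathfrak{E}_{\beta\alpha}$, and reading off the conjugation representation once more yields $\mathfrak{R}_1^{-1,[h]}\circ\mathfrak{E}_{\beta\alpha}\circ\mathfrak{R}_1^{[h]}=\mathfrak{A}_{\beta\alpha}^{(h)}$, which is exactly \eqref{eq:group}. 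For the inverse statement \eqref{eq:inverse_appr}, I would set $\beta=1/\alpha$ so that $\beta\alpha=1$, and then note that $\mathfrak{A}_1^{(h)}$ is the identity (this follows since $\mathfrak{E}_1$ is the identity on densities, or directly from \eqref{eq:tilderho} together with $K_1[h\|f]=1$), whence $\mathfrak{A}_{1/\alpha}^{(h)}$ is a two-sided inverse of $\mathfrak{A}_\alpha^{(h)}$.

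\smallskip

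The main obstacle I anticipate is not the formal algebra, which is a short cancellation, but rather the careful bookkeeping of domains and well-definedness of the intermediate operators. Specifically, the inverse relative differential-escort $\mathfrak{R}_1^{-1,[h]}$ is only defined on compactly supported densities, and the cancellation $\mathfrak{R}_1^{[h]}\circ\mathfrak{R}_1^{-1,[h]}=\mathbb{I}$ in the middle of \eqref{eq:plan_comp} must be justified on the appropriate class, taking into account that these transformations are defined only up to a translation and a scaling (as emphasized in Definitions \ref{def:transf} and \ref{def:inverse}). I would therefore need to verify that the integrability condition \eqref{cond:KL} holds at $\alpha$, at $\beta\alpha$, and in the intermediate steps so that every transformation in the chain is legitimately applied; the key quantitative input, already noted in the proof of Proposition \ref{prop:relative_escort}, is that the support length of $\mathfrak{R}_\alpha^{[h]}[f]$ equals $K_\alpha[h\|f]$, which ensures the relevant densities are compactly supported whenever the finiteness hypothesis is met. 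As an alternative or confirmatory route, I could verify the composition law by a direct but routine computation from \eqref{eq:tilderho}, tracking how the change of variable $y(x)$ and the prefactor $K_\alpha[h\|f]$ compose; this would sidestep the domain subtleties of the inverse operator but at the cost of a more laborious calculation, and I expect the conjugation argument to be cleaner provided the well-definedness caveats are handled with care.
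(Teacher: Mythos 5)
Your proof is correct and follows essentially the same route as the paper: the paper likewise composes the conjugation representations from Proposition \ref{prop:relative_escort}, cancels the middle pair $\mathfrak{R}_1^{[h]}\circ\mathfrak{R}_1^{-1,[h]}=\mathbb{I}$, and invokes the escort composition relation \eqref{eq:comp_escort} to collapse the result to $\mathfrak{A}_{\beta\alpha}^{(h)}$. The only cosmetic difference is that the paper applies \eqref{eq:comp_escort} directly in the form $\mathfrak{E}_{\alpha}\circ\mathfrak{R}_{\beta}^{[h]}=\mathfrak{R}_{\alpha\beta}^{[h]}$ rather than through the pure escort group law $\mathfrak{E}_{\beta}\circ\mathfrak{E}_{\alpha}=\mathfrak{E}_{\beta\alpha}$, and your treatment of the inverse via $\mathfrak{A}_1^{(h)}=\mathbb{I}$ is slightly more explicit than the paper's, which leaves it implicit.
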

Before going to the proof, let us stress here that, if restricting ourselves to nonzero parameters, we have actually introduced a \emph{group of transformations}.
\begin{proof}
We compute the composition $\mathfrak A_\alpha^{(h)}\circ \mathfrak A_\beta^{(h)}$ by employing \eqref{eq:appr_comp} and \eqref{eq:comp_escort}. We thus have
\begin{equation*}
\begin{split}
\mathfrak{A}_{\alpha}^{(h)}\circ\mathfrak{A}_{\beta}^{(h)}&=\mathfrak{R}_{1}^{-1,[h]}\circ\mathfrak{R}_{\alpha}^{[h]}\mathfrak{R}_{1}^{-1,[h]}\circ\mathfrak{R}_{\beta}^{[h]}\\
&=\mathfrak{R}_{1}^{-1,[h]}\circ\mathfrak{E}_{\alpha}\mathfrak{R}_{1}^{[h]}\circ\mathfrak{R}_{1}^{-1,[h]}\circ\mathfrak{R}_{\beta}^{[h]}\\
&=\mathfrak{R}_{1}^{-1,[h]}\circ\mathfrak{R}_{\alpha\beta}^{[h]}=\mathfrak{A}_{\alpha\beta}^{(h)},
\end{split}
\end{equation*}
as claimed.
\end{proof}
The following property expresses the behavior of the divergence transformation with respect to a change of scale.
\begin{proposition}[Scaling changes]\label{prop:scaling}
Let $\alpha\in\mathbb{R}$, $a\in\mathbb{R}^{+}$ and $f$, $h$ be two probability density functions such that \eqref{cond:support} and \eqref{cond:KL} are fulfilled. Consider the scaled density functions,
$$
f_a(x):=af(ax) \quad \text{and} \quad h_a(x):=ah(ax)\, \quad x\in\Omega_a:=\left(\frac{x_i}{a},\frac{x_f}{a}\right).
$$
Then,
\begin{equation}\label{eq:scaling}
\mathfrak{A}_{\alpha}^{(h_a)}[f_a(x)](y) = \left(\mathfrak{A}_{\alpha}^{(h)}[f(x)](y)\right)_a\,, \quad a>0.
\end{equation}
\end{proposition}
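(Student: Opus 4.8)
The plan is to prove \eqref{eq:scaling} by a direct computation, tracking how each of the three ingredients of the defining formula \eqref{eq:tilderho} --- the R\'enyi exponent $K_\alpha[h||f]$, the change of variable $x(y)$, and the density ratio $(f/h)^\alpha$ together with the weight $h(y)$ --- behaves under the scaling $g\mapsto g_a$, where $g_a(x)=ag(ax)$.

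First I would record the scale invariance of the R\'enyi exponent. Substituting $u=at$ in the defining integral \eqref{def:Renyiexp} applied to the scaled densities and using $h_a(t)^{\alpha}f_a(t)^{1-\alpha}=a^{\alpha+(1-\alpha)}h(at)^{\alpha}f(at)^{1-\alpha}=a\,h(at)^{\alpha}f(at)^{1-\alpha}$, the extra factor $a$ is exactly cancelled by the Jacobian $dt=du/a$, whence $K_\alpha[h_a||f_a]=K_\alpha[h||f]$. The same substitution applied to the partial integral \eqref{eq:Kx} yields $K^{(a)}(1-\alpha;x)=K(1-\alpha;ax)$, where the superscript indicates the quantity built from $f_a,h_a$, and applied to the cumulative function \eqref{eq:F0} it gives $H_a(y)=H(ay)$, hence $H_a^{-1}(v)=a^{-1}H^{-1}(v)$.

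Next I would deduce the covariance of the change of variable. Combining the three identities above in the definition \eqref{eq:y(x)} of $y(x)$ for the scaled problem, the scale-invariant denominator $K_\alpha[h||f]$ factors out and I obtain $y_a(x)=a^{-1}y(ax)$; inverting this relation gives $x_a(y)=a^{-1}x(ay)$ for the inverse change of variable. Finally, I would insert these into \eqref{eq:tilderho}: using $f_a(z)/h_a(z)=f(az)/h(az)$ and evaluating at $z=x_a(y)=a^{-1}x(ay)$, so that $az=x(ay)$, together with $h_a(y)=ah(ay)$ and the invariance $K_\alpha[h_a||f_a]=K_\alpha[h||f]$, the right-hand side of \eqref{eq:tilderho} for the scaled data becomes $a\,K_\alpha[h||f]\,\bigl(f(x(ay))/h(x(ay))\bigr)^{\alpha}h(ay)=a\,\mathfrak{A}_\alpha^{(h)}[f](ay)$, which is precisely $\bigl(\mathfrak{A}_\alpha^{(h)}[f]\bigr)_a(y)$, i.e.\ \eqref{eq:scaling}.

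The computation presents no conceptual obstacle; the only point requiring care is the exponent bookkeeping --- checking that the density factors $a^{\alpha}$ and $a^{1-\alpha}$ combine with the Jacobian $a^{-1}$ to leave $K_\alpha$ invariant, and that the homogeneity degrees conspire so that the argument of $H^{-1}$ is unchanged, which is exactly what makes the clean covariance $y_a(x)=a^{-1}y(ax)$ hold. An alternative route would exploit the algebraic representation \eqref{eq:appr_comp} once the scaling behaviour of the relative differential-escort transformation and its inverse is known, but since that behaviour is not recorded in the preliminaries, the direct verification above is the more self-contained option.
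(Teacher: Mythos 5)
Your proposal is correct and follows essentially the same route as the paper: a direct computation showing that $K_\alpha[h_a||f_a]=K_\alpha[h||f]$ by the substitution $u=at$, that the change of variable satisfies the covariance $a\,x_a(y)=x(ay)$ (the paper obtains this by matching the defining integral equations, you by the explicit relations $H_a(y)=H(ay)$ and $K^{(a)}(1-\alpha;x)=K(1-\alpha;ax)$, which is the same fact), and then substituting into \eqref{eq:tilderho} to get $a\,\mathfrak{A}_\alpha^{(h)}[f](ay)$. The only difference is presentational, not mathematical.
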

\begin{proof}
We obtain from the definition \eqref{eq:tilderho} that
	\begin{align}\nonumber\label{eqproof0}
		\mathfrak A_{\alpha}^{(h_a)}[f_a(x)](y) & = f_{a,\alpha}^{(h_a)}(y) = K_\alpha[\pdfr_a|| f_a]\left[ \frac{f_a(x(y)) }{\pdfr_a(x(y)) }\right]^{\alpha} h_a(y) \\
		&= K_\alpha[\pdfr_a|| f_a]\left[ \frac{af(ax(y)) }{a\pdfr(ax(y)) }\right]^{\alpha} a\pdfr(ay),
\end{align}
where
the change of variable $y=y(x)$ satisfies
	\begin{equation}\label{eqproof}
	\int_{x_i/a}^{y} h_a(u)\, du=\frac{\int_{x_i/a}^{x} h_a^{\alpha}(t) f_a^{1-\alpha}(t)\, dt   }{K_\alpha[\pdfr_a||\pdf_a] }.
	\end{equation}
Computing the factors of the right hand side  of the  Eq.~\eqref{eqproof0}, we find on the one hand that
	\begin{align*}
		K_\alpha[\pdfr_a|| f_a] &= \int_{\Omega_a} h_a^{\alpha}(x)f_a^{1-\alpha}(x)\, dx = \int_{\Omega_a} a^{\alpha}h^{\alpha}(ax)a^{1-\alpha}f^{1-\alpha}(ax)\, dx  \\
		& =\int_{\Omega_a} h^{\alpha}(ax) f^{1-\alpha}(ax)\, d(ax) =  \int_{\Omega} h^{\alpha}(t) f^{1-\alpha}(t)\, d(t) = 	K_\alpha[\pdfr|| f]\,.
	\end{align*}
On the other hand, the change of variable in Eq.~\eqref{eqproof} can be written as
	\begin{equation}\label{eqproof1}
\int_{x_i}^{ay} h(\overline u)\, d\overline u = \frac{ K(1-\alpha;ax)}{K_\alpha[\pdfr||\pdf]} = \frac{\int_{x_i}^{ax}  h^{\alpha}( \overline t) f^{1-\alpha}(\overline t)\, d\overline t   }{K_\alpha[\pdfr||\pdf] }.
	\end{equation}
	Adopting the notation $\overline y=\overline y(\overline x)$ for the change of variable given by
	\[
	\int_{x_i}^{\overline y} h(\overline  u)\, d\overline u=\frac{\int_{x_i}^{\overline x} h^{\alpha}(\overline t) f^{1-\alpha}(\overline t)\, d\overline t   }{K_\alpha[\pdfr||\pdf] },
	\]
	 the Eq.~\eqref{eqproof1} implies that $a x(y)=\overline x(ay)$. It thus follows that
	\begin{align*}
	\mathfrak A_\alpha^{(h_a)}[f_a(x)](y) &= a K_\alpha[\pdfr|| f]\left[ \frac{f(ax(y)) }{\pdfr(ax(y)) }\right]^{\alpha} \pdfr(ay) = 	\ a K_\alpha[\pdfr|| f]\left[ \frac{f(\overline x(ay)) }{\pdfr(\overline x(ay)) }\right]^{\alpha} \pdfr(ay) \\
	& = a\, \mathfrak A_\alpha^{(h)}[f(\overline x)](ay) =  \left(\mathfrak A_\alpha^{(h)}[f(\overline x)](y)\right)_a= \left(\mathfrak A_\alpha^{(h)}[f](y)\right)_a,
	\end{align*}
as claimed.
\end{proof}

\noindent \textbf{Remark.} As we have seen in the previous results, the class of divergence transformations conserves both the support of the density and the measure in each interval. This is the role of $K_\alpha[\pdfr||\pdf]$; indeed, without its presence in \eqref{eq:y(x)}, the transformation would be well-defined only for $\alpha\in[0,1]$, since for $\alpha>1$ there is no $y_f\in\mathbb R$ such that
\begin{equation*}
\int_{x_i}^{y_f} \pdfr(s)ds=\int_{x_i}^{x_f} [\pdf(x)]^{1-\alpha}\,[\pdfr(t)]^\alpha\,dt=\K{f}{h}{\alpha}>1.
\end{equation*}
In the next result we calculate the R\'enyi and Kullback-Leibler divergences of a divergence-transformed density.
\begin{proposition}\label{prop:KL}
Let $\pdfr$ and $\pdf$ be two probability densities satisfying \eqref{cond:support} and let $\alpha$, $\xi\in\Rset$. Let $\Appr{\pdfr}{\alpha}$ be the $\alpha$-transformed density of $\pdf$. Then, the R\'enyi divergence $D_\xi[\Appr{\pdfr}{\alpha}||\pdfr]$ satisfies
\begin{equation}\label{eqlemma:appr_diver}
D_\xi[\Appr{\pdfr}{\alpha}||\pdfr]=\alpha\, D_{\xi_\alpha}[\pdf||\pdfr]+(\alpha-1)D_\alpha[h||f],\quad \xi_\alpha=1+\alpha(\xi-1).
\end{equation}
The latter equation can be written equivalently as
\begin{equation}\label{eqlemma:appr_diver2}
D_\xi[\Appr{\pdfr}{\alpha}||\pdfr]=\alpha\,(D_{\xi_\alpha}[\pdf||\pdfr]-D_{1-\alpha}[f||h]).
\end{equation}
\end{proposition}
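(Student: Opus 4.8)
The plan is to work with the exponential form $K_\xi$ rather than with $D_\xi$ directly, since $K_\xi[\,\cdot\,||\pdfr]$ is an honest integral and turns the whole statement into a single change of variables. Recalling from \eqref{def:Renyiexp} that $K_\xi[\Appr{\pdfr}{\alpha}||\pdfr]=\int_\Omega [\Appr{\pdfr}{\alpha}(y)]^{\xi}[\pdfr(y)]^{1-\xi}\,dy$, I would first substitute the explicit expression \eqref{eq:tilderho} for $\Appr{\pdfr}{\alpha}(y)$. Writing $C:=K_\alpha[\pdfr||\pdf]$ and $x=x(y)$, the integrand simplifies to
\begin{equation*}
[\Appr{\pdfr}{\alpha}(y)]^{\xi}[\pdfr(y)]^{1-\xi}=C^{\xi}\left(\frac{\pdf(x)}{\pdfr(x)}\right)^{\alpha\xi}\pdfr(y),
\end{equation*}
because the two powers of $\pdfr(y)$ collapse to a single factor $\pdfr(y)$.

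The key step is the change of variables from $y$ to $x$. Here I would invoke \eqref{eq:dem_lemma3_1}, already established when proving that $\Appr{\pdfr}{\alpha}$ is a density, which reads $\pdfr(y)\,y'(x)=C^{-1}[\pdf(x)]^{1-\alpha}[\pdfr(x)]^{\alpha}$, i.e.\ $\pdfr(y)\,dy=C^{-1}\pdf(x)^{1-\alpha}\pdfr(x)^{\alpha}\,dx$. Substituting and collecting the powers of $\pdf$ and $\pdfr$, the exponents combine to $\alpha\xi+(1-\alpha)$ on $\pdf$ and $-\alpha\xi+\alpha$ on $\pdfr$. Recognizing $\alpha\xi+1-\alpha=1+\alpha(\xi-1)=\xi_\alpha$ and $-\alpha\xi+\alpha=1-\xi_\alpha$, the integral collapses to $K_{\xi_\alpha}[\pdf||\pdfr]$, leaving the prefactor $C^{\xi-1}$. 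This produces the clean intermediate identity
\begin{equation*}
K_\xi[\Appr{\pdfr}{\alpha}||\pdfr]=K_\alpha[\pdfr||\pdf]^{\,\xi-1}\,K_{\xi_\alpha}[\pdf||\pdfr].
\end{equation*}

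From here the divergence formula follows by taking logarithms, dividing by $\xi-1$, and using $\log K_\xi[\,\cdot\,||\,\cdot\,]=(\xi-1)D_\xi[\,\cdot\,||\,\cdot\,]$ together with $\xi_\alpha-1=\alpha(\xi-1)$. This gives $D_\xi[\Appr{\pdfr}{\alpha}||\pdfr]=\log K_\alpha[\pdfr||\pdf]+\alpha D_{\xi_\alpha}[\pdf||\pdfr]=(\alpha-1)D_\alpha[\pdfr||\pdf]+\alpha D_{\xi_\alpha}[\pdf||\pdfr]$, which is exactly \eqref{eqlemma:appr_diver}. For the equivalent form \eqref{eqlemma:appr_diver2}, I would use the elementary symmetry $K_\alpha[\pdfr||\pdf]=\int_\Omega \pdfr^{\alpha}\pdf^{1-\alpha}=K_{1-\alpha}[\pdf||\pdfr]$, which upon taking logarithms yields $(\alpha-1)D_\alpha[\pdfr||\pdf]=-\alpha D_{1-\alpha}[\pdf||\pdfr]$ and hence the factored expression.

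The argument is essentially bookkeeping, so I do not anticipate a serious obstacle; the two points requiring care are verifying that the exponent arithmetic genuinely produces $\xi_\alpha$ (this is the sole place where the particular form of $\xi_\alpha$ is forced) and handling the borderline value $\xi=1$, where $D_\xi$ must be replaced by the Kullback--Leibler divergence and the stated identity is recovered by letting $\xi\to1$ in the intermediate $K_\xi$ identity.
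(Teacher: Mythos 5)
Your proposal is correct and follows essentially the same route as the paper's proof: both perform the change of variables $y\mapsto x$ via the measure-conservation identity \eqref{eq:dem_lemma3_1}, collect exponents to recognize $K_{\xi_\alpha}[\pdf||\pdfr]$ with the prefactor $K_\alpha[\pdfr||\pdf]^{\xi-1}$, and invoke the symmetry $K_\alpha[\pdfr||\pdf]=K_{1-\alpha}[\pdf||\pdfr]$ for the second identity. The only difference is presentational, since the paper works inside the logarithm from the start while you isolate the clean $K_\xi$ identity first and take logarithms at the end.
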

\begin{proof}
The definition of the R\'enyi divergence \eqref{eqdef:renyi_mutual} ensures that
\begin{equation}\label{eq:interm1}
D_\xi[\Appr{\pdfr}{\alpha}||\pdfr]=\frac1{\xi-1}\log\int_{\Omega} \Appr{\pdfr}{\alpha}(y) \left(\frac{\Appr{\pdfr}{\alpha}(y)}{\pdfr(y)}\right)^{\xi-1}\,dy.
\end{equation}
The change of variable given in \eqref{eq:y(x)}, the definition of the divergence transformation in \eqref{eq:tilderho} and \eqref{eq:interm1} give, after adopting the notation $\xi_\alpha=1+\alpha(\xi-1)$, that
\begin{equation*}
\begin{split}
D_\xi[\Appr{\pdfr}{\alpha}||\pdfr]&=\frac1{\xi-1}\log\int_{\Omega} \pdf(x) \left(\K{h}{f}{\alpha}\left(\frac{\pdf(x)}{\pdfr(x)}\right)^\alpha\right)^{\xi-1}\,dx\\
&=\frac1{\xi-1}\log\int_{\Omega} \pdf(x) \left(\frac{\pdf(x)}{\pdfr(x)}\right)^{\alpha(\xi-1)}\,dx\nonumber+\log\K{h}{f}{\alpha}\\
&=\frac{1}{\xi-1}\log K_{\xi_\alpha}[\pdf||\pdfr]+\log \K{\pdfr}{\pdf}{\alpha}\\
&=\alpha\, D_{\xi_\alpha}[\pdf||\pdfr]+\log \K{\pdfr}{\pdf}{\alpha},
\end{split}
\end{equation*}
proving Eq.~\eqref{eqlemma:appr_diver}.
Finally, from the fact that $K_{1-\alpha}[f||h]=K_{\alpha}[h||f]$ trivially follows Eq.~\eqref{eqlemma:appr_diver2}, which concludes the proof.
\end{proof}

\noindent \textbf{Remark.} Note that Eq.~\eqref{eqlemma:appr_diver} leads to
	
	\[
	D_\beta\left[h_{1-\alpha}^{(\mathfrak A, f)}\big|\big| f\right]=(1-\alpha)D_{\beta_{1-\alpha}}[h||f]-\alpha D_{1-\alpha}[f||h]
	\]
	where $\beta_{1-\alpha} = \alpha + \beta(1-\alpha)$ and, taking into account the obvious identity following from Eq. \eqref{eqdef:renyi_mutual}
$$
(\alpha-1)D_\alpha[h||f]=-\alpha D_{1-\alpha}[f||h],
$$
we obtain

$$
D_\beta\left[h_{1-\alpha}^{(\mathfrak A,f)}\big|\big| f\right]=(1-\alpha)\frac{\beta_{1-\alpha} }{1- \beta_{1-\alpha} } D_{1-\beta_{1-\alpha}}[f||h]+(\alpha-1)D_\alpha[h||f].
$$
Choosing $\beta$ and $\xi$ such that $\xi_{\alpha} = 1-\beta_{1-\alpha}$, we obtain that
\[
\beta = \frac{\alpha \, \xi }{\alpha -1} \quad \text{and} \quad (1-\alpha)\frac{\beta_{1-\alpha} }{1- \beta_{1-\alpha} } = \alpha \left( 1 - \frac{ \xi }{ \xi_{\alpha } }\right)\, .
\]
Then, we infer from Eq.~\eqref{eqlemma:appr_diver} that

\begin{equation*}
	D_\xi\left[f_{\alpha}^{(\mathfrak A, h)}\big|\big| h\right]=D_{\frac{\alpha\,\xi}{\alpha-1}}\left[h_{1-\alpha}^{(\mathfrak A, f)}\big|\big|f \right] + \alpha\frac{\xi}{\xi_{\alpha}}D_{\xi_{\alpha}}[f||h].
\end{equation*}

We conclude this section with the following important monotonicity result, showing how the divergence transformed densities approach the reference function $h$ as $\alpha\to0$ or, on the contrary, depart from $h$ as $|\alpha|$ increases.
\begin{theorem}\label{th:divergence}
Under the conditions of Proposition~\ref{prop:KL} and for $\xi\geqslant1$, the function $D_\xi[\Appr{\pdfr}{\alpha}||\pdfr]$ is increasing with respect to the parameter $\alpha\geqslant0$ and decreasing with respect to $\alpha\leqslant0$. Moreover, for $\alpha=0$ one has $D_\xi[\Appr{\pdfr}{0}||\pdfr]=0$ or, equivalently, $\Appr{\pdfr}{0}=\pdfr.$
\end{theorem}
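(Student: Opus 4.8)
The plan is to differentiate the closed-form expression for $D_\xi[\Appr{\pdfr}{\alpha}||\pdfr]$ furnished by Proposition~\ref{prop:KL} and to convert the monotonicity statement into the convexity of $\xi\mapsto\log\K{\pdf}{\pdfr}{\xi}$ recorded in Remark~\ref{remark:logK}. Fix $\xi\geqslant1$, write $G(\alpha):=D_\xi[\Appr{\pdfr}{\alpha}||\pdfr]$, and abbreviate $\Phi(\xi):=\log\K{\pdf}{\pdfr}{\xi}=(\xi-1)D_\xi[\pdf||\pdfr]$. Starting from \eqref{eqlemma:appr_diver} and using the identity $\K{\pdfr}{\pdf}{\alpha}=\K{\pdf}{\pdfr}{1-\alpha}$, the second summand becomes $(\alpha-1)D_\alpha[\pdfr||\pdf]=\log\K{\pdf}{\pdfr}{1-\alpha}=\Phi(1-\alpha)$, while the first summand rewrites as $\alpha D_{\xi_\alpha}[\pdf||\pdfr]=\Phi(\xi_\alpha)/(\xi-1)$ (since $\Phi(\xi_\alpha)=(\xi_\alpha-1)D_{\xi_\alpha}[\pdf||\pdfr]=\alpha(\xi-1)D_{\xi_\alpha}[\pdf||\pdfr]$). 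I would thus reduce $G$ to the single expression
$$
G(\alpha)=\frac{1}{\xi-1}\,\Phi(\xi_\alpha)+\Phi(1-\alpha),\qquad \xi_\alpha=1+\alpha(\xi-1).
$$
The limiting Kullback--Leibler case $\xi=1$ (where $\xi_\alpha\equiv1$ and $\Phi(\xi_\alpha)/(\xi-1)$ is read as $\alpha\Phi'(1)=\alpha D[\pdf||\pdfr]$) is covered by the same formula and the same computation below.

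The value at the origin is immediate: $\xi_0=1$ gives $\Phi(\xi_0)=\log\K{\pdf}{\pdfr}{1}=0$ and $\Phi(1)=0$, hence $G(0)=0$; equivalently, setting $\alpha=0$ in \eqref{eq:tilderho} returns $\Appr{\pdfr}{0}=\pdfr$ directly. For the monotonicity I differentiate in $\alpha$. Since $d\xi_\alpha/d\alpha=\xi-1$, the factors $\xi-1$ cancel and the chain rule yields the clean identity
$$
G'(\alpha)=\Phi'(\xi_\alpha)-\Phi'(1-\alpha).
$$
Because $\Phi$ is convex, $\Phi'$ is nondecreasing, so $\sign G'(\alpha)=\sign\bigl(\xi_\alpha-(1-\alpha)\bigr)$. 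A one-line computation gives $\xi_\alpha-(1-\alpha)=\alpha\xi$, and since $\xi\geqslant1>0$ this has the same sign as $\alpha$. Consequently $G'(\alpha)\geqslant0$ for $\alpha\geqslant0$ and $G'(\alpha)\leqslant0$ for $\alpha\leqslant0$, which is precisely the asserted monotonicity (a minimum at $\alpha=0$, consistent with the nonnegativity of the Rényi divergence).

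The main obstacle is purely one of domains. Remark~\ref{remark:logK} asserts convexity of $\Phi$ only on $[0,\infty)$, whereas the argument $1-\alpha$ becomes negative as soon as $\alpha>1$, so I need convexity and differentiability of $\Phi$ on the full interval swept by $\xi_\alpha$ and $1-\alpha$ over the admissible range of $\alpha$. I would remove this restriction by observing that
$$
\Phi(\xi)=\log\int_\Omega \pdfr(x)\,\Bigl(\tfrac{\pdf(x)}{\pdfr(x)}\Bigr)^{\xi}\,dx
$$
is the cumulant generating function of $\log(\pdf/\pdfr)$ under the probability measure $\pdfr\,dx$; by Hölder's inequality such a function is convex on the whole set of real $\xi$ where it is finite, and it is smooth on the interior of that set. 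Since the divergence transformation is only defined where the relevant quantities $\K{\pdf}{\pdfr}{\cdot}$ are finite, convexity and differentiability hold on exactly the range of parameters under consideration, so the sign argument goes through unchanged. This is the only step that demands slightly more than the statements quoted in Section~\ref{sec:prelim}.
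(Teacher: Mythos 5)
Your proof is correct, and it shares the paper's starting point (the identity of Proposition~\ref{prop:KL}, i.e.\ Eq.~\eqref{eq:interm4}) and its key ingredient (convexity of $\xi\mapsto\log K_\xi[f||h]$), but the execution is genuinely different. The paper differentiates \eqref{eq:interm4} \emph{twice} in $\alpha$, concludes that $\alpha\mapsto D_\xi[\Appr{\pdfr}{\alpha}||\pdfr]$ is convex term by term via Remark~\ref{remark:logK}, and then combines this with the observation that $\alpha=0$ is a critical point (a minimum, by nonnegativity of the R\'enyi divergence, giving $G'(0)=0$) to obtain the sign of $G'$ on each half-line. You instead fold both summands into a single function $\Phi(\xi)=\log K_\xi[f||h]$ using $K_\alpha[h||f]=K_{1-\alpha}[f||h]$, compute $G'(\alpha)=\Phi'(\xi_\alpha)-\Phi'(1-\alpha)$ in one differentiation, and read off the sign from monotonicity of $\Phi'$ together with $\xi_\alpha-(1-\alpha)=\alpha\xi$; this avoids the second derivative and the critical-point step entirely. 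Your route also buys something the paper's argument quietly needs but does not supply: for $\alpha<0$ (and for $\alpha>1$ in your formulation) the arguments $\xi_\alpha$, $\alpha$, or $1-\alpha$ leave the interval $[0,\infty]$ on which Remark~\ref{remark:logK} asserts convexity, so both proofs require convexity of $\Phi$ on its full domain of finiteness; your identification of $\Phi$ as the cumulant generating function of $\log(f/h)$ under $h\,dx$, convex wherever finite by H\"older and smooth on the interior of that set, closes this gap explicitly, whereas the paper leaves it implicit. Your explicit treatment of the limiting case $\xi=1$ (reading $\Phi(\xi_\alpha)/(\xi-1)$ as $\alpha\Phi'(1)=\alpha D[f||h]$) is likewise a small completion of a case the paper's formulas cover only in the limit.
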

\begin{proof}
Letting $\alpha=0$ in Eq.~\eqref{eqlemma:appr_diver}, we deduce that $D_\xi[\Appr{\pdfr}{0}||\pdfr]=0$ and thus $\Appr{\pdfr}{0}=\pdfr.$ The positivity of the R\'enyi divergence stated in \eqref{eq:pos_Renyi} ensures that $\alpha=0$ is a minimum point for the application $\alpha\mapsto D_\xi[\Appr{\pdfr}{\alpha}||\pdfr]$ and thus
\begin{equation}\label{eq:interm6}
\frac{\partial}{\partial\alpha}D_\xi[\Appr{\pdfr}{\alpha}||\pdfr]\Big|_{\alpha=0}=0.
\end{equation}
In order to establish the monotonicity with respect to $\alpha$, we start from the equality
\begin{equation}\label{eq:interm4}
D_\xi[\Appr{\pdfr}{\alpha}||\pdfr]=\frac{1}{\xi-1}\log K_{\xi_\alpha}[\pdf||\pdfr]+\log \K{\pdfr}{\pdf}{\alpha},
\end{equation}
established in the proof of Proposition \ref{prop:KL}. We observe that the right-hand side of \eqref{eq:interm4} is differentiable with respect to $\alpha$, as it readily follows from \eqref{eqdef:renyi_mutual} and \eqref{def:Renyiexp}. By taking derivatives with respect to $\alpha$ up to the second order in \eqref{eq:interm4}, we find
\begin{equation}\label{eq:interm5}
\frac{\partial^2(D_\xi[\Appr{\pdfr}{\alpha}||\pdfr])}{\partial\alpha^2}=(\xi-1)\frac{\partial^2}{\partial\xi_\alpha^2}\left(\log K_{\xi_\alpha}[f||h]\right)+\frac{\partial^2}{\partial\alpha^2}\left(\log K_{\alpha}[h||f]\right)\geq0,
\end{equation}
where the positivity follows from the convexity of the function $\log K_\alpha[f||h]$ with respect to $\alpha$, as mentioned in Remark~\ref{remark:logK}. We infer from \eqref{eq:interm5} that the first derivative with respect to $\alpha$ of the mapping $\alpha\mapsto D_\xi[\Appr{\pdfr}{\alpha}||\pdfr]$ is non-negative for any $\alpha>0$ and non-positive for any $\alpha<0$. Combining this fact with \eqref{eq:interm6}, we deduce that
$$
\frac{\partial}{\partial\alpha}D_\xi[\Appr{\pdfr}{\alpha}||\pdfr]\geqslant0, \quad {\rm for \ any} \ \alpha>0, \quad
\frac{\partial}{\partial\alpha}D_\xi[\Appr{\pdfr}{\alpha}||\pdfr]\leqslant0, \quad {\rm for \ any} \ \alpha<0,
$$
which is equivalent to the claimed monotonicity with respect to $\alpha$, completing the proof.
\end{proof}
\begin{corollary} Given $\xi\geqslant1$, we have
\begin{equation}
D_\xi[\Appr{\pdfr}{\alpha}||\pdfr]>D_\xi[\pdf||\pdfr],\quad {\rm for} \ \alpha>1;\qquad D_\xi[\Appr{\pdfr}{\alpha}||\pdfr]<D_\xi[\pdf||\pdfr],\quad {\rm for} \ \alpha\in(0,1).
\end{equation}
\end{corollary}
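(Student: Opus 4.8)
The plan is to read the statement off directly from the monotonicity established in Theorem~\ref{th:divergence}, once the untransformed divergence $D_\xi[\pdf||\pdfr]$ has been identified as a distinguished value of the family $\alpha\mapsto D_\xi[\Appr{\pdfr}{\alpha}||\pdfr]$. First I would evaluate the identity \eqref{eqlemma:appr_diver} at $\alpha=1$: there one has $\xi_1=1+1\cdot(\xi-1)=\xi$, while the coefficient $(\alpha-1)$ multiplying $D_\alpha[h||f]$ vanishes, so that
\begin{equation*}
D_\xi[\Appr{\pdfr}{1}||\pdfr]=D_\xi[\pdf||\pdfr].
\end{equation*}
In other words, the reference parameter $\alpha=1$ of the transformed family reproduces exactly the divergence of the original pair $(\pdf,\pdfr)$, so the corollary is just the comparison of a generic value of $\alpha$ against this reference.

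The second step is to invoke Theorem~\ref{th:divergence}, which guarantees that, for $\xi\geqslant1$, the map $\alpha\mapsto D_\xi[\Appr{\pdfr}{\alpha}||\pdfr]$ is increasing on $[0,\infty)$. Comparing the value at a generic $\alpha$ with the value at $\alpha=1$ then immediately yields $D_\xi[\Appr{\pdfr}{\alpha}||\pdfr]>D_\xi[\pdf||\pdfr]$ for $\alpha>1$ and $D_\xi[\Appr{\pdfr}{\alpha}||\pdfr]<D_\xi[\pdf||\pdfr]$ for $\alpha\in(0,1)$, which is precisely the assertion.

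The delicate point, and the one I expect to demand the most care, is upgrading the monotonicity to the \emph{strict} inequalities in the statement, since Theorem~\ref{th:divergence} by itself only furnishes a non-negative derivative on $(0,\infty)$. To settle this I would go back to the second-order identity \eqref{eq:interm5}: for $\xi\geqslant1$ the first summand is non-negative, whereas the second summand $\frac{\partial^2}{\partial\alpha^2}\log K_\alpha[h||f]$ is strictly positive by the strict convexity recalled in Remark~\ref{remark:logK}, valid whenever $\pdf\not\equiv\pdfr$ (the only nontrivial case, as otherwise both divergences vanish). Hence $\alpha\mapsto D_\xi[\Appr{\pdfr}{\alpha}||\pdfr]$ is strictly convex, so its first derivative is strictly increasing; combined with the vanishing of that derivative at $\alpha=0$, which follows from $\alpha=0$ being a minimum by the positivity \eqref{eq:pos_Renyi}, we obtain that the derivative is strictly positive for every $\alpha>0$. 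This forces strict monotonicity on $(0,\infty)$ and hence the strict inequalities, completing the argument.
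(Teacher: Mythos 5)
Your proof is correct and follows essentially the route the paper intends: the paper states this corollary without any separate proof, as an immediate consequence of Theorem~\ref{th:divergence} together with the observation that $\alpha=1$ reproduces the untransformed divergence (which you derive cleanly from \eqref{eqlemma:appr_diver}, rather than from the group law $\mathfrak{A}_1^{(h)}=\mathcal{I}$). Your additional step upgrading the theorem's non-strict monotonicity to the strict inequalities of the corollary—using that the second summand in \eqref{eq:interm5} is strictly positive by Remark~\ref{remark:logK} whenever $f\neq h$, so the derivative of $\alpha\mapsto D_\xi[\Appr{\pdfr}{\alpha}||\pdfr]$ is strictly increasing and hence strictly positive on $(0,\infty)$—is a genuine refinement that the paper glosses over, and it is exactly what the strict inequalities require; note only that when $f=h$ both sides vanish and the strict inequalities become equalities, so the corollary implicitly assumes $f\neq h$, a case you correctly flag as the trivial one.
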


\section{Measures of statistical complexity and monotonicity}\label{sec:stas_comp_mono}

As already commented in the Introduction, the so-called statistical complexity measures have been proposed as important tools in order to measure the degree of order and disorder of physical systems. This is because they satisfy a number of relevant mathematical properties such as the invariance under scaling trasnformations or the existence of a minimal bound, among others. An analysis of the most important complexity measures and the property of monotonicity can be found in \cite{Rudnicki2016}, one of the most celebrated being the LMC complexity measure (see \cite{LopezRuiz1995, LopezRuiz2005}), extended later to the LMC-R\'enyi complexity measure. A monotonicity property for the latter complexity measure, involving the differential-escort transformation, has been proved in \cite{Puertas2019}. Since the relative differential-escort transformation strongly generalizes the differential-escort one, it is natural to raise the question of proposing relative complexity measures and establishing monotonicity properties based on it. This is the goal of this section.

\medskip

\noindent \textbf{A relative LMC-R\'enyi complexity measure.} The first measure of complexity we propose is the following natural extension to the relative framework of the above mentioned LMC-R\'enyi complexity measure.
\begin{definition}\label{def:complex}
Let $f$ and $h$ be two probability density functions satisfying \eqref{cond:support} and let $(\lambda,\beta)\in\Rset^2$ such that $\lambda\neq1$, $\beta\neq1$. We define the \emph{relative LMC-R\'enyi complexity measure} by
\begin{equation}\label{eq:complex} C^{(D)}_{\lambda,\beta}[f||h]:=\frac{K_{\lambda}^{\frac{1}{1-\lambda}}[f||h]}{K_{\beta}^{\frac{1}{1-\beta}}[f||h]}=e^{D_{\beta}[f||h]-D_{\lambda}[f||h]}.
\end{equation}
\end{definition}

\noindent \textbf{Remark.} We deduce from Lemma \ref{lem:Renyi} that
	$$
	\frac{N_{\lambda}[f_1^{[h]}]}{N_{\beta}[f_1^{[h]}]}=C^{(D)}_{\lambda,\beta}[f||h],
	$$
	or equivalently
	$$
	C^{(D)}_{\lambda,\beta}[\mathfrak{R}^{-1,[h]}_{1}[f]||h]=\frac{N_{\lambda}[f]}{N_{\beta}[f]},
	$$
which motivates the name we gave to the complexity measure defined in \eqref{eq:complex}. Observe also that the final expression in Eq. \eqref{eq:complex} reminds thus of the LMC-R\'enyi complexity measure, see for example \cite[Section 4]{Puertas2019}.

A first significant property with respect to the divergence transformations is the following equality, which is derived from Eq.~\eqref{eqlemma:appr_diver} after simple algebraic simplifications:
\[
	C^{(D)}_{\lambda,\beta}\left[f^{(\mathfrak A,h)}_{\alpha}||h\right]=\left(C^{(D)}_{\lambda_\alpha,\beta_\alpha}\left[f||h\right]\right)^\alpha.
\]

We next establish the property of monotonicity of the relative complexity measure introduced in Definition \ref{def:complex} with respect to the divergence transformation. This property actually stems from the fact that the divergence transformation is a conjugation of the relative differential-escort and a standard differential-escort transformations, according to Proposition \ref{prop:relative_escort}.
\begin{theorem}\label{th:LMC}
In the same conditions as in Definition \ref{def:complex}, if $f$ and $h$ also satisfy the condition \eqref{cond:KL}, then for any $\gamma$, $\overline{\gamma}\in[0,\infty)$ with $\gamma<\overline{\gamma}$, we have
	\begin{equation}\label{eq:monot}
	C^{(D)}_{\lambda,\beta}\left[\mathfrak{A}^{(h)}_{\gamma}[f]||h\right]\leqslant C^{(D)}_{\lambda,\beta}\left[\mathfrak{A}^{(h)}_{\overline{\gamma}}[f]||h\right].
	\end{equation}
	In particular, for any $\gamma\in(0,1)$ we have
	\begin{equation}\label{eq:monot2}
	C^{(D)}_{\lambda,\beta}\left[\mathfrak{A}^{(h)}_{\gamma}[f]||h\right]\leq C^{(D)}_{\lambda,\beta}[f||h],
	\end{equation}
	while the inequality in \eqref{eq:monot2} is reversed for $\gamma\in(1,\infty)$. The equality is achieved if and only if $f=h$.
\end{theorem}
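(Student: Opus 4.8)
The plan is to reduce the whole statement to the convexity of $\xi\mapsto\log K_\xi[f||h]$ recorded in Remark~\ref{remark:logK}. Since $C^{(D)}_{\lambda,\beta}$ is positive and $\log$ is increasing, it suffices to show that
\[
G(\alpha):=\log C^{(D)}_{\lambda,\beta}[\mathfrak{A}^{(h)}_{\alpha}[f]||h]
\]
is non-decreasing on $[0,\infty)$ (the stated increasing behaviour corresponds to the ordering $\lambda<\beta$, for which the complexity measure is at least one; for $\lambda>\beta$ the same argument yields the reversed monotonicity). I would start from the identity
\[
C^{(D)}_{\lambda,\beta}[\mathfrak{A}^{(h)}_{\alpha}[f]||h]=\Big(C^{(D)}_{\lambda_\alpha,\beta_\alpha}[f||h]\Big)^{\alpha},\qquad \xi_\alpha=1+\alpha(\xi-1),
\]
recorded just before the theorem, which after taking logarithms and using \eqref{eq:complex} gives $G(\alpha)=\alpha\big(D_{\beta_\alpha}[f||h]-D_{\lambda_\alpha}[f||h]\big)$.

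The decisive step is to rewrite $G$ solely in terms of $\psi(\xi):=\log K_\xi[f||h]=(\xi-1)D_\xi[f||h]$. The prefactor $\alpha$ cancels against $\beta_\alpha-1=\alpha(\beta-1)$ (and similarly for the $\lambda$ term), so that
\[
G(\alpha)=\frac{\psi(\beta_\alpha)}{\beta-1}-\frac{\psi(\lambda_\alpha)}{\lambda-1}.
\]
Differentiating with the chain rule and using $d\beta_\alpha/d\alpha=\beta-1$, the constants cancel again and one is left with the clean expression
\[
G'(\alpha)=\psi'(\beta_\alpha)-\psi'(\lambda_\alpha).
\]
By Remark~\ref{remark:logK} the function $\psi$ is convex, hence $\psi'$ is non-decreasing; since $\alpha\ge0$ and $\beta>\lambda$ force $\beta_\alpha-\lambda_\alpha=\alpha(\beta-\lambda)\ge0$, we conclude $G'(\alpha)\ge0$. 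Thus $G$ is non-decreasing on $[0,\infty)$, which is exactly \eqref{eq:monot}.

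To obtain the particular cases \eqref{eq:monot2}, I would first check that $\mathfrak{A}^{(h)}_{1}=\mathbb I$: for $\alpha=1$ one has $K_1[h||f]=1$ and $K(0;x)=H(x)$, so the change of variable \eqref{eq:y(x)} collapses to $y(x)=x$ and \eqref{eq:tilderho} returns $f$ itself. Comparing the parameter $\gamma$ with the value $1$ through the monotonicity just established then yields \eqref{eq:monot2} for $\gamma\in(0,1)$ and its reverse for $\gamma\in(1,\infty)$. For the equality statement: if $f=h$ then $K_\xi\equiv1$, $\psi\equiv0$ and $C^{(D)}_{\lambda,\beta}\equiv1$, so equality holds trivially; conversely, when $f\neq h$ the strict convexity $\psi''>0$ of Remark~\ref{remark:logK} makes $\psi'$ strictly increasing, whence $G'(\alpha)>0$ for every $\alpha>0$ and therefore $G(\gamma)<G(\bar\gamma)$ whenever $0\le\gamma<\bar\gamma$.

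I expect the main difficulty to be bookkeeping rather than conceptual: one must verify carefully the two cancellations (of the factor $\alpha$ and of the constants $\beta-1$, $\lambda-1$) that collapse $G$ into a difference of values of the single convex function $\psi$, keep track of the ordering convention $\lambda<\beta$ fixing the sign of $G'$, and handle the degenerate point $\alpha=0$ where $\beta_0=\lambda_0=1$ and $G'(0)=0$. Conceptually, this monotonicity is the image, under the conjugation of Proposition~\ref{prop:relative_escort}, of the known monotonicity of the standard LMC--R\'enyi complexity under the differential-escort transformation; indeed, using the Remark after Definition~\ref{def:complex} one may rewrite $C^{(D)}_{\lambda,\beta}[\mathfrak{A}^{(h)}_{\alpha}[f]||h]=N_{\lambda}[f_\alpha^{[h]}]/N_{\beta}[f_\alpha^{[h]}]$, which provides an alternative route to the same conclusion.
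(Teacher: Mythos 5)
Your proof is correct, but it takes a genuinely different route from the paper's. The paper argues structurally: it uses Proposition \ref{prop:relative_escort} to write $\mathfrak{A}^{(h)}_{\gamma}=\mathfrak{R}^{-1,[h]}_{1}\circ\mathfrak{E}_{\gamma}\circ\mathfrak{R}^{[h]}_{1}$, identifies (via the unnumbered Remark after Definition \ref{def:complex}) the quantity $C^{(D)}_{\lambda,\beta}[\mathfrak{A}^{(h)}_{\gamma}[f]||h]$ with the standard LMC--R\'enyi ratio $N_{\lambda}/N_{\beta}$ evaluated at $\mathfrak{E}_{\gamma}[f_1^{[h]}]$, invokes the known monotonicity of that ratio under the differential-escort transformation from \cite{Puertas2019} to compare $\gamma$ with $1$, and then upgrades this to arbitrary $0\le\gamma<\overline{\gamma}$ through the group law of Proposition \ref{prop:group}; the equality case is inherited from the characterization $f_1^{[h]}=\mathcal U\Leftrightarrow f=h$. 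You instead give a self-contained analytic proof in the same spirit as the paper's proof of Theorem \ref{th:divergence}: starting from the identity $C^{(D)}_{\lambda,\beta}[\mathfrak{A}^{(h)}_{\alpha}[f]||h]=(C^{(D)}_{\lambda_\alpha,\beta_\alpha}[f||h])^{\alpha}$, your two cancellations (of $\alpha$ against $\beta_\alpha-1=\alpha(\beta-1)$, and of the constants under differentiation) are correct and collapse everything to $G(\alpha)=\psi(\beta_\alpha)/(\beta-1)-\psi(\lambda_\alpha)/(\lambda-1)$ and $G'(\alpha)=\psi'(\beta_\alpha)-\psi'(\lambda_\alpha)$ with $\psi(\xi)=\log K_{\xi}[f||h]$, so convexity of $\psi$ finishes the argument; your verification that $\mathfrak{A}^{(h)}_{1}=\mathbb{I}$ and your two-sided treatment of the equality case via strict convexity are also sound. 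What your route buys: independence from the external black-box result of \cite{Puertas2019} and from the machinery of $\mathfrak{R}^{-1,[h]}_{1}$, an explicit derivative formula quantifying the monotonicity, and a direct equality characterization; what the paper's route buys is the structural insight (conjugation plus group law) that motivates the whole section, a connection you yourself point out at the end. Two caveats, both minor and shared in some form with the paper's own exposition: (i) the ordering convention $\lambda<\beta$ must indeed be fixed, as you do, since swapping $\lambda$ and $\beta$ inverts the measure and reverses every inequality, and the paper leaves this implicit in Definition \ref{def:complex}; (ii) Remark \ref{remark:logK} asserts convexity of $\psi$ only on $[0,\infty]$, whereas for $\lambda<1$ (or $\beta<1$) and $\alpha$ large the arguments $\lambda_\alpha$ (or $\beta_\alpha$) become negative, so to make your argument airtight you should observe that $\xi\mapsto K_{\xi}[f||h]=\int_{\Omega} h\,e^{\xi\log(f/h)}\,dx$ is log-convex on all of $\Rset$ wherever finite, by H\"older's inequality, which is exactly the strength of convexity your derivative computation uses.
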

\begin{proof}
The proof is rather straightforward if we recall the monotonicity property of the LMC-R\'enyi complexity measure with respect to the differential-escort transformation in \cite[Section 4]{Puertas2019}. Indeed, choosing $\gamma>1$ and recalling \eqref{eq:appr_comp} and \eqref{eq:comp_escort}, we obtain
	\begin{equation}\label{eqproof101}
	\begin{split}
	C^{(D)}_{\lambda,\beta}\left[\mathfrak{R}^{-1,[h]}_{1}\mathfrak{E}_{\gamma}\mathfrak{R}^{[h]}_{1}[f]||h\right]&
	=\frac{N_{\lambda}\left[\mathfrak{E}_{\gamma}[f_1^{[h]}]\right]}{N_{\beta}\left[\mathfrak{E}_{\gamma}[f_1^{[h]}]\right]}
	\geq\frac{N_{\lambda}\left[f_1^{[h]}\right]}{N_{\beta}\left[f_1^{[h]}\right]}\\
	&=\frac{K_{\lambda}^{\frac{1}{1-\lambda}}[f||h]}{K_{\beta}^{\frac{1}{1-\beta}}[f||h]}=C^{(D)}_{\lambda,\beta}[f||h],
	\end{split}
	\end{equation}
where the inequality step follows from the choice $\gamma>1$. The proof is similar for $\gamma\in(0,1)$, while the general inequality \eqref{eq:monot} follows from the group structure (provided $\gamma\neq0$) of the divergence transformations with respect to their parameter established in Proposition \ref{prop:group}.
The equality in Eq~\eqref{eqproof101} holds when $f_1^{[h]}=\mathcal U$,  according to~\cite{Puertas2019}. The latter only happens when $f=h$, as trivially follows from Definition~\ref{def:transf}.
\end{proof}
Let us remark that, in the limiting cases $\gamma=1$ and $\gamma=0$, the conjugated transformation reduces to
$$
\mathfrak{R}^{-1,[h]}_{1}\mathfrak{E}_{1}\mathfrak{R}^{[h]}_{1}=\mathcal{I}, \quad \mathfrak{R}^{-1,[h]}_{1}\mathfrak{E}_{0}\mathfrak{R}^{[h]}_{1}[f]=h, \quad {\rm for \ any \ density} \ f,
$$
where $\mathcal{I}$ designs the identity operator; the second equality follows from the fact that $\mathfrak{E}_0$ gives as result a uniform (constant) density on its support, and $\mathfrak{R}^{-1,[h]}_{1}[1]=h$.

\medskip

\noindent \textbf{An application to the Gaussian density.} Let us consider $h(x)=G(x)=\frac{e^{-x^2}}{\sqrt\pi},\; x\in\Rset$. We obtain as a direct consequence of the order relation of the Rényi divergence measures that, for any $\xi\in(0,1)$,
	$$
	\frac{e^{D[f||G]}}{e^{D_\xi[f||G]}}\geqslant 1.
	$$
	On the one hand, we have that
	$$
	D[f||G]=\int_\Rset f(x) \log\left(f(x)\,\sqrt \pi e^{x^2}\right)\,dx=-S[f]+\langle x^2\rangle+\log\sqrt\pi.
	$$
	On the other hand
	$$
	e^{D_\xi[f||G]}=\left(\int_{\Rset} f^\xi(x)G^{1-\xi}(x)dx \right)^{\frac1{\xi-1}}=\sqrt\pi\left(\int_{\Rset} f^\xi(x)e^{-(1-\xi)x^2}dx \right)^{\frac1{\xi-1}}.$$
	We thus deduce from the previous equalities that
	\begin{equation}\label{eq:Gexample}
	\frac{e^{D[f||G]}}{e^{D_\xi[f||G]}}=\frac{e^{-S[f]}e^{\mu_2[f]}}{N_{\xi}^{[G]}[f]}\geqslant 1,
	\end{equation}
	where
	\begin{equation}\label{eq:Gdiv}
	N_\xi^{[G]}[f]:=\left(\int_\Rset f^\xi (x)e^{-(1-\xi) x^2}dx\right)^{\frac{1}{\xi-1}}.
	\end{equation}

As an example of an application of Theorems~\ref{th:divergence} and~\ref{th:LMC}, we remark that the measures defined in Eqs.~\eqref{eq:Gdiv} and ~\eqref{eq:Gexample} are, respectively, monotone with respect to the transformation $\mathfrak A^{(G)}_\gamma$ defined as
$$
\mathfrak A^{(G)}_\gamma[f](y)=C f(x(y))^\gamma e^{\gamma x(y)^2-y^2}, \quad C=\pi^{\frac{\gamma-1}2}\,K_\gamma[G||f],
$$
where the change of variable $y=y(x)$ is given by
$$
\text{erf}(y)=\frac2{C\sqrt{\pi}}\int_{-\infty}^x [f(t)]^{1-\gamma} e^{-\gamma t^2}\,dt-1.
$$ 

\medskip

\noindent \textbf{A relative Fisher measure of complexity.} The next measure of complexity we propose in this paper is constructed having as starting point the relative Fisher divergence introduced in \cite[Section 4.1]{IPT2025} and recalled in Definition \ref{def:relativeFD}. Before introducing its definition, we adopt for simplicity the following notation for some composed transformations that will be employed throughout this section. For any $\alpha$ and $\gamma\in\Rset$, we set
\begin{equation}\label{eq:not_comp}
\mathfrak{D}_{\gamma,\alpha}^{\mathfrak{R},[h]}:=\mathfrak{R}_{\alpha}^{-1,[h]}\circ\mathfrak{D}_{\gamma}\circ\mathfrak{R}_{\alpha}^{[h]}, \quad
\mathfrak{U}_{\gamma,\alpha}^{\mathfrak{R},[h]}:=\mathfrak{R}_{\alpha}^{-1,[h]}\circ\mathfrak{U}_{\gamma}\circ\mathfrak{R}_{\alpha}^{[h]},
\end{equation}
where $\mathfrak{D}_{\gamma}$ and $\mathfrak{U}_{\gamma}$ are the down and up transformations given in Definitions \ref{def:down} and \ref{def:up} respectively, noticing that the two composed transformations in \eqref{eq:not_comp} are mutually inverse, as a simple consequence of Proposition \ref{prop:inv}. We thus define, for $\lambda>\beta$, $\theta\in\Rset\setminus\{0\}$ and $f$, $h$ two differentiable probability density functions satisfying \eqref{cond:support} and $f\neq h$ in at least a set of positive measure,  the following measure of complexity
\begin{equation}\label{eq:comp_Fisher}
C_{\lambda,\beta,\theta}^{(\phi_{\rm rel})}[f||h]:=\frac{\phi_{\lambda,\theta}[f||h]}{\phi_{\beta,\theta}[f||h]},
\end{equation}
where $\phi_{a,b}[f||h]$ denotes the relative Fisher divergence defined in \eqref{eq:relativeFD}. In the previous notation, we prove the following monotonicity result.
\begin{theorem}\label{th:monot_Fisher}
Let $\lambda>\beta$, $\alpha\in(0,\infty)$ and $\gamma\in\Rset\setminus\{2\}$ and let $f$ and $h$ be two probability density functions satisfying \eqref{cond:support}, being differentiable on their support and with the following additional properties:
\begin{itemize}
  \item $\frac{f}{h}$ is a decreasing function on their common support $\Omega$.
  \item $\phi_{\frac{1}{\alpha},\alpha(2-\gamma)}[f||h]<\infty$.
  \end{itemize}
Then, for any $\delta\in(0,1)$, the following inequality holds true:
\begin{equation}\label{eq:monot_Fisher}
C_{\lambda,\beta,\alpha(2-\gamma)}^{(\phi_{\rm rel})}[\mathfrak{U}_{\gamma,\alpha}^{\mathfrak{R},[h]}\circ\mathfrak{A}_{\delta}^{(h)}\circ\mathfrak{D}_{\gamma,\alpha}^{\mathfrak{R},[h]}[f]||h]^{2-\gamma}
\leq C_{\lambda,\beta,\alpha(2-\gamma)}^{(\phi_{\rm rel})}[f||h]^{2-\gamma}.
\end{equation}
The inequality sign is reversed in \eqref{eq:monot_Fisher} if $\delta>1$.
\end{theorem}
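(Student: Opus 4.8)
The plan is to strip away the relative structure by conjugating the entire composed map with the relative differential-escort $\mathfrak{R}_\alpha^{[h]}$, turning the left-hand side of \eqref{eq:monot_Fisher} into a plain (non-relative) LMC--Rényi complexity measure, and then to invoke the monotonicity of the latter under the ordinary differential-escort $\mathfrak{E}_\delta$ --- the same result of \cite{Puertas2019} used to prove Theorem~\ref{th:LMC}. Writing $T:=\mathfrak{U}_{\gamma,\alpha}^{\mathfrak{R},[h]}\circ\mathfrak{A}_{\delta}^{(h)}\circ\mathfrak{D}_{\gamma,\alpha}^{\mathfrak{R},[h]}$ and expanding by means of \eqref{eq:not_comp} and \eqref{eq:appr_comp}, I would first record the cancellation rules $\mathfrak{R}_\alpha^{[h]}\circ\mathfrak{R}_1^{-1,[h]}=\mathfrak{E}_\alpha$ and $\mathfrak{R}_1^{[h]}\circ\mathfrak{R}_\alpha^{-1,[h]}=\mathfrak{E}_{1/\alpha}$, both immediate from the factorization $\mathfrak{R}_\alpha^{[h]}=\mathfrak{E}_\alpha\circ\mathfrak{R}_1^{[h]}$ (a special case of \eqref{eq:comp_escort}). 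Via the group law $\mathfrak{E}_a\circ\mathfrak{E}_b=\mathfrak{E}_{ab}$ the central block then collapses,
\[
\mathfrak{R}_\alpha^{[h]}\circ\mathfrak{A}_{\delta}^{(h)}\circ\mathfrak{R}_\alpha^{-1,[h]}=\mathfrak{E}_\alpha\circ\mathfrak{E}_\delta\circ\mathfrak{E}_{1/\alpha}=\mathfrak{E}_\delta,
\]
so that cancelling $\mathfrak{R}_\alpha^{[h]}\circ\mathfrak{R}_\alpha^{-1,[h]}=\mathbb I$ gives $\mathfrak{R}_\alpha^{[h]}\circ T=\up{\gamma}\circ\mathfrak{E}_\delta\circ\down{\gamma}\circ\mathfrak{R}_\alpha^{[h]}$; applying $\down{\gamma}$ and using $\down{\gamma}\up{\gamma}=\mathbb I$ (Proposition~\ref{prop:inv}) yields the key identity
\[
\down{\gamma}\,\mathfrak{R}_\alpha^{[h]}\big[T[f]\big]=\mathfrak{E}_\delta\big[w\big],\qquad w:=\down{\gamma}\,\mathfrak{R}_\alpha^{[h]}[f].
\]

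Next I would translate the relative Fisher complexity into a Rényi complexity of $w$. Setting $\theta=\alpha(2-\gamma)$ and composing \eqref{eq:relFI} (second parameter $2-\gamma$) with \eqref{eq:EF} (parameter $\gamma$), one checks that $N_{1-p}\big[\down{\gamma}\,\mathfrak{R}_\alpha^{[h]}[\cdot]\big]=|\alpha|\,\phi_{p,\theta}[\cdot||h]^{\theta}$, with a prefactor independent of $p$; since it cancels in any ratio,
\[
C_{\lambda,\beta,\theta}^{(\phi_{\rm rel})}[\cdot||h]^{\theta}=\frac{N_{1-\lambda}\big[\down{\gamma}\,\mathfrak{R}_\alpha^{[h]}[\cdot]\big]}{N_{1-\beta}\big[\down{\gamma}\,\mathfrak{R}_\alpha^{[h]}[\cdot]\big]}.
\]
Here the hypothesis $\phi_{1/\alpha,\theta}[f||h]<\infty$ is exactly the finiteness requirement \eqref{cond:inverse} (namely $N_{(\alpha-1)/\alpha}[w]<\infty$, since $N_{(\alpha-1)/\alpha}[w]=|\alpha|\,\phi_{1/\alpha,\theta}[f||h]^{\theta}$) that makes $\mathfrak{R}_\alpha^{-1,[h]}$ applicable to $w$, whereas the monotonicity of $f/h$ together with $\alpha>0$ guarantees that $\mathfrak{R}_\alpha^{[h]}[f]$ is decreasing so that $\down{\gamma}$ is admissible; jointly they render every transformation in $T$ well defined.

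Evaluating the previous display at $T[f]$ and at $f$, and feeding in the conjugation identity of the first step, I obtain
\[
C_{\lambda,\beta,\theta}^{(\phi_{\rm rel})}[T[f]||h]^{\theta}=\frac{N_{1-\lambda}[\mathfrak{E}_\delta w]}{N_{1-\beta}[\mathfrak{E}_\delta w]},\qquad C_{\lambda,\beta,\theta}^{(\phi_{\rm rel})}[f||h]^{\theta}=\frac{N_{1-\lambda}[w]}{N_{1-\beta}[w]}.
\]
Raising \eqref{eq:monot_Fisher} to the positive power $\alpha$ (so that the exponent $2-\gamma$ becomes $\theta$) reduces it, via these two equalities, to the assertion that the LMC--Rényi complexity with index pair $(1-\lambda,1-\beta)$ decreases under $\mathfrak{E}_\delta$ for $\delta\in(0,1)$ and increases for $\delta>1$, which is precisely the result of \cite{Puertas2019}. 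Taking the $\alpha$-th root restores \eqref{eq:monot_Fisher}; equality holds iff $\mathfrak{E}_\delta w$ is uniform, i.e. iff $f=h$.

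I expect the main obstacle to lie in two places: the careful bookkeeping of the first step, ensuring that the forward and inverse relative escorts cancel in the right order and that the central block is genuinely $\mathfrak{E}_\delta$ and not some $\mathfrak{E}_{c}$ carrying a spurious constant; and, in the last step, matching the index convention of \cite{Puertas2019} so that the ordering $\lambda>\beta$ (hence $1-\lambda<1-\beta$) combined with $\delta\in(0,1)$ delivers the stated direction of the monotonicity rather than its reverse.
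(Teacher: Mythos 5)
Your proof of the stated inequality is correct and is essentially the paper's own argument: the same identities \eqref{eq:EF} and \eqref{eq:relFI} convert the relative Fisher complexity into an LMC--R\'enyi-type ratio, and the same external monotonicity result of \cite{Puertas2019} closes the argument. The only organizational difference is that the paper channels the reduction through its relative complexity $C^{(D)}$ and Theorem~\ref{th:LMC}, while you conjugate the whole composed map down to the plain differential-escort $\mathfrak{E}_\delta$ and invoke \cite{Puertas2019} directly --- which is precisely what the proof of Theorem~\ref{th:LMC} does internally, so the two arguments coincide once that theorem is unfolded.

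One side claim in your last paragraph is wrong, although it does not affect the theorem (whose statement contains no equality clause): equality does \emph{not} characterize $f=h$. By the equality case of \cite{Puertas2019}, equality forces $w=\down{\gamma}\mathfrak{R}_\alpha^{[h]}[f]$ to be uniform, i.e. $\mathfrak{R}_\alpha^{[h]}[f]=\up{\gamma}[\mathcal U]=\rho_\gamma$, i.e. $f=\mathfrak{R}_\alpha^{-1,[h]}[\rho_\gamma]=\mathfrak{U}_{\gamma,\alpha}^{\mathfrak{R},[h]}[h]$, which is exactly the minimizer the paper identifies right after its proof. The choice $f=h$ is in fact excluded by the hypotheses: it would make $\mathfrak{R}_\alpha^{[h]}[f]$ uniform, so that $\down{\gamma}$ is not applicable (the down transformation requires a strictly decreasing density), the ratio $f/h$ would not be decreasing, and the measure in \eqref{eq:comp_Fisher} itself degenerates, the paper requiring $f\neq h$ on a set of positive measure in its definition.
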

\begin{proof}
The general idea of the proof is to connect the relative LMC-R\'enyi complexity measure introduced in Definition \eqref{def:complex} with the relative Fisher complexity measure. We recall from \eqref{eq:Renyi} that
$$
N_{\lambda}[f_{\alpha}^{[h]}]=K_{\xi(\lambda,\alpha)}^{\frac{1}{1-\lambda}}[f||h],
$$
which is equivalent (by taking the inverse of the relative differential-escort transformation) to
\begin{equation}\label{eq:interm10}
N_{\lambda}[f]=K_{\xi(\lambda,\alpha)}^{\frac{1}{1-\lambda}}[\mathfrak{R}_{\alpha}^{-1,[h]}[f]||h],
\end{equation}
where we recall that $\xi(\lambda,\alpha)$ is defined in \eqref{eq:xi(alpha)}. By combining then the identities \eqref{eq:interm10}, \eqref{eq:EF} and \eqref{eq:relFI}, we deduce that
\begin{equation*}
\begin{split}
K_{\xi(\lambda,\alpha)}^{\frac{1}{1-\lambda}}[\mathfrak{D}_{\gamma,\alpha}^{\mathfrak{R},[h]}[f]||h]&=
K_{\xi(\lambda,\alpha)}^{\frac{1}{1-\lambda}}[\mathfrak{R}_{\alpha}^{-1,[h]}\circ\mathfrak{D}_{\gamma}\circ\mathfrak{R}_{\alpha}^{[h]}[f]||h]\\
&=N_{\lambda}[\mathfrak{D}_{\gamma}\circ\mathfrak{R}_{\alpha}^{[h]}[f]]=\phi_{1-\lambda,2-\gamma}^{2-\gamma}[\mathfrak{R}_{\alpha}^{[h]}[f]]\\
&=\left[|\alpha|^{\frac{1}{2-\gamma}}\phi_{1-\lambda,\alpha(2-\gamma)}^{\alpha}[f||h]\right]^{2-\gamma}=|\alpha|\phi_{1-\lambda,\alpha(2-\gamma)}^{\alpha(2-\gamma)}[f||h].
\end{split}
\end{equation*}
We can thus go one step further and calculate the LMC-R\'enyi complexity measure applied to the composed transformation $\mathfrak{D}_{\gamma,\alpha}^{\mathfrak{R},[h]}[f]$. Indeed, we have
\begin{equation*}
\begin{split}
C_{\xi(\lambda,\alpha),\xi(\beta,\alpha)}^{(D)}[\mathfrak{D}_{\gamma,\alpha}^{\mathfrak{R},[h]}[f]||h]
&=\frac{K_{\xi(\lambda,\alpha)}^{\frac{1}{\alpha(1-\lambda)}}[\mathfrak{D}_{\gamma,\alpha}^{\mathfrak{R},[h]}[f]||h]}{K_{\xi(\beta,\alpha)}^{\frac{1}{\alpha(1-\beta)}}[\mathfrak{D}_{\gamma,\alpha}^{\mathfrak{R},[h]}[f]||h]}\\
&=\left[\frac{|\alpha|\phi_{1-\lambda,\alpha(2-\gamma)}^{\alpha(2-\gamma)}[f||h]}{|\alpha|\phi_{1-\beta,\alpha(2-\gamma)}^{\alpha(2-\gamma)}[f||h]}\right]^{\frac{1}{\alpha}}\\
&=C_{1-\lambda,1-\beta,\alpha(2-\gamma)}^{(\phi_{\rm rel})}[f||h]^{2-\gamma}.
\end{split}
\end{equation*}
The latter equality can be also written as
\begin{equation}\label{eq:interm11}
C_{\xi(\lambda,\alpha),\xi(\beta,\alpha)}^{(D)}[f||h]=\left(C_{1-\lambda,1-\beta,\alpha(2-\gamma)}^{(\phi_{\rm rel})}\left[\mathfrak{U}_{\gamma,\alpha}^{\mathfrak{R},[h]}[f]||h\right]\right)^{2-\gamma}.
\end{equation}
Letting next $\delta\in(0,1)$, we infer from the monotonicity property of the relative LMC-R\'enyi complexity measure \eqref{eq:monot2}, \eqref{eq:interm11} and the fact that the composed transformations defined in \eqref{eq:not_comp} are mutually inverse that
\begin{equation*}
		\begin{split}
			\left(C_{1-\lambda,1-\beta,\alpha(2-\gamma)}^{(\phi_{\rm rel})}\left[\mathfrak{U}_{\gamma,\alpha}^{\mathfrak{R},[h]}\circ\mathfrak{A}_{\delta}^{[h]}\circ \mathfrak{D}_{\gamma,\alpha}^{\mathfrak{R},[h]}[f]||h\right]\right)^{2-\gamma}
			&=C_{\xi(\lambda,\alpha),\xi(\beta,\alpha)}^{(D)}\left[\mathfrak{A}_{\delta}^{[h]}\circ \mathfrak{D}_{\gamma,\alpha}^{\mathfrak{R},[h]}[f]||h\right]\\
			&\leq C_{\xi(\lambda,\alpha),\xi(\beta,\alpha)}^{(D)}\left[ \mathfrak{D}_{\gamma,\alpha}^{\mathfrak{R},[h]}[f]||h\right]\\
			&=\left(C_{1-\lambda,1-\beta,\alpha(2-\gamma)}^{(\phi_{\rm rel})}[f||h]\right)^{2-\gamma}.
		\end{split}
\end{equation*}
We thus notice that the inequality \eqref{eq:monot_Fisher} follows from the previous inequality by letting $\lambda':=1-\lambda$, $\beta':=1-\beta$ (and droppìng the primes at the end). The proof is completed by observing that all the transformations employed in the previous steps are well defined. Indeed, one has
\begin{equation*}
\begin{split}
\frac{df_{\alpha}^{[h]}}{dy}&=\frac{\alpha f^{2\alpha-1}(x)}{h^{2\alpha}(x)}\left(\frac{f'(x)}{f(x)}-\frac{h'(x)}{h(x)}\right)\\
&=\frac{\alpha f^{2\alpha-1}(x)}{h^{2\alpha}(x)}\frac{d}{dx}\log\left(\frac{f(x)}{h(x)}\right)<0,
\end{split}
\end{equation*}
since $f/h$ is assumed to be decreasing in $\Omega$ and $\alpha>0$. Thus, the divergence-transformed density $f_{\alpha}^{[h]}$ is decreasing and the down transformation of it is well defined. Moreover, we deduce from \eqref{eq:EF} and \eqref{eq:relFI} that
\begin{equation*}
\begin{split}
N_{\frac{\alpha-1}{\alpha}}[\mathfrak{D}_{\gamma}\mathfrak{R}_{\alpha}^{[h]}[f]]
&=\phi_{\frac{1}{\alpha},2-\gamma}^{2-\gamma}[\mathfrak{R}_{\alpha}^{[h]}[f]]\\
&=\alpha\phi_{\frac{1}{\alpha},(2-\gamma)\alpha}^{(2-\gamma)\alpha}[f||h]<\infty,
\end{split}
\end{equation*}
as assumed in the statement of the theorem. The latter calculation allows thus to apply the inverse relative differential-escort transformation $\mathfrak{R}_{\alpha}^{-1,[h]}$ to the density $\mathfrak{D}_{\gamma}\mathfrak{R}_{\alpha}^{[h]}[f]$, a fact that has been used throughout the proof. The proof is now complete.
\end{proof}
\noindent
\textbf{The minimizers.} Taking into account Theorem~\ref{th:LMC}, the equality holds when  $ \mathfrak{D}_{\gamma,\alpha}^{\mathfrak{R},[h]}[f]=h,$ or equivalently, when $f= \mathfrak{U}_{\gamma,\alpha}^{\mathfrak{R},[h]}[h].$ More precisely
$$
f_{\rm min}=\mathfrak{R}_{\alpha}^{-1,[h]}\circ\mathfrak{U}_{\gamma}\circ\mathfrak{R}_{\alpha}^{[h]}[h]=\mathfrak{R}_{\alpha}^{-1,[h]}\circ\mathfrak{U}_{\gamma}\left[\mathcal U\right ]=\mathfrak{R}_{\alpha}^{-1,[h]}[\rho_\gamma],
$$
where $\mathcal U$ denotes the uniform density $\mathcal U(x)=1,\,x\in(0,1)$ and where, for $\gamma\in\mathbb R\setminus\{1,2\}$,  $\rho_\gamma$ is given by
$$
\rho_\gamma(u)=|\gamma-2|^\frac{1}{2-\gamma} \left(1+\frac{\sign(\gamma-2)(1-\gamma)}{|\gamma-2|^{\frac{\gamma-1}{\gamma-2}}}\,u\right)_+^{\frac1{1-\gamma}},
$$
for
$$
u\in\begin{cases}
	\left[0,|\gamma-2|^{\frac1{\gamma-2}}\frac{\gamma-2}{\gamma-1}\right],\quad &{\rm if}\, \gamma\in(-\infty,1)\cup(2,\infty),\\[2mm]
	[0,\infty),\quad&{\rm if}\,\gamma\in(1,2).
\end{cases}
$$
For $\gamma=2$ we obtain
$$
\rho_2(u)=\left(\frac{1}{e-u}\right)_+,\quad u\in[0,e-1]
$$
and for $\gamma=1$
$$
\rho_1(u)=e^{-u},\quad u\in[0,\infty).
$$
Note that the support of $\rho_\gamma$ is infinite for any $\gamma\in[1,2)$ and finite in the opposite case.

\medskip

\noindent \textbf{A complexity measure related to the relative cumulative moments.} We introduce next another measure of complexity, built on the relative cumulative moments introduced in \cite{IPT2025} and recalled in Definition \ref{def:relativeCM}. We thus define, for $\lambda,\beta\in\mathbb R\setminus\{0\}$, $\gamma\in\Rset\setminus\{2\}$ and $\alpha\in\Rset$ such that
\begin{equation}\label{cond:compl_mom}
\sign(\lambda-\beta)=\sign(\alpha)\sign(\gamma–2)
\end{equation}
and $f$, $h$ probability density functions satisfying \eqref{cond:support}, the following measure:
\begin{equation}\label{eq:comp_moment}
C_{\lambda,\beta,\gamma,\alpha}^{(\sigma_{\rm rel})}[f||h]:=\frac{\mu_{\lambda,\alpha}^{\frac{1}{(\gamma-2)\lambda}}[f||h]}{\mu_{\beta,\alpha}^{\frac{1}{(\gamma-2)\beta}}[f||h]}.
\end{equation}
The next result states the monotonicity property of the complexity measure introduced in \eqref{eq:comp_moment}.
\begin{theorem}\label{th:monot_moment}
Let $\lambda,\beta\in\mathbb R\setminus\{0\}$, $\gamma\in\Rset\setminus\{2\}$, $\alpha\in\Rset\setminus\{0\}$ such that the condition~\eqref{cond:compl_mom} is fulfilled, and let $f$, $h$ be two probability density functions satisfying \eqref{cond:support}. Then, for any $\delta\in(0,1)$, the following inequality holds true:
\begin{equation}\label{eq:monot_moment}
C_{\lambda,\beta,\gamma,\alpha}^{(\sigma_{\rm rel})}[\mathfrak{D}_{\gamma,\alpha}^{\mathfrak{R},[h]}\circ\mathfrak{A}_{\delta}^{(h)}\circ\mathfrak{U}_{\gamma,\alpha}^{\mathfrak{R},[h]}[f]||h]^{\frac{1}{\alpha}}
\leq C_{\lambda,\beta,\gamma,\alpha}^{(\sigma_{\rm rel})}[f||h]^{\frac{1}{\alpha}}.
\end{equation}
\end{theorem}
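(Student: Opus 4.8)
The plan is to follow the exact same strategy used in the proof of Theorem~\ref{th:monot_Fisher}, namely to translate the moment-based complexity measure $C_{\lambda,\beta,\gamma,\alpha}^{(\sigma_{\rm rel})}$ into the already-understood relative LMC-R\'enyi complexity measure $C^{(D)}$ via the up/down transformations, and then invoke the established monotonicity of $C^{(D)}$ under the divergence transformation (Theorem~\ref{th:LMC}). The key bridge here is the moment--entropy identity \eqref{eq:ME} from Lemma~\ref{lem:MEF}, which plays the role that the entropy--Fisher identity \eqref{eq:EF} played in the Fisher case. Since the cumulative moment $\sigma_{p,\alpha}$ appears, I expect the natural partner to be the $\mathfrak U_\gamma$ (up) transformation rather than the $\mathfrak D_\gamma$ (down) one, which explains why the roles of $\mathfrak U_{\gamma,\alpha}^{\mathfrak R,[h]}$ and $\mathfrak D_{\gamma,\alpha}^{\mathfrak R,[h]}$ are swapped in \eqref{eq:monot_moment} relative to \eqref{eq:monot_Fisher}.

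First I would establish the analogue of the central computation in the Fisher proof: starting from $N_\lambda[f] = K_{\xi(\lambda,\alpha)}^{1/(1-\lambda)}[\mathfrak R_\alpha^{-1,[h]}[f]\,||\,h]$ (which is \eqref{eq:interm10}), I would apply this to the density $\mathfrak U_\gamma \mathfrak R_\alpha^{[h]}[f]$ and use \eqref{eq:ME} together with the relative cumulative-moment analogue of \eqref{eq:relFI}. Concretely, I expect an identity of the shape
\begin{equation*}
K_{\xi(\lambda,\alpha)}^{\frac{1}{1-\lambda}}[\mathfrak U_{\gamma,\alpha}^{\mathfrak R,[h]}[f]\,||\,h]
= N_\lambda[\mathfrak U_\gamma \mathfrak R_\alpha^{[h]}[f]]
= \big(|2-\gamma|\,\sigma_{\frac{\lambda-1}{2-\gamma}}[\mathfrak R_\alpha^{[h]}[f]]\big)^{\frac{1}{\gamma-2}},
\end{equation*}
and then a scaling/relative identity for $\sigma_{p}$ under $\mathfrak R_\alpha^{[h]}$ would convert the right-hand side into $\sigma_{p,\alpha}[f||h]$ with the parameters matched so that $p(\gamma-2) = \lambda - \cdots$, reproducing the exponents $\frac{1}{(\gamma-2)\lambda}$ appearing in \eqref{eq:comp_moment}. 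Forming the ratio for index $\lambda$ over index $\beta$ then yields the clean relation $C_{\xi(\lambda,\alpha),\xi(\beta,\alpha)}^{(D)}[\mathfrak U_{\gamma,\alpha}^{\mathfrak R,[h]}[f]\,||\,h] = \big(C_{\lambda,\beta,\gamma,\alpha}^{(\sigma_{\rm rel})}[f||h]\big)^{1/\alpha}$, the exact moment-side counterpart of \eqref{eq:interm11}.

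With that dictionary in place, the monotonicity is essentially automatic: I would write $C_{\xi(\lambda,\alpha),\xi(\beta,\alpha)}^{(D)}[\mathfrak A_\delta^{(h)}\circ \mathfrak U_{\gamma,\alpha}^{\mathfrak R,[h]}[f]\,||\,h]\le C_{\xi(\lambda,\alpha),\xi(\beta,\alpha)}^{(D)}[\mathfrak U_{\gamma,\alpha}^{\mathfrak R,[h]}[f]\,||\,h]$ for $\delta\in(0,1)$ by Theorem~\ref{th:LMC}, then use that $\mathfrak D_{\gamma,\alpha}^{\mathfrak R,[h]}$ and $\mathfrak U_{\gamma,\alpha}^{\mathfrak R,[h]}$ are mutually inverse (Proposition~\ref{prop:inv} via \eqref{eq:not_comp}) to fold the outer $\mathfrak D_{\gamma,\alpha}^{\mathfrak R,[h]}$ and the argument $\mathfrak U_{\gamma,\alpha}^{\mathfrak R,[h]}$ around the inserted $\mathfrak A_\delta^{(h)}$ so that the dictionary identity applies at both ends, producing exactly \eqref{eq:monot_moment}. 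The main subtlety — and the step I would watch most carefully — is the bookkeeping of signs and monotonicity directions: the sign condition \eqref{cond:compl_mom}, together with the factor $\gamma-2$ in the exponents and the possibility $\alpha<0$, governs whether raising $\sigma_{p,\alpha}$ to the power $1/((\gamma-2)p)$ preserves or reverses inequalities, so I must verify that the ordering $\lambda>\beta$ required by the $C^{(D)}$ machinery is correctly induced from \eqref{cond:compl_mom} rather than assumed outright. A secondary technical point is checking that $\mathfrak U_\gamma \mathfrak R_\alpha^{[h]}[f]$ lands in the domain where the inverse relative differential-escort transformation applies (the finiteness condition \eqref{cond:inverse}), analogous to the finiteness check closing the Fisher proof; here I expect this to follow from \eqref{eq:ME} expressing the relevant R\'enyi entropy power as a finite relative cumulative moment of $f$ and $h$.
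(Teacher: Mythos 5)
Your proposal follows essentially the same route as the paper's proof: the same dictionary identity built from \eqref{eq:interm10}, \eqref{eq:ME} and the conversion of $\sigma_p[\mathfrak{R}_{\alpha}^{[h]}[f]]$ into the relative cumulative moment $\mu_{p,\alpha}[f||h]$, followed by the monotonicity of $C^{(D)}$ under $\mathfrak{A}_{\delta}^{(h)}$ (Theorem~\ref{th:LMC}) and the folding of the mutually inverse transformations \eqref{eq:not_comp}, ending with the same parameter relabeling $\lambda'=\frac{\lambda-1}{2-\gamma}$, $\beta'=\frac{\beta-1}{2-\gamma}$. The only difference is cosmetic (you apply the LMC monotonicity directly to $\mathfrak{U}_{\gamma,\alpha}^{\mathfrak{R},[h]}[f]$, whereas the paper first derives the intermediate inequality \eqref{eq:interm14} and then substitutes), so the argument is correct and matches the paper.
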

\begin{proof}
We follow a similar scheme as the one employed in the proof of Theorem \ref{th:monot_Fisher}. We infer from \eqref{eq:interm10} and \eqref{eq:ME} that
\begin{equation*}
\begin{split}
K_{\xi(\lambda,\alpha)}^{\frac{1}{1-\lambda}}[\mathfrak{R}_{\alpha}^{-1,[h]}\circ\mathfrak{U}_{\gamma}\circ\mathfrak{R}_{\alpha}^{[h]}[f]||h]
&=N_{\lambda}[\mathfrak{U}_{\gamma}\circ\mathfrak{R}_{\alpha}^{[h]}[f]]\\
&=\left(|2-\gamma|\sigma_{\frac{\lambda-1}{2-\gamma}}[\mathfrak{R}_{\alpha}^{[h]}[f]]\right)^{\frac{1}{\gamma-2}}\\
&=|2-\gamma|^{\frac{1}{\gamma-2}}\mu_{\frac{\lambda-1}{2-\gamma},\alpha}^{\frac{1}{1-\lambda}}[f||h].
\end{split}
\end{equation*}
Using this identity and \eqref{eq:complex}, let us pick $\delta\in(0,1)$ and continue by evaluating, on the one hand,
\begin{equation}\label{eq:interm13}
\begin{split}
C_{\xi(\lambda,\alpha),\xi(\beta,\alpha)}^{(D)}[\mathfrak{A}_{\delta}^{(h)}[f]||h]
&=C_{\xi(\lambda,\alpha),\xi(\beta,\alpha)}^{(D)}[\mathfrak{U}_{\gamma,\alpha}^{\mathfrak{R},[h]}\circ\mathfrak{D}_{\gamma,\alpha}^{\mathfrak{R},[h]}\circ\mathfrak{A}_{\delta}^{(h)}[f]||h]\\
&=\frac{K_{\xi(\lambda,\alpha)}^{\frac{1}{1-\xi(\lambda,\alpha)}}[\mathfrak{U}_{\gamma,\alpha}^{\mathfrak{R},[h]}\circ\mathfrak{D}_{\gamma,\alpha}^{\mathfrak{R},[h]}\circ\mathfrak{A}_{\delta}^{(h)}[f]||h]}
{K_{\xi(\beta,\alpha)}^{\frac{1}{1-\xi(\beta,\alpha)}}[\mathfrak{U}_{\gamma,\alpha}^{\mathfrak{R},[h]}\circ\mathfrak{D}_{\gamma,\alpha}^{\mathfrak{R},[h]}\circ\mathfrak{A}_{\delta}^{(h)}[f]||h]}\\
&=\left(\frac{\mu_{\frac{\lambda-1}{2-\gamma},\alpha}^{\frac{1}{1-\lambda}}
[\mathfrak{D}_{\gamma,\alpha}^{\mathfrak{R},[h]}\circ\mathfrak{A}_{\delta}^{(h)}[f]||h]}
{\mu_{\frac{\beta-1}{2-\gamma},\alpha}^{\frac{1}{1-\beta}}[\mathfrak{D}_{\gamma,\alpha}^{\mathfrak{R},[h]}\circ\mathfrak{A}_{\delta}^{(h)}[f]||h]}\right)^{\frac{1}{\alpha}}\\
&=C_{\frac{\lambda-1}{2-\gamma},\frac{\beta-1}{2-\gamma},\gamma,\alpha}^{(\sigma_{\rm rel})}[\mathfrak{D}_{\gamma,\alpha}^{\mathfrak{R},[h]}\circ\mathfrak{A}_{\delta}^{(h)}[f]||h]^{\frac{1}{\alpha}}.
\end{split}
\end{equation}
On the other hand, we have after a similar calculation (but easier than the one before) stemming again from \eqref{eq:complex} that
\begin{equation*}
C_{\xi(\lambda,\alpha),\xi(\beta,\alpha)}^{(D)}[\mathfrak{U}_{\gamma,\alpha}^{\mathfrak{R},[h]}[f]||h]=
C_{\frac{\lambda-1}{2-\gamma},\frac{\beta-1}{2-\gamma},\gamma,\alpha}^{(\sigma_{\rm rel})}[f||h]^{\frac{1}{\alpha}},
\end{equation*}
which implies
\begin{equation}\label{eq:interm12}
C_{\xi(\lambda,\alpha),\xi(\beta,\alpha)}^{(D)}[f||h]=C_{\frac{\lambda-1}{2-\gamma},\frac{\beta-1}{2-\gamma},\gamma,\alpha}^{(\sigma_{\rm rel})}[\mathfrak{D}_{\gamma,\alpha}^{\mathfrak{R},[h]}[f]||h]^{\frac{1}{\alpha}}.
\end{equation}
Since $\delta\in(0,1)$, we infer from the monotonicity inequality \eqref{eq:monot2} satisfied by the relative LMC-R\'enyi complexity measure, \eqref{eq:interm12} and \eqref{eq:interm13} that
\begin{equation}\label{eq:interm14}
C_{\frac{\lambda-1}{2-\gamma},\frac{\beta-1}{2-\gamma},\gamma,\alpha}^{(\sigma_{\rm rel})}[\mathfrak{D}_{\gamma,\alpha}^{\mathfrak{R},[h]}\circ\mathfrak{A}_{\delta}^{(h)}[f]||h]^{\frac{1}{\alpha}}\leq C_{\frac{\lambda-1}{2-\gamma},\frac{\beta-1}{2-\gamma},\gamma,\alpha}^{(\sigma_{\rm rel})}[\mathfrak{D}_{\gamma,\alpha}^{\mathfrak{R},[h]}[f]||h]^{\frac{1}{\alpha}}.
\end{equation}
Applying \eqref{eq:interm14} to a transformed density $\mathfrak{U}_{\gamma,\alpha}^{\mathfrak{R},[h]}[f]$ instead of $f$ and recalling that the composed transformations introduced in \eqref{eq:not_comp} are mutually inverse, we obtain that
\begin{equation}\label{eq:interm15}
C_{\frac{\lambda-1}{2-\gamma},\frac{\beta-1}{2-\gamma},\gamma,\alpha}^{(\sigma_{\rm rel})}[\mathfrak{D}_{\gamma,\alpha}^{\mathfrak{R},[h]}\circ\mathfrak{A}_{\delta}^{(h)}\circ\mathfrak{U}_{\gamma,\alpha}^{\mathfrak{R},[h]}[f]||h]^{\frac{1}{\alpha}}
\leq C_{\frac{\lambda-1}{2-\gamma},\frac{\beta-1}{2-\gamma},\gamma,\alpha}^{(\sigma_{\rm rel})}[f||h]^{\frac{1}{\alpha}}.
\end{equation}
The proof is now completed by adopting the notation
$$
\lambda':=\frac{\lambda-1}{2-\gamma}, \quad \beta':=\frac{\beta-1}{2-\gamma}
$$
and observing that, in this notation,
$$
\frac{1}{1-\lambda}=\frac{1}{\lambda'(2-\gamma)}, \quad \frac{1}{1-\beta}=\frac{1}{\beta'(2-\gamma)},
$$
and finally dropping the primes from the notation.
\end{proof}
\noindent
\textbf{The minimizers.} The inequality~\eqref{eq:monot_moment} is sharp but has no exact minimizers. This is because a minimizer, if existing, must satisfy
$f_{\rm min}= \mathfrak{D}_{\gamma,\alpha}^{\mathfrak{R},[h]}[h],$ or equivalently
$$
f_{\rm min}=\mathfrak{R}_{\alpha}^{-1,[h]}\circ\mathfrak{D}_{\gamma}\circ\mathfrak{R}_{\alpha}^{[h]}[h]=\mathfrak{R}_{\alpha}^{-1,[h]}\circ\mathfrak{D}_{\gamma}[\mathcal U],
$$
recalling that $\mathcal U$ is the uniform density. But the latter minimizer is not well-defined, since
the down transform only applies to decreasing functions. As analyzed in~\cite{IP2025}, the down transformation of a sequence of decreasing densities which tend to a uniform one, approximates a Dirac delta distribution. This approximation proves the sharpness of the inequality~\eqref{eq:monot_moment}.

\medskip

\noindent \textbf{Remark.} It is obvious that the composed transformations
$$
\mathfrak{D}_{\gamma,\alpha}^{\mathfrak{R},[h]}\circ\mathfrak{A}_{\delta}^{(h)}\circ\mathfrak{U}_{\gamma,\alpha}^{\mathfrak{R},[h]} \quad {\rm and} \quad \mathfrak{U}_{\gamma,\alpha}^{\mathfrak{R},[h]}\circ\mathfrak{A}_{\delta}^{(h)}\circ\mathfrak{D}_{\gamma,\alpha}^{\mathfrak{R},[h]}
$$
appearing in Theorems \ref{th:monot_Fisher} and \ref{th:monot_moment} have a group structure with respect to $\delta\neq0$, inherited from the similar group structure of the divergence transformation $\mathfrak{A}_{\delta}^{(h)}$ established in Proposition \ref{prop:group}.

\section{Divergence transformations of usual densities}\label{sec:examples}

We gather in this section the explicit calculation of the transformed densities of several usual probability density functions employed frequently in Information Theory, as examples of applications of the divergence transformation.

\subsection{The densities $\pdf(x)=(n+1)x^n$ and $\pdfr(x)=1$}

Let us first consider, as a toy example, the following pair of probability densities: $\pdf(x)=(n+1)x^n$ with $n>0$ and $\pdfr(x)=1$ with $x\in(0,1)$. We first compute
\begin{equation*}
	H(y)=\int_0^y\pdfr(s)ds=\int_0^y 1\,ds=y
\end{equation*}
In the next step, we calculate the function $K_\alpha[\pdfr||\pdf]$:
\begin{equation*}
	K_\alpha[\pdfr||\pdf]=\int_0^1(n+1)^{1-\alpha} t^{(1-\alpha)n}\cdot 1^\alpha dt =\frac{(n+1)^{1-\alpha}}{1+(1-\alpha)n}
\end{equation*}
provided that $(1-\alpha)n>-1$, which is equivalent to $\alpha<1+\frac{1}{n}$. We finally compute the incomplete integral $K(\alpha;x)$ as follows:
\begin{equation*}
	K(1-\alpha;x)=\int_0^x(n+1)^{1-\alpha} t^{(1-\alpha)n}\cdot 1^\alpha dt =\frac{(n+1)^{1-\alpha}}{1+(1-\alpha)n}\,x^{1+(1-\alpha)n}.
\end{equation*}
The change of variable in \eqref{eq:y(x)0} is given by
$$
y=x^{1+(1-\alpha)n}, \quad {\rm or, \ equivalently,} \quad x(y)=y^{\frac1{1+(1-\alpha)n}}.
$$
We insert the previous calculations into \eqref{eq:tilderho} to find
\begin{equation*}
	\pdf_{\alpha}^{(\pdfr)}(y)=K_\alpha[\pdfr||\pdf]\left(\frac{\pdf(x(y))}{1}\right)^\alpha=\frac{(n+1)^{1-\alpha}}{1+(1-\alpha)n}\left((n+1)x(y)^n\right)^\alpha
\end{equation*}
which gives, after simplifications,
\begin{equation*}
	\pdf_{\alpha}^{(\pdfr)}(y)=\frac{n+1}{1+(1-\alpha)n}\, y^{\frac{n\alpha}{1+(1-\alpha)n}}
\end{equation*}
Note that, in this simple calculation, the critical values defined in \eqref{eq:asup} are given by $\alpha_{c}^{-}=-\infty$ and $\alpha_c^{+}=1+\frac{1}{n}$.

\subsection{Exponential densities}

Let us next consider the following exponential densities: $\pdfr(x)=e^{-x}$ and $\pdf(x)=\gamma e^{-\gamma x}$ with $\gamma\in(0,\infty)\setminus\{1\}$ and $x\in(0,\infty)$. We proceed as in the previous section, starting from the calculation of
\begin{equation*}
	H(y)=\int_0^y e^{-t}dt=1-e^{-y}.
\end{equation*}
In a second step, we calculate the function $K_\alpha[\pdfr||\pdf]$, provided that $\alpha-\gamma(\alpha-1)>0:$
\begin{equation*}
K_\alpha[\pdfr||\pdf]=\int_0^\infty \gamma^{1-\alpha} e^{\gamma (\alpha-1)t} e^{-\alpha t}dt =\gamma^{1-\alpha} \int_0^\infty e^{-[(1-\alpha)\gamma+\alpha]t} \,dt =\frac{\gamma^{1-\alpha}}{(1-\alpha)\gamma+\alpha}
\end{equation*}
We next compute the incomplete integral $K(1-\alpha;x)$, which gives
\begin{eqnarray*}
	K(1-\alpha;x)&=&	\int_0^x \pdf(t)^{1-\alpha}\pdfr(t)^\alpha dt=\gamma^{1-\alpha}\int_0^x e^{-[(1-\alpha)\gamma+\alpha] t}\,dt
	\\
	&=&\frac{\gamma^{1-\alpha}}{(1-\alpha)\gamma+\alpha}\left(1-e^{-[(1-\alpha)\gamma+\alpha]x}\right).
\end{eqnarray*}
We next obtain from \eqref{eq:y(x)0} that
\begin{equation*}
	1-e^{-y}=1-e^{-[(1-\alpha)\gamma+\alpha]x}, \quad {\rm or, \ equivalently}, \quad x(y)=\frac{y}{(1-\alpha)\gamma+\alpha}.
\end{equation*}
We finally introduce in \eqref{eq:tilderho} the previously calculated functions to obtain
\begin{equation*}
 \Appr{\pdfr}{\alpha}(y)=K_{\alpha}[h||f]\left(\frac{\pdf(x(y))}{\pdfr(x(y))}\right)^\alpha\,\pdfr(y)=\frac{\gamma}{(1-\alpha)\gamma+\alpha} e^{-(\gamma-1)\alpha x(y)}e^{-y},
\end{equation*}
which gives, after simplifications,
\begin{equation*}
	\Appr{\pdfr}{\alpha}(y)=\frac{\gamma}{(1-\alpha)\gamma+\alpha} e^{-\frac{\gamma}{(1-\alpha)\gamma+\alpha}y}.
\end{equation*}
In this case, we observe that the critical values defined in \eqref{eq:asup} are $\alpha_c^{-}=-\infty$ and $\alpha_c^{+}=\frac{\gamma}{\gamma-1}$, if $\gamma>1$, respectively $\alpha_c^{-}=-\frac{\gamma}{1-\gamma}$ and $\alpha_c^{+}=\infty$, if $\gamma\in(0,1)$.

\subsection{Exponential and Gaussian}\label{sec:expGauss}

Let us consider next $\pdfr(x)=\frac2{\sqrt{\pi}}e^{-x^2}$ and $\pdf(x)=\gamma e^{-\gamma x}$, with $\gamma>0$ and defined for $x\in(0,\infty)$. We proceed as before by calculating the functions involved in the definition of the divergence transformation. We start with
\begin{equation*}
	H(y)=\int_0^y \frac2{\sqrt \pi}e^{-t^2}dt= \text{erf}(y).
\end{equation*}
We next calculate the function $K_\alpha[\pdfr||\pdf]$:
\begin{align*}
	K_\alpha[\pdfr||\pdf]&=\int_0^\infty \gamma^{1-\alpha} e^{-\gamma (1-\alpha)t} \left(\frac2{\sqrt \pi}e^{-t^2}\right)^{\alpha}\, dt =\frac{\gamma^{1-\alpha}2^{\alpha}}{\pi^{\frac{\alpha}{2}} } \int_0^\infty e^{-[\gamma(1-\alpha)t+\alpha t^2] } \,dt \\[0.5em]
	& = \frac{\gamma^{1-\alpha}2^{\alpha-1}}{\pi^{\frac{\alpha-1}{2}}\alpha^{\frac{1}{2}} }e^{\frac{\gamma^2}{4\alpha}(1-\alpha)^2}\text{erfc}\left(\frac{\gamma}{2\sqrt{\alpha}}(1-\alpha)\right).
\end{align*}
provided that $\alpha>0$. It remains to compute the incomplete integral $K(1-\alpha;x)$, as follows.
\begin{eqnarray*}
	K(1-\alpha;x)&=&	\int_0^x [\pdf(t)]^{1-\alpha}[\pdfr(t)]^\alpha dt= \int_{0}^{x} \gamma^{1-\alpha} e^{-\gamma (1-\alpha)t} \left(\frac2{\sqrt \pi}e^{-t^2}\right)^{\alpha}\,dt
	\nonumber \\[0.5em]
	&=&  \frac{\gamma^{1-\alpha}2^{\alpha-1}}{\pi^{\frac{\alpha-1}{2}}\alpha^{\frac{1}{2}} }e^{\frac{\gamma^2}{4\alpha}(1-\alpha)^2}\left[\text{erf}\left(\sqrt{\alpha}\, x + \frac{\gamma}{2\sqrt{\alpha} }(1-\alpha)  \right) - \text{erf}\left(\frac{\gamma}{2\sqrt{\alpha} }(1-\alpha) \right) \right].
\end{eqnarray*}
By substituting the previous calculations in~\eqref{eq:y(x)}, we find
\begin{equation*}
\begin{split}
\text{erf}(y)&=
\frac{\text{erf}\left(\sqrt{\alpha}x+\frac{\gamma}{2\sqrt{\alpha}(1-\alpha)}\right)-\text{erf}\left(\frac{\gamma}{2\sqrt{\alpha}(1-\alpha)}\right)}{\text{erfc}\left(\frac{\gamma}{2\sqrt{\alpha}(1-\alpha)}\right)}\\
&= 1 - \frac{\text{erfc}\left(\sqrt{\alpha}\, x + \frac{\gamma}{2\sqrt{\alpha}} (1-\alpha) \right)}{\text{erfc}\left(\frac{\gamma}{2\sqrt{\alpha}} (1-\alpha) \right)},
\end{split}
\end{equation*}
and, taking into account that ${\rm erf}(y)=1-{\rm erfc}(y)$, we further deduce that
\begin{equation*}
\text{erfc}\left(\sqrt{\alpha}\, x + \frac{\gamma}{2\sqrt{\alpha}} (1-\alpha) \right)=\mathcal C(\gamma,\alpha)\text{erfc}(y), \quad \mathcal C(\gamma,\alpha):=\text{erfc}\left(\frac{\gamma}{2\sqrt{\alpha}} (1-\alpha) \right).
\end{equation*}	
The latter equality implies
\begin{equation}\label{eq:interm17}
x(y)=\frac{{\rm erfc}^{-1}\left(\mathcal C(\gamma,\alpha)\, {\rm erfc}(y)\right)}{\sqrt \alpha} + \frac{\gamma}{2\alpha} (\alpha-1).
\end{equation}
We finally calculate the divergence transformation according to \eqref{eq:tilderho}, that is,
\begin{equation*}
	\begin{split}
		\Appr{\pdfr}{\alpha}(y)&=K_\alpha[\pdfr||\pdf]\left(\frac{\pdf(x(y))}{\pdfr(x(y))}\right)^\alpha\,\pdfr(y)\\
		&=\frac{\gamma}{\sqrt{\alpha}}e^{\frac{\gamma^2}{4\alpha}(1-\alpha)^2} \text{erfc}\left(\frac{\gamma}{2\sqrt{\alpha }}(1-\alpha) \right)e^{-\alpha \gamma x(y) + \alpha x^2(y)} e^{-y^2}.
	\end{split}
\end{equation*}
Introducing the notation
$$
\xi(y):={\rm erfc}^{-1}\left(\mathcal C(\gamma,\alpha)\,{\rm erfc}(y)\right),
$$
we infer from \eqref{eq:interm17} by direct calculation that
\begin{equation*}
\begin{split}
-\alpha\gamma x(y)+\alpha x^2(y)&=\xi(y)^2+\frac{\gamma(\alpha-1)\xi(y)}{\sqrt{\alpha}}+\frac{\gamma^2(\alpha-1)^2}{4\alpha}-\frac{\gamma\alpha\xi(y)}{\sqrt{\alpha}}-\frac{\gamma^2(\alpha-1)}{2}\\
&=\xi(y)^2-\frac{\gamma}{\sqrt{\alpha}}\xi(y)+\frac{\gamma^2(1-\alpha^2)}{4\alpha},
\end{split}
\end{equation*}
whence
\begin{equation*}
\Appr{\pdfr}{\alpha}(y)= \widetilde{\mathcal C}\exp\left(\xi(y)^2-\frac{\gamma}{\sqrt{\alpha}}\xi(y)\right)e^{-y^2},
\end{equation*}
with
$$
\widetilde{\mathcal C}:=\frac{\gamma}{\sqrt{\alpha}}\exp\left(\frac{\gamma^2}{2\alpha}(1-\alpha)\right)\text{erfc}\left(\frac{\gamma}{2\sqrt{\alpha }}(1-\alpha) \right).
$$
\noindent
The critical values defined in \eqref{eq:asup} are $\alpha_c^{-}=0$ and $\alpha_c^{+}=\infty$, since it is obvious that the improper integral defining $K_{\alpha}[h||f]$ diverges for any $\alpha<0$. Note that, in this case, there is no concentration to a Dirac distribution as $\alpha\to\alpha_c^{-}=0$, since $K_{0}[h||f]=1$.

\subsection{Two Gaussian densities}\label{subsec:twogauss}

Let us compute next the divergence transformation by taking as reference function the standard Gaussian density function $\pdfr(x)=\frac{1}{\sqrt \pi}e^{-x^2}$ and as function to be transformed a Gaussian with a different variance $\sigma>0$ and mean $\mu$
$$
\pdf(x)=\frac{1}{\sqrt{2\pi \sigma^2}}e^{-\frac{(x-\mu)^2}{2\sigma^2}}, \quad \sigma \neq \frac{1}{\sqrt{2}}\, ,
$$
with $x\in\Rset$. Following the same steps as in the previous sections, we have

\begin{equation*}
	H(y)=\int_{-\infty}^y \frac1{\sqrt \pi}e^{-t^2}\,dt= \frac{\text{erf}(y)+1}2.
	\end{equation*}
	and

	\begin{align*}
		K_\alpha[\pdfr||\pdf]&=\int_{-\infty}^\infty \left(\frac{1}{\sqrt{2\pi \sigma^2}}e^{-\frac{(t-\mu)^2}{2\sigma^2}} \right)^{1-\alpha} \left(\frac1{\sqrt \pi}e^{-t^2}\right)^{\alpha}\, dt \\
		& = \frac{2^{\frac{\alpha-1}{2}}}{\pi^{\frac{1}{2}}\sigma^{1-\alpha}  } e^{-\frac{\alpha(1-\alpha)\mu^2 }{1+\alpha(2\sigma^2-1) }} \int_{-\infty}^\infty e^{-\frac{1+\alpha(2\sigma^2-1) }{2\sigma^2 }\left[t - \frac{\mu(1-\alpha)}{1+\alpha(2\sigma^2-1)} \right]^2 } \,dt\\[0.5em]
		& = \frac{2^{\frac{\alpha}{2}}\sigma^{\alpha} }{\sqrt{1+\alpha(2\sigma^2-1) }  }e^{-\frac{\alpha(1-\alpha)\mu^2 }{1+\alpha(2\sigma^2-1) }},
	\end{align*}
provided that $1+\alpha(2\sigma^2-1)>0$ or, equivalently, $\alpha(1-2\sigma^2)<1$. We continue with the calculation of $K(1-\alpha;x)$:
\begin{eqnarray*}
		K(1-\alpha;x)&=& \int_{-\infty}^{x} \left(\frac{1}{\sqrt{2\pi \sigma^2}}e^{-\frac{(t-\mu)^2}{2\sigma^2}} \right)^{1-\alpha} \left(\frac1{\sqrt \pi}e^{-t^2}\right)^{\alpha}\,dt
		\nonumber \\
			& = &\frac{2^{\frac{\alpha-1}{2}}}{\pi^{\frac{1}{2}}\sigma^{1-\alpha}  } e^{-\frac{\alpha(1-\alpha)\mu^2 }{1+\alpha(2\sigma^2-1) }} \int_{-\infty}^x e^{-\frac{1+\alpha(2\sigma^2-1) }{2\sigma^2 }\left[t - \frac{\mu(1-\alpha)}{1+\alpha(2\sigma^2-1)} \right]^2 } \,dt\\[0.5em]
			\nonumber \\
			& = &\frac{2^{\frac{\alpha-1}{2}}}{\pi^{\frac{1}{2}}\sigma^{1-\alpha}  } e^{-\frac{\alpha(1-\alpha)\mu^2 }{1+\alpha(2\sigma^2-1) }} \frac{\sqrt{\pi} \sqrt{2}\sigma}{2\sqrt{1+\alpha(2\sigma^2-1)}}\\
			&\times&\left\{1+{\rm erf}\left[\frac{\sqrt{1+\alpha(2\sigma^2-1)}}{\sqrt 2\,\sigma}\left(x-\frac{\mu(1-\alpha)}{1+\alpha(2\sigma^2-1)}\right)\right]\right\}
			\nonumber \\
			& = &\frac{2^{\frac{\alpha-2}{2}}\sigma^{\alpha} e^{-\frac{\alpha(1-\alpha)\mu^2 }{1+\alpha(2\sigma^2-1) }}}{\sqrt{1+\alpha(2\sigma^2-1)}}\left(1+{\rm erf}(u(x))\right)
\end{eqnarray*}
where
$$
u(x):=\sqrt{\frac{1+\alpha(2\sigma^2-1)}{2\sigma^2}}\left(x - \frac{\mu(1-\alpha)}{1+\alpha(2\sigma^2-1) } \right).
$$
We can thus substitute the previous calculations in \eqref{eq:y(x)0} to get
\begin{equation*}
		\frac{\text{erf}(y)+1}2 = \frac{1}{2}[\text{erf}(u(x)) +1],
	\end{equation*}
	from where
\begin{equation*}
	y(x)=u(x)
\end{equation*}
or, equivalently,
\begin{equation}\label{eq:gaussxy}
	x(y) = \sqrt{\frac{2\sigma^2}{1+\alpha(2\sigma^2-1)}}\,y + \frac{\mu(1-\alpha)}{1+\alpha(2\sigma^2-1) }.
\end{equation}
In order to conclude, we insert \eqref{eq:gaussxy} into \eqref{eq:tilderho} and obtain
\begin{align*}
	&\Appr{\pdfr}{\alpha}(y)=K_\alpha[\pdfr||\pdf]\left(\frac{\pdf(x(y))}{\pdfr(x(y))}\right)^\alpha\,\pdfr(y) \\
	&= \frac{1}{\sqrt{\pi(1+\alpha(2\sigma^2-1))}}\exp\left\{-\frac{\alpha(1-\alpha)\mu^2}{1+\alpha(2\sigma^2-1)}\right\} \exp\left\{-\frac{\alpha}{2\sigma^2}(x(y)-\mu)^2+\alpha x^2(y)-y^2\right\} \\
& =\mathcal{K}(\mu,\sigma,\alpha)\exp\left\{-\frac{y^2}{1+\alpha(2\sigma^2-1)}+\frac{2\sqrt{2}\alpha\mu\sigma}{(1+\alpha(2\sigma^2-1))^{\frac{3}{2}}}y\right\},
\end{align*}
where
$$
\mathcal{K}(\mu,\sigma,\alpha):=\frac{1}{\sqrt{\pi(1+\alpha(2\sigma^2-1))}}\exp\left\{-\frac{2\mu^2\sigma^2\alpha^2}{(1+\alpha(2\sigma^2-1))^2}\right\}.
$$
We finally observe that, when $\alpha \to \alpha_c^{+}$ or $\alpha\to\alpha_c^{-}$, where
\begin{equation}\label{eq:acrit}
\alpha_c^{+}= \begin{cases}
	\frac{1}{1-2\sigma^2}, &  {\rm if } \,\, 2\sigma^2 < 1  \\
	\infty , &  {\rm if } \,\, 2\sigma^2 > 1 \, ,
\end{cases} \quad
\alpha_c^{-}=\begin{cases}
	-\infty, &  {\rm if } \,\, 2\sigma^2 < 1  \\
	-\frac{1}{2\sigma^2-1}, &  {\rm if } \,\, 2\sigma^2 > 1 \, ,
\end{cases}
\end{equation}
the transformed pdf $\Appr{\pdfr}{\alpha}(y)$ concentrates at $y=0$ in the form of a Dirac mass.

\subsection{Beta distributions}

In this final example we consider two Beta distributions
$$
f(x;\beta_1,\beta_2) = \frac{1}{B(\beta_1,\beta_2)}x^{\beta_1 -1}(1-x)^{\beta_2 -1}, \quad
h(x;\gamma_1,\gamma_2) = \frac{1}{B(\gamma_1,\gamma_2)}x^{\gamma_1 -1}(1-x)^{\gamma_2 -1},
$$
defined for $x\in(0,1)$ and parameters $\beta_1$, $\beta_2$, $\gamma_1$, $\gamma_2>0$, where
$$
B(z_1,z_2) = \int_{0}^{1} t^{z_1-1}(1-t_1)^{z_2-1}\, dt, \quad z_1, \ z_2>0,
$$
is the usual Beta function. We follow the same steps as in the previous sections, but, as we shall see, the calculations are much more involved. We start by computing
\begin{equation*}
	H(y)=\int_{0}^y \frac{1}{B(\gamma_1,\gamma_2)}t^{\gamma_1 -1}(1-t)^{\gamma_2 -1} \, dt = \frac{B(y;\gamma_1,\gamma_2)}{B(\gamma_1,\gamma_2)},
\end{equation*}
where
$$
B(z;\alpha_1,\alpha_2) = \int_{0}^{z} t^{\alpha_1-1}(1-t)^{\alpha_2-1}\,dt
$$
designs the incomplete Beta function. We next compute $K_\alpha[\pdfr||\pdf]$ as follows:
\begin{align*}
	K_\alpha[\pdfr||\pdf]&=\int_{0}^1 \left(  \frac{1}{B(\beta_1,\beta_2)}x^{\beta_1 -1}(1-x)^{\beta_2 -1} \right)^{1-\alpha} \left( \frac{1}{B(\gamma_1,\gamma_2)}x^{\gamma_1 -1}(1-x)^{\gamma_2 -1}  \right)^{\alpha}\, dx \\[0.5em]
	& =  \frac{1}{B(\beta_1,\beta_2)}\left[ \frac{ B(\beta_1,\beta_2)}{ B(\gamma_1,\gamma_2)}\right]^{\alpha} \int_{0}^{1} x^{(1-\alpha)(\beta_1 -1)+\alpha(\gamma_1 -1)}(1-x)^{(1-\alpha)(\beta_2-1)+\alpha(\gamma_2-1)}\, dx \\[0.5em]
	& = \frac{1}{B(\beta_1,\beta_2)}\left[ \frac{ B(\beta_1,\beta_2)}{ B(\gamma_1,\gamma_2)}\right]^{\alpha} \int_{0}^{1} x^{\alpha(\gamma_1 - \beta_1) + \beta_1 - 1} (1-x)^{\alpha(\gamma_2-\beta_2)+\beta_2 -1}\, dx \\[0.5em]
	& = \frac{1}{B(\beta_1,\beta_2)}\left[ \frac{ B(\beta_1,\beta_2)}{ B(\gamma_1,\gamma_2)}\right]^{\alpha}  \int_{0}^{1} x^{\delta_1(\alpha) - 1} (1-x)^{\delta_2(\alpha) -1}\, dx \\[0.5em]
	& = \frac{1}{B(\beta_1,\beta_2)}\left[ \frac{ B(\beta_1,\beta_2)}{ B(\gamma_1,\gamma_2)}\right]^{\alpha} B(\delta_1(\alpha),\delta_2(\alpha)) \, ,
\end{align*}
where we have adopted the notation
$$
\delta_1(\alpha):= \alpha(\gamma_1 - \beta_1) + \beta_1, \quad  \delta_2(\alpha):= \alpha(\gamma_2-\beta_2)+\beta_2.
$$
Note that the functions $\delta_1(\alpha)$ and $\delta_2(\alpha)$ provide a linear interpolation between $\beta_{1,2}$ and $\gamma_{1,2}$ when $\alpha\in[0,1]$. Taking into account that the Beta function is defined for positive parameters, we assume that $\delta_1(\alpha)>0$ and $\delta_2(\alpha)>0$, which gives
\begin{equation}\label{eq:interm16}
\alpha(\gamma_1 - \beta_1) + \beta_1> 0 \quad {\rm and} \quad \alpha(\gamma_2-\beta_2)+\beta_2>0.
\end{equation}
We continue with the calculation of $K(1-\alpha;x)$
\begin{align*}
	K(1-\alpha;x)&=	\int_{0}^x \left(  \frac{1}{B(\beta_1,\beta_2)}t^{\beta_1 -1}(1-t)^{\beta_2 -1} \right)^{1-\alpha} \left( \frac{1}{B(\gamma_1,\gamma_2)}t^{\gamma_1 -1}(1-t)^{\gamma_2 -1}  \right)^{\alpha}\, dt\\[0.5em]
	& = \frac{1}{B(\beta_1,\beta_2)}\left[ \frac{ B(\beta_1,\beta_2)}{ B(\gamma_1,\gamma_2)}\right]^{\alpha} \int_{0}^{x} t^{\delta_1(\alpha)-1}(1-t)^{\delta_2(\alpha)-1}\, dt \\[0.5em]
	& = \frac{1}{B(\beta_1,\beta_2)}\left[ \frac{ B(\beta_1,\beta_2)}{ B(\gamma_1,\gamma_2)}\right]^{\alpha}  B(x;\delta_1(\alpha),\delta_2(\alpha))\,.
\end{align*}
We substitute the outcome of the previous calculations in \eqref{eq:y(x)0} and obtain
$$
\frac{B(y;\gamma_1,\gamma_2)}{B(\gamma_1,\gamma_2)} = \frac{\frac{1}{B(\beta_1,\beta_2)}\left[ \frac{ B(\beta_1,\beta_2)}{ B(\gamma_1,\gamma_2)}\right]^{\alpha}  B(x;\delta_1(\alpha),\delta_2(\alpha)) }{\frac{1}{B(\beta_1,\beta_2)}\left[ \frac{ B(\beta_1,\beta_2)}{ B(\gamma_1,\gamma_2)}\right]^{\alpha} B(\delta_1(\alpha),\delta_2(\alpha)) } = \frac{B(x;\delta_1(\alpha),\delta_2(\alpha)) }{ B(\delta_1(\alpha),\delta_2(\alpha))},
$$
which gives
\begin{equation}\label{eq:betaxy}
	x(y) = B^{-1}\left(\frac{B(\delta_1(\alpha),\delta_2(\alpha)) }{B(\gamma_1,\gamma_2) }B(y;\gamma_1,\gamma_2); \delta_1(\alpha),\delta_2(\alpha)  \right).
\end{equation}
We end up the deduction of the transformed density by inserting \eqref{eq:betaxy} into \eqref{eq:tilderho} to find
\begin{equation*}
	\begin{split}
		\Appr{\pdfr}{\alpha}(y)&=K_\alpha[\pdfr||\pdf]\left(\frac{\pdf(x(y))}{\pdfr(x(y))}\right)^\alpha\,\pdfr(y)\\[0.5em]
		&= \frac{1}{B(\beta_1,\beta_2)}\left[ \frac{ B(\beta_1,\beta_2)}{ B(\gamma_1,\gamma_2)}\right]^{\alpha} B(\delta_1(\alpha),\delta_2(\alpha))\\
		&\times\left[\frac{\frac{1}{B(\beta_1,\beta_2)}x(y)^{\beta_1 -1}(1-x(y))^{\beta_2 -1} }{\frac{1}{B(\gamma_1,\gamma_2)}x(y)^{\gamma_1 -1}(1-x(y))^{\gamma_2 -1} } \right]^{\alpha} \frac{1}{B(\gamma_1,\gamma_2)}y^{\gamma_1 -1}(1-y)^{\gamma_2 -1}\\[0.5em]
		& =\frac{B(\delta_1(\alpha),\delta_2(\alpha)) }{B(\beta_1,\beta_2) B(\gamma_1,\gamma_2)}x(y)^{\alpha(\beta_1-\gamma_1)}(1-x(y))^{\alpha(\beta_2 - \gamma_2)}y^{\gamma_1 -1}(1-y)^{\gamma_2 -1},
	\end{split}
\end{equation*}
where $x(y)$ is given in \eqref{eq:betaxy}. We close this section by observing that the critical values of $\alpha$ defined in \eqref{eq:asup} are in this case derived from the conditions in \eqref{eq:interm16}, depending on the signs of $\gamma_1-\beta_1$ and $\gamma_2-\beta_2$.

\section{The transformation of a $N-$piecewise density}\label{sec:Npiece}

This section is devoted to the general computation of the divergence-transformed density of a $N$-piecewise probability density function. As we shall see afterwards in Section \ref{sec:numerical}, this calculation has interesting applications, providing a new approximation technique of the (arbitrary) reference density function $h$ by simple functions, according to the outcome of Theorem \ref{th:divergence}. Let us consider thus a fixed reference density function $h$ and a $N$-piecewise probability density function $f_N$, both defined on the same support $\Omega=[x_0,x_N]\subseteq \mathbb{R}$ and satisfying \eqref{cond:support}. We write $f_N$ as a sum of its constant parts:
\begin{equation}\label{eq:Npiecewise}
f_N(x) = \sum_{j=1}^N C_j  \,\chi_{\Delta_j}(x), \qquad \bigcup_{j=1}^N \Delta_j = \Omega, \quad \Delta_j \cap \Delta_k = \emptyset , \,\,\, \forall j \neq k.
\end{equation}
We next compute, for $\alpha\in\Rset$, the $\alpha$-transformed density of $f_N$ according to Eq. \eqref{eq:tilderho}:
\begin{align}\label{eq:fAtrans}
(f_N)^{(\mathfrak{A},h)}_{\alpha}(y) &= K_\alpha[\pdfr||f_N]\left[\frac{f_N(x(y))}{h(x(y))} \right]^{\alpha}h(y)  = K_\alpha[\pdfr||f_N]\left[\frac{ \sum_{j=1}^N C_j \,\chi_{\Delta_j}(x(y)) }{h(x(y))} \right]^{\alpha}h(y) \nonumber   \\
& =K_\alpha[\pdfr||f_N]
 \sum_{j=1}^{N} \left[\frac{C_j}{h(x(y))}\right]^\alpha \, h(y) \, \chi_{\Delta_j}(x(y)) \nonumber  \\
& =K_\alpha[\pdfr||f_N]
\sum_{j=1}^{N} \left[\frac{C_j}{h(x(y))}\right]^\alpha \, h(y) \, \chi_{\widetilde{\Delta}_j}(y)\, ,
\end{align}
where $\widetilde{\Delta}_j=y(\Delta_j),$ and
\begin{equation}
	K_\alpha[\pdfr||f_N]=	\left(  \sum_{j=1}^{N} C_j^{1-\alpha}\int_{\Delta_j} h^{\alpha}(t)\, dt \right).
\end{equation}
We recall that the change of variable is given by
\begin{equation*}
H(y) = \int_{x_i}^{y} h(s)\, ds=\frac{K(1-\alpha;x)}{K_\alpha[\pdfr||f_N]}\, ,
\end{equation*}
where, for any $m\leqslant N$ and $x\in\Delta_m:=[x_{m-1},x_m)$,
\begin{equation*}
\begin{split}
K(1-\alpha;x) & = \int_{x_0}^{x} h^{\alpha}(t)f_N^{1-\alpha}(t)\, dt\\
&= \int_{x_0}^{x_{m-1}} h^{\alpha}(t)f_N^{1-\alpha}(t)\, dt  + \int_{x_{m-1}}^{x} h^{\alpha}(t)f_N^{1-\alpha}(t)\, dt
\\
&= \sum_{i=1}^{m-1} C_{i}^{1-\alpha}\int_{\Delta_{i}} h^{\alpha}(t)\, dt  + C_{m}^{1-\alpha} \int_{x_{m-1}}^{x} h^{\alpha}(t)\, dt.
\end{split}
\end{equation*}
We then continue the previous calculations to obtain that
	\begin{align*}
	H(y) &= \frac{\mathcal J_\alpha^{(m)}[h||f_N]+ C_{m}^{1-\alpha} \int_{x_{m-1}}^{x} h^{\alpha}(t)\, dt}{K_\alpha[h|| f_N]},
\end{align*}
with
\begin{equation*}
\mathcal J_\alpha^{(m)}[h||f_N]=\sum_{i=1}^{m-1} C_{i}^{1-\alpha}\int_{\Delta_{i}} h^{\alpha}(t)\, dt
\end{equation*}
for any $m\leq N$ and $x\in\Delta_m=[x_{m-1},x_m]$. Consequently,
\begin{align*}
	\int_{x_{m-1}}^{x} h^{\alpha}(t)\, dt & =C_m^{\alpha-1}\left(\mathcal K_\alpha[h||f_N]\,	H(y) - \mathcal J_\alpha^{(m)}[h||f_N]\right),
\end{align*}
and the latter equality allows to undo the change of variable and calculate the inverse $x=x(y)$.

We next compute the Kullback-Leibler divergence of the $\alpha$-transformed density of $f_N$ and $h$ as an application of Eq.~\eqref{eqlemma:appr_diver} for $\xi=1$:
\begin{align*}
D\left[(f_N)^{(\mathfrak{A},h)}_{\alpha}|| h\right]
& = \alpha D[f_N || h]\, +\log(K_\alpha[\pdfr||f_N]).
\end{align*}
We therefore have to calculate the Kullback-Leibler divergence of $f_N$ and $h$, which gives
\begin{align*}
		D\left[f_{N}|| h\right] &= \int_{\Omega} f_{N}(x) \log\left[\frac{f_{N}(x)}{h(x)}\right] \, dx  =\sum_{i=1}^N \int_{\Delta_i} f_{N}(x) \log\left[\frac{f_{N}(x)}{h(x)}\right] \, dx\\
		& =  \sum_{i=1}^N C_i  \int_{\Delta_i} \log\left[\frac{C_i }{h(x) }  \right]\, dx\,.
\end{align*}
We thus deduce that
\begin{equation}\label{eq:DKLfalpha}
D\left[(f_N)^{(\mathfrak{A},h)}_{\alpha}|| h\right] = \alpha \sum_{i=1}^N C_i  \int_{\Delta_i} \log\left[\frac{C_i }{h(x) }  \right]\, dx+ \log\left[\sum_{i=1}^{N} C_i^{1-\alpha}\int_{\Delta_i} h^{\alpha}(x) \, dx \right]\,.
\end{equation}
Note that in the case $\alpha=0$ ones obtains $D[(f_N)^{(\mathfrak{A},h)}_{\alpha}||h]=0,$ as expected.

\section{Numerical examples: approaching and separating}\label{sec:numerical}

As an application of the theoretical examples calculated in the previous two sections, we illustrate in this section by means of numerical experiments how a pdf either approach or separate from certain reference pdf as the transformation parameter varies. In the first experiment we show how a transformed piecewise function, $f^{(\mathfrak{A},h)}_{N,\alpha}(y)$, continuously approach a reference pdf, $h(x)$, as the parameter of the transformation $\alpha$ decreases to $0$. For this matter, we choose as reference pdf the truncated normal distribution,
\[
h(x;\mu,\sigma,a,b) = \begin{cases} \frac{1}{\sigma}\frac{\varphi\left(\frac{x-\mu}{\sigma}\right) }{\Phi\left(\frac{b - \mu }{\sigma }  \right) - \Phi\left(\frac{a - \mu }{\sigma }  \right)}, & [a,b] = [-5,5] \\
	0, & \text{otherwise}\,
\end{cases}
\]
for $\mu = 0$ and $\sigma = 1$, where
\[
\varphi(x) = \frac{1}{\sqrt{2\pi}}e^{-\frac{x^2}{2}}, \quad \Phi(x) = \frac{1}{2}\left[1+\text{erf}\left(\frac{x}{\sqrt{2}}\right) \right]\, .
\]
The piecewise function to which we apply the transformation is given by
\[
f_4(x) =  \sum_{j=1}^{4} C_j \, \chi_{\Delta_j},
\]
with $\{C_j\}_{j=1}^{4} = \{\frac{1}{3},\frac{1}{25},\frac{1}{50},\frac{344}{4275} \}$, $\Delta_1 = \left[-5,-\frac{63}{20}\right]$, $\Delta_2 = \left[-\frac{63}{20}, -\frac{3}{4}\right]$, $\Delta_3 = \left[-\frac{3}{4}, \frac{43}{20}\right]$, $\Delta_4 = \left[\frac{43}{20}, 5\right]$. As stated in Eq.~\eqref{eq:fAtrans}, the transformed piecewise function reads
\[
(f_4)^{(\mathfrak{A},h)}_{\alpha}(y) = K_\alpha[h||f_4]\sum_{j=1}^{4} \left[\frac{C_j}{h(x(y))}\right]^\alpha \, h(y,0,1,-5,5) \, \chi_{\tilde{\Delta}_j}(y)\, .
\]
We show in Figures~\eqref{fig:experiment1_fig_1}-\eqref{fig:experiment1_fig_8} how the transformed piecewise pdf approaches $h(x)$ in a smooth way as $\alpha \to 0$. In this process, one can observe that the divergence transformation applied to the piecewise function modifies both its shape and the partition of its domain, in order to approach $h(x)$, while preserving the probability in each interval.

\begin{figure}[H]
	\centering
	\medskip
	
	\begin{subfigure}[t]{.3\linewidth}
		\centering\includegraphics[width=.5\linewidth]{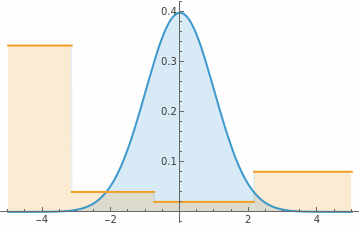}
		\caption{$\alpha =1$}
			\label{fig:experiment1_fig_1}
	\end{subfigure}
	\begin{subfigure}[t]{.3\linewidth}
		\centering\includegraphics[width=.5\linewidth]{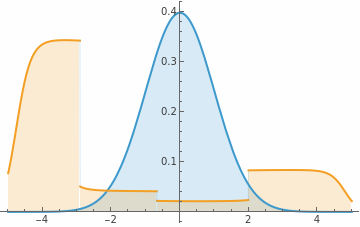}
		\caption{$\alpha =\frac{9}{10}$}
		\label{fig:experiment1_fig_2}
	\end{subfigure}
	\begin{subfigure}[t]{.3\linewidth}
		\centering\includegraphics[width=.5\linewidth]{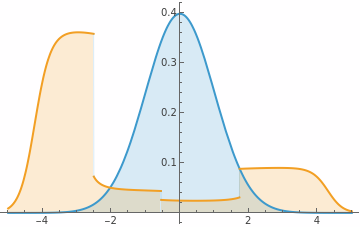}
		\caption{$\alpha =\frac{3}{4}$}
		\label{fig:experiment1_fig_3}
	\end{subfigure}
	
	\bigskip\hrulefill\bigskip
	
	\medskip
	
	\begin{subfigure}{.3\linewidth}
		\centering\includegraphics[width=.5\linewidth]{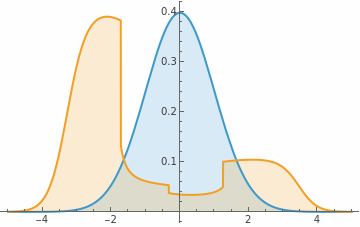}
		\caption{$\alpha =\frac{1}{2}$}
		\label{fig:experiment1_fig_4}
	\end{subfigure}
	\begin{subfigure}{.3\linewidth}
		\centering\includegraphics[width=.5\linewidth]{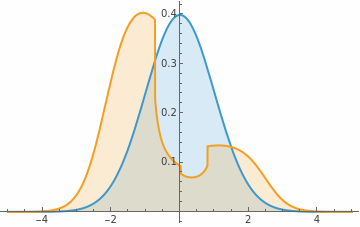}
		\caption{$\alpha =\frac{1}{4}$}
		\label{fig:experiment1_fig_5}
	\end{subfigure}
	\begin{subfigure}{.3\linewidth}
		\centering\includegraphics[width=.5\linewidth]{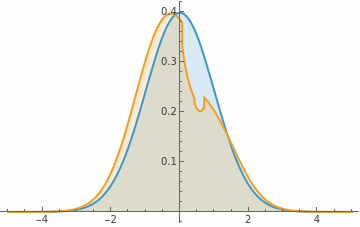}
		\caption{$\alpha =\frac{1}{16}$}
		\label{fig:experiment1_fig_6}
	\end{subfigure}
	
	\bigskip\hrulefill\bigskip
	
	\medskip
	
	\begin{subfigure}[b]{.3\linewidth}
		\centering\includegraphics[width=.5\linewidth]{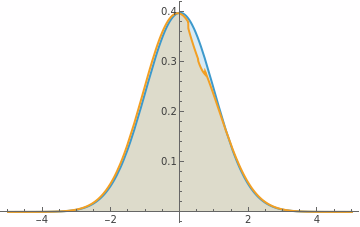}
		\caption{$\alpha =\frac{1}{64}$}
		\label{fig:experiment1_fig_7}
	\end{subfigure}
	\begin{subfigure}[b]{.3\linewidth}
		\centering\includegraphics[width=.5\linewidth]{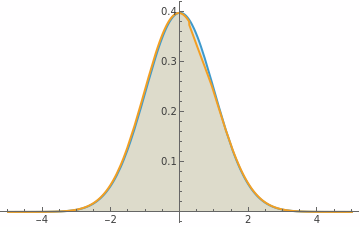}
		\caption{$\alpha =\frac{1}{128}$}
		\label{fig:experiment1_fig_8}
	\end{subfigure}
	\label{fig:experiment1}
	\caption{Sequence of approach of the transformed $4$-piecewise pdf $(f_4)^{(\mathfrak{A},h)}_{\alpha}(y)$ (orange) to the reference pdf $h(x)$ (blue) as $\alpha\to 0$.}
\end{figure}

Note that, in the previous experiment, the probability $p_1$ of the first region of the support $\Delta_1$ of $f_4$ is approximately $p_1\simeq 0.61>1/2$. In addition, in the limit $\alpha=0$, Eq.~\eqref{eq:y(x)} gives $H(y)=F(x)$, where $F(x)$ is the cumulative function of the density $f_4$. It follows that, as $\alpha\to0$, the transformed region $y(\Delta_1)$ is necessarily the one whose probability is also $p_1$ for the proper reference function $h$ (since the probability is preserved for any $\alpha\in\mathbb R$), which also implies a transfer of mass during the transformation, as the final density is symmetric. In addition, regarding the regions $\Delta_2$ and $\Delta_3$ of the initial density $f_4$ (which have a lower mass than $\Delta_1$), we may observe that the length of the transformed support when $\alpha\to0$ decreases according to the corresponding probability in the support of $h$.

In the next example, we employ the same transformation to increase the dissimilarity between the transformed pdf and a reference one in order to achieve a better discrimination between them. This aspect is crucial in those
processes in which the goal is to distinguish a singular event from a set of events labeled as \textit{normal}. For example, this is essential in the methodologies designed to detect one or more epileptic seizures from the temporal signal associated with an electroencephalogram which, due to its own nature, tends to be very noisy \cite{Squicciarini2024a}. In order to illustrate this fact, we consider the reference pdf,
\[
h(x) =  \frac{1}{\sqrt{2\pi}}e^{-\frac{x^2}{2}},
\]
and the so-called $\epsilon$-skew-normal distribution~\cite{Mudholkar2000} defined as
\[
f_{\epsilon}(x) = N_{\epsilon} \begin{cases}
			\frac{1}{\sqrt{2\pi }}e^{-\frac{x^2}{2(1+\epsilon)^2} },& x< 0 \\
			\frac{1}{\sqrt{2\pi }}e^{-\frac{x^2}{2(1-\epsilon)^2} },& x \geq 0 \, ,\\			
			\end{cases}
\]
where $N_{\epsilon} = \frac{2}{\epsilon + 1 + |\epsilon - 1|}$ denotes the normalization constant, with $\epsilon>0$. In this experiment we choose $\epsilon = \frac{1}{5}$.

\begin{figure}[H]
	\centering
	\medskip
	
	\begin{subfigure}[t]{.3\linewidth}
		\centering\includegraphics[width=.5\linewidth]{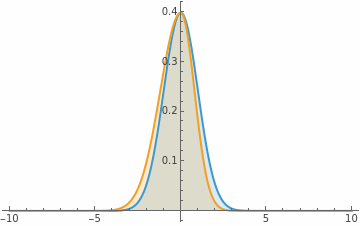}
		\caption{$\alpha =1$}
			\label{fig:experiment2_fig_1}
	\end{subfigure}
	\begin{subfigure}[t]{.3\linewidth}
		\centering\includegraphics[width=.5\linewidth]{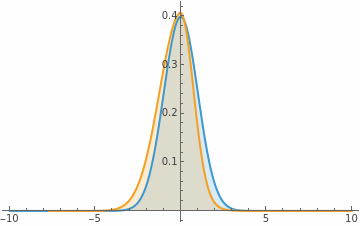}
		\caption{$\alpha =1.25$}
		\label{fig:experiment2_fig_2}
	\end{subfigure}
	\begin{subfigure}[t]{.3\linewidth}
		\centering\includegraphics[width=.5\linewidth]{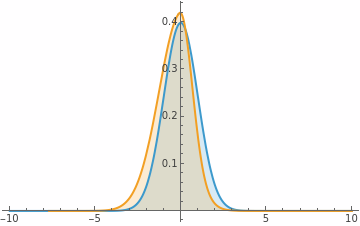}
		\caption{$\alpha =1.5$}
		\label{fig:experiment2_fig_3}
	\end{subfigure}
	
	\bigskip\hrulefill\bigskip
	
	\medskip
	
	\begin{subfigure}{.3\linewidth}
		\centering\includegraphics[width=.5\linewidth]{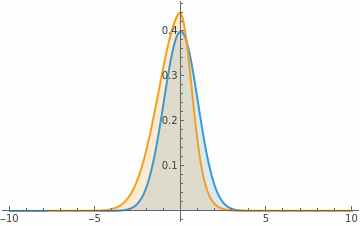}
		\caption{$\alpha =1.75$}
		\label{fig:experiment2_fig_4}
	\end{subfigure}
	\begin{subfigure}{.3\linewidth}
		\centering\includegraphics[width=.5\linewidth]{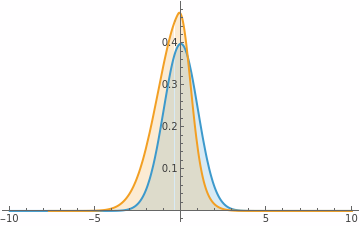}
		\caption{$\alpha =2$}
		\label{fig:experiment2_fig_5}
	\end{subfigure}
	\begin{subfigure}{.3\linewidth}
		\centering\includegraphics[width=.5\linewidth]{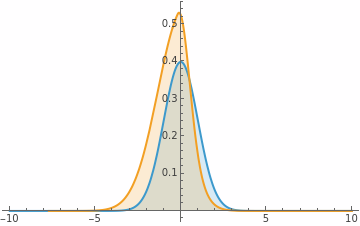}
		\caption{$\alpha =2.25$}
		\label{fig:experiment2_fig_6}
	\end{subfigure}
	
	\bigskip\hrulefill\bigskip
	
	\medskip
	
	\begin{subfigure}[b]{.3\linewidth}
		\centering\includegraphics[width=.5\linewidth]{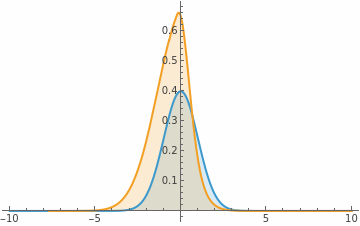}
		\caption{$\alpha =2.5$}
		\label{fig:experiment2_fig_7}
	\end{subfigure}
	\begin{subfigure}[b]{.3\linewidth}
		\centering\includegraphics[width=.5\linewidth]{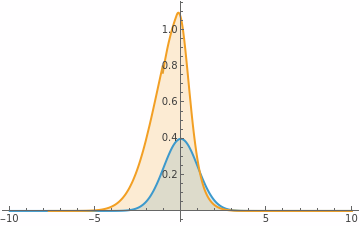}
		\caption{$\alpha =2.7$}
		\label{fig:experiment2_fig_8}
	\end{subfigure}
	\label{fig:experiment2}
	\caption{Sequence of separation of the transformed pdf $(f_{\epsilon})^{(\mathfrak{A},h)}_{\alpha}(y)$ with respect to the reference pdf $h(x)$ as $\alpha\to \alpha_c^{+}$.}
\end{figure}

From Figures \eqref{fig:experiment2_fig_1}-\eqref{fig:experiment2_fig_8}, corresponding to the second experiment, we observe that, as $\alpha$ increases, the mass of the transformed pdf $(f_{\epsilon})^{(\mathfrak{A},h)}_{\alpha}(y)$ seems to move to the region $y\in (-\infty,0)$. Similarly to what we have seen in the former experiment, a mass transfer from the half-line with bigger mass of the initial density $f_{\epsilon}$ (that is, the region $x\in(-\infty,0)$) to the complementary half-line occurs as $\alpha\to0$ since the reference function $h$ is symmetric. Recalling the group structure of the divergence transformation (and in particular Eq. \eqref{eq:inverse_appr}), we deduce that, in the opposite direction (that is, $\alpha>1$) the contrary phenomena is expected to occur. This is seen in Figures \eqref{fig:experiment2_fig_1}-\eqref{fig:experiment2_fig_8}, since the original assymmetry is increased, generating an increment of the mass in the negative half-line. Moreover, it also appears that the maximum of the transformed density slowly moves to the left.

Another remarkable fact is the increase in slope in each region. This pattern seen in the figures is explained by similar calculations as the ones performed in Section~\ref{subsec:twogauss}. Note that the variance of the Gaussian considered for $x>0$ in the definition of the original pdf $f_{\epsilon}$ is $1-\epsilon<1$, while the variance of the Gaussian considered for $x<0$ in the definition of $f_{\epsilon}$ is $1+\epsilon>1$. Thus, taking separately the two parts, the critical value $\alpha_c^+$ corresponding to the variance $1-\epsilon$ is finite, while the critical value $\alpha_c^+$ corresponding to the variance $1+\epsilon$ is $+\infty$, according to a similar calculation as the one in Eq.~\eqref{eq:acrit}. Thus, when $\alpha$ increases, it will approach at some point the finite value of $\alpha_c^{+}$ corresponding to the variance $1-\epsilon$, which applies to the part of the transformed density arriving from the part of the original density $f_{\epsilon}$ lying in the positive half-line. Consequently, when $\alpha\to\alpha_c^+< \infty$, the part of the total mass of the original density $f_{\epsilon}$ lying in the region $x\geq 0$ concentrates to a Dirac distribution, while in the meantime the part of the total mass corresponding to complementary region is not yet concentrated, since for the variance of the Gaussian applied in the region $x<0$, $\alpha_c^{+} = \infty$.

We inspect the transformed pdf for $\alpha=2.25$ in a closer form in the next figure by plotting separately the deformed densities $(f_\epsilon)_\alpha^{(\mathfrak A, h)}(y(x))$ corresponding to the restrictions of the original density $f_{\epsilon}$ to the half-lines $x<0$, respectively $x\geqslant0$.
\begin{figure}[H]
	\centering
	\medskip
	
	\begin{subfigure}[t]{.45\linewidth}
		\centering\includegraphics[width=.5\linewidth]{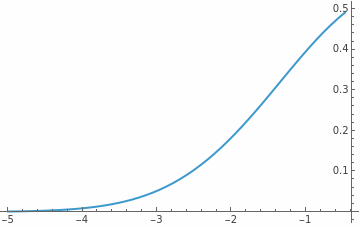}
		\caption{$(f_\epsilon)_\alpha^{(\mathfrak A, h)}(y)$ for $y\in(-5,y(0))$}
		\label{fig:experiment3_fig_31}
	\end{subfigure}
	\begin{subfigure}[t]{.45\linewidth}
		\centering\includegraphics[width=.5\linewidth]{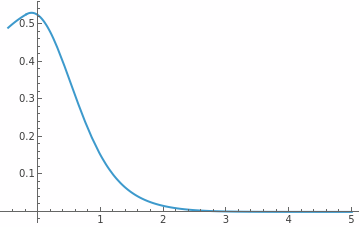}
		\caption{$(f_\epsilon)_\alpha^{(\mathfrak A, h)}(y)$ for $y\in(y(0),5)$}
		\label{fig:experiment3_fig_32}
	\end{subfigure}
\end{figure}
We observe that the transformed pdf has a single maximum point but, interestingly, this maximum is achieved neither at $y(0)$, nor at $y=0$, but at some intermediate point between these two values. This is due to the fact that, as previously discussed, the region with $y>y(0)$ tends to a Dirac distribution concentrated at $y(0)$ as $\alpha\to\alpha_c^+$, but in the meantime the function $(f_\epsilon)_\alpha^{(\mathfrak A, h)}$ preserves its continuity at $y(0)$, thus the maximum point moved from $x=0$ to the negative side, but did not reach yet the limit value $y(0)$.

\section{Conclusions}\label{sec:conclusions}

In this work we have introduced a family of transformations allowing to interpolate between any pair of probability density functions sharing a common support and being strictly positive on that support, in such a way the family of Kullback-Leibler and Rényi divergences changes in a monotone way. We have then proved that this family of transformations, which we have named \textit{divergence transformations}, inherits a group structure from the fact that they are the algebraic conjugation of two transformations recently introduced and related to the Sundman transformations of power type: the differential-escort transformation and its relative counterpart. We show that in some limit cases of the parameter the transformations are no longer well-defined and the transformed pdf tends towards a Dirac distribution. In addition, we also propose the definition of some complexity measures of LMC-type involving divergences and some (recently defined) relative measures of generalized moment and Fisher types. We establish the monotonicity properties of these statistical measures with respect to the aforementioned family of transformations or suitable algebraic conjugations of them with the recently introduced up and down transformations.

Several examples of interest from both theoretical and numerical perspectives are given, involving the exponential, Gaussian, asymmetric Gaussian, Beta and $N$-piecewise functions. In the latter case, we show how any simple function can be arbitrarily approximated to a reference function $h$ by applying the divergence transformation. In contrast, the example with the Gaussian and a particular case of an asymmetric Gaussian shows how the same transformation can be used to separate pdfs that are arbitrarily close to each other. Interesting phenomena are discovered and described alongside these processes. We highlight again that, in both cases, the whole family of Rényi divergences and their complexity measure counterpart vary in a monotone way with respect to the transformation parameter for any pair of probability densties.

We believe that many future applications in information theory and other fields of science and engineering could be derived by applying the divergence transformations with respect to suitably chosen reference functions $h$, according to each case of interest. Further applications can be discovered by employing the relative versions of the complexity measures established in this paper in order to gain insight into the structure of diverse systems. Thus, we consider that this family of divergence transformations could be a strong tool for further investigation.

\subsection*{Acknowledgements}

R. G. I. is partially supported by the project PID2024-160967NB-I00 (AEI) funded by the Ministry of Science, Innovation and Universities of Spain. D. P.-C. and E. V. T. are partially supported by the project PID2023-153035NB-100 (AEI) funded by the Ministry of Science, Innovation and Universities of Spain and “ERDF/EU A way of making Europe”. D. P.-C. also wants to thank to the Professor A. Zarzo by inspiring discussions at the first stages of this research work.
\bigskip

\noindent \textbf{Data availability} Our manuscript has no associated data.

\bigskip

\noindent \textbf{Competing interest} The authors declare that there is no competing interest.

\bibliographystyle{unsrt}
\bibliography{refs}
\end{document}